\documentclass{elsarticle}

\usepackage[british]{babel}
\usepackage[utf8]{inputenc}
\usepackage[T1]{fontenc}

\usepackage{amsmath}
\usepackage{amsfonts}
\usepackage{amssymb}
\usepackage{stmaryrd}

\usepackage{url}
\usepackage{hyperref}
\usepackage{array}
\usepackage{tikz}
\usetikzlibrary{snakes,arrows,shapes}
\usetikzlibrary{matrix}
\usetikzlibrary{backgrounds}

\usepackage{subfig}
\usepackage{multirow}

\usepackage{enumitem}

\usepackage{comment}

\newdefinition{definition}{Definition}
\newdefinition{condition}{Condition}
\newtheorem{theorem}{Theorem}
\newtheorem{corollary}{Corollary}
\newtheorem{lemma}{Lemma}
\newtheorem{property}{Property}

\newproof{example}{Example}
\newproof{remark}{Remark}
\newproof{rawproof}{Proof}

\newenvironment{proof}{\begin{rawproof}}{\hfill$\Box$\end{rawproof}}



\def\DEF{\stackrel{\Delta}=}
\def\EQDEF{\stackrel{\Delta}\Leftrightarrow}

\def\powerset#1{2^{#1}}

\def\f#1{\operatorname{\bf #1}}

\def\mv#1{m_{#1}}
\def\range#1#2{\{#1,\ldots,#2\}}
\def\domv#1{\range 0 {m_{#1}}}

\def\subst#1#2#3{ {#1}_{[\scriptstyle #2 \mapsto #3]}}
\def\update#1#2#3#4{ {#1}_{[\scriptstyle #2 \,#3=\, #4]}}

\def\P{\mathcal P}
\def\PG{\mathbb P(G_m)}

\def\ub#1{\lceil #1 \rceil}

\def\lb#1{\lfloor #1 \rfloor} 

\def\Pub{U}
\def\Plb{L}

\def\card#1{|#1|}
\def\innodes#1{n^-(#1)}
\def\tool{\emph{Pawn}}
\def\sputnik{\texttt{SPuTNIk}}
\def\emptylattice{\varnothing}

\def\fpite#1#2{\operatorname{itefix}_{#1} #2}

\def\restrict{\nabla}
\def\o{\mathrm{o}}

\def\pabs{p^\#}

\def\pre#1{ {}^\bullet #1}
\def\post#1{ #1 {}^\bullet }
\def\cfg{\mathcal E}
\def\conflict{\#}
\def\mincfg#1{\lfloor #1 \rfloor}

\def\toset#1{\widetilde{#1}}

\def\cfglt{\lessdot}
\def\palt{<}

\def\cut#1{\f{cut}(#1)}
\def\cfp{\mathcal N}
\def\coff{\mathit{cutoffs}}

\begin{document}
\begin{frontmatter}

\title{Parameter Space Abstraction and Unfolding Semantics of Discrete Regulatory Networks}

\author[LSV,FI]{Juraj Kol\v{c}\'ak}
\author[FI]{David \v{S}afr\'anek}
\author[LSV]{Stefan Haar}
\author[LRI]{Lo\"ic Paulev\'e}
\address[LSV]{Inria and Universit\'e Paris-Saclay\\
LSV, CNRS \& ENS Paris-Saclay and Universit\'e Paris-Saclay, France}
\address[FI]{Systems Biology Laboratory (Sybila), Masaryk University, Brno, Czech Republic}
\address[LRI]{LRI UMR 8623, Univ. Paris-Sud -- CNRS, Universit\'e
Paris-Saclay, Orsay, France}

\begin{abstract}
The modelling of discrete regulatory networks combines a graph specifying the pairwise influences
between the variables of the system, and a parametrisation from which can be derived a discrete
transition system.
Given the influence graph only, the exploration of admissible parametrisations and the behaviours
they enable is computationally demanding due to the combinatorial explosions of both parametrisation and reachable state space.

This article introduces an abstraction of the parametrisation space and its refinement to account
for the existence of given transitions, and for constraints on the sign and observability of
influences.
The abstraction uses a convex sublattice containing the concrete parametrisation space
specified by its infimum and supremum parametrisations.
It is shown that the computed abstractions are optimal, i.e., no smaller convex sublattice exists.
Although the abstraction may introduce over-approximation, it has been proven to be conservative with respect to reachability of states.

Then, an unfolding semantics for Parametric Regulatory Networks is defined, taking advantage of
concurrency between transitions to provide a compact representation of reachable transitions.
A prototype implementation is provided: it has been applied to several examples of Boolean and
multi-valued networks, showing its tractability for networks with numerous components.
\end{abstract}
\begin{keyword}
Boolean networks\sep
Thomas networks\sep
parametrised discrete dynamics\sep
asynchronous systems\sep
concurrency\sep
systems biology
\end{keyword}
\end{frontmatter}

\section{Introduction}

Qualitative models of dynamics of biological regulatory networks form a
convenient framework for systems biology as they require little parametrisation
on top of knowledge available in the literature.
Regulatory networks account for the intertwined influences, positive and
negative, between components of a system.
In systems biology, these networks usually relate to gene regulation and
signalling pathways.
As it has been widely studied in the literature, the architecture of these
networks, and in particular the presence of feedback loops, contributes to a
complex emerging behaviour \cite{thieffry95,albert03,Demongeot03}.

The modelling of regulatory networks is classically done in two steps:
at first, an influence graph is built from data available in the literature.
This directed graph, where nodes are the components/species of the system, relates the
pairwise influences, positive and/or negative.
In a second step, a dynamical model is built from this influence graph.
In this paper, we focus on discrete models, where the state of each node has a
finite discrete domain, typically of very small size, if not Boolean
\cite{thomas73,jong02,laubenbacher04,BernotSemBRN,wang12}.

The specification of a discrete regulatory network requires additional
parameters on top of the influence graph.
Indeed, whereas the influence graph establishes the potential dependencies (possibly
signed) between the node value changes, they are not sufficient to determine the
function which associates each node with its next value, given the global state of
the network.
In other words, it may be known that two species both have positive influence on the activity of a third species.
However, it is rarely known if both of the activators must be present to
activate the target or if just one is sufficient. In general, an arbitrary
logical function may govern the joint influences.
Hence, the individual target values of a node in possible combinations
of its regulators' activity are (discrete) parameters;
the full set of parameters required to define a concrete regulatory network is
referred to as a \emph{parametrisation}.

A \emph{Parametric Regulatory Network} (PRN) is thus a formal model constructed to represent exactly the available biological knowledge.
It contains all the influence information available in the literature, however, no assumptions are made on the unknown specifics retaining all possibilities via different parametetrisations.

The analysis of PRNs is therefore
necessary to identify which parametrisations give a model satisfying given
dynamical properties (existence of particular sequences of state changes,
attractors, etc.).
However, the exploration of possible dynamics of PRNs is hindered by dual combinatorial
explosion limiting its scalability.
Indeed, not only is the state space exponential in the number of nodes in the
network, but the number of parametrisations is in the worst case doubly
exponential in the number of nodes.

\paragraph{Contribution}
The aim of this paper is to define an abstract semantics for PRNs to address the
combinatorial explosion of the parametrisation space and of possible traces.

First, we propose an abstraction of the parametrisation space by the means of a
convex sublattice that we specify by its bounds.
Our abstraction can then be refined to account for possible state transitions.
This leads to an abstract semantics of PRNs, where each state of the network is combined
with the (abstracted) set of possible parametrisations.

We extend our method to account for monotonicity and observability constraints
issued from the influence graph.
Monotonicity constraints derive from the sign of influences:
if a node is influenced positively (resp. negatively) by another node, a
decrease (resp. increase) of the latter
cannot cause the increase of the former.
Observability constraints specify that there should exist states in which the
related influences have an impact on the dynamics of the regulated node.
Indeed, in general, even if a node is marked as regulated
by another in the influence graph, we admit parametrisations where the state of
the latter node never affects the value of the regulated node.
Marking an influence as observable prevents such a case.

In both settings, we prove that our refinement operators lead to the best
possible abstraction of the parametrisation set by the means of a single convex
sublattice.
This result ensures that, if a state is reached in our abstract semantics,
there exists at least one parametrisation which allows a sequence of concrete
transitions leading to this state.
Therefore, whereas our approach relies on an over-approximation of the
parametrisation set, our method does not introduce spurious transitions.

Finally, we define an unfolding semantics for PRNs which allows building a
partial order representation of all the possible traces (or processes) a PRN can
generate from a given initial state.
Our unfolding semantics associates each process with the (abstracted)
set of parametrisations that can generate it.
Overall, this allows a compact representation of the possible traces of a PRN,
both by exploiting concurrency to avoid redundant exploration of ordering of 
independent transitions; and by sharing prefixes of processes that are identical
for different parametrisations.

A prototype implementation is provided to compute the finite complete prefix of
the unfolding of PRNs with abstract parametrisation space.

\paragraph{Related work}
The first systematic approach for exploring the parametrisation space of
multi-valued regulatory networks has been introduced by Bernot et
al.~\cite{bernot04}, and uses an explicit enumeration of admissible
parametrisations, which are then verified individually against temporal
properties, expressed in CTL (Computational Tree Logic~\cite{ClarkeEmerson81}).

Several works aim at improving the scalability of parameter identification,
which verify a given temporal logic property afterwards.
In~\cite{barnat12,me12,Brim2015} the method called coloured model checking is used
to capitalise on many parametrisations sharing some parts of their behaviour for
checking LTL or CTL.
The parametrisations are explicitly represented by colours (bits) in a binary vector and
the model checking is extended to binary vector operations to keep track of the
satisfying behaviours.
The approach in~\cite{gallet14} explores the state space represented
symbolically in the form of execution trees, coupled with an LTL formula, which
also aims at avoiding redundant analysis of different parametrisations having
identical behaviours.
Other methods employ symbolic representations of the parametrisation space to
enumerate valid parametrisations with respect to expected behaviours,
either with Boolean constraints \cite{fromentin07}
or with logic programming \cite{corblin10,Caspots-BioSystems16}.
Finally, \cite{BernotHoare-CMSB15} extends the Hoare logic to build a symbolic
representation of the parametrisation from (partial) trace specifications.

Contrary to our approach, all the above mentioned methods but
\cite{Caspots-BioSystems16} rely on an exact representation of the parametrisation space, either explicitly
\cite{bernot04,barnat12,me12}, or symbolically
\cite{fromentin07,corblin10,gallet14,BernotHoare-CMSB15};
the approach in \cite{Caspots-BioSystems16} is dedicated to Boolean networks and
does not allow a representation of all the possible processes.

The work in \cite{gallet14} is closest to our work since their symbolic
representation of possible traces is acyclic, similarly to unfoldings.
The encoding of parametrisations is performed using Boolean formulas. Contrary
to our fixed-size encoding, however, the formula continues to expand during the
exploration as a more detailed encoding of parametrisations is required.

\medskip
In this paper, the results of the workshop paper \cite{Kolcak-SASB16} are significantly extended by
generalising the framework to multi-valued regulatory networks (instead of only Boolean)
and providing a proof of optimality of the computed abstraction.
The generalisation also requires stronger abstraction refinement operators to
account for monotonicity and observability constraints in order to guarantee the
optimal abstraction.

\paragraph{Outline}
Sect.~\ref{sec:background} settles the main definitions of influence graph and
Parametric Regulatory Network (PRN).
Sect.~\ref{sec:representation} introduces our abstraction of the parametrisation space
and shows its optimality for abstract interpretation of traces of PRNs.
Sect.~\ref{sec:constraints} extends our abstract interpretation to account for \emph{a priori} constraints on admissible
parametrisation, namely monotonicity and observability.
Again, we show that the abstraction of the parametrisation set we compute is the best possible
abstraction.
Sect.~\ref{sec:unfolding} establishes the unfolding semantics of PRNs abstract
interpretation.
Sect.~\ref{sec:experiments} applies a prototype implementation of our unfolding of PRNs
with abstracted parametrisation space to several biological models from
literature and compares the size of the obtained complete finite prefix with the
symbolic execution tree obtained with the tool from \cite{gallet14}.
Finally, Sect.~\ref{sec:discussion} discusses our results and sketches future research
directions.

\paragraph{Notations}
We use $\prod$ to build Cartesian products between sets.
As the ordering of components matters, $\prod$ is not a commutative operator.
Therefore, we write
$\prod^{\leq}_{x\in X}$
for the product over elements in $X$ according to a total order $\leq$.
To ease notations, when the order is clear from the context, or when either $X$ is a set of
integers, or a set of integer vectors, on which we use the lexicographic
ordering, we simply write $\prod_{x\in X}$.

Given a sequence of $n$ elements
$\pi = (\pi_i)_{1 \leq i \leq n}$,
we write $\toset \pi\DEF\{ \pi_i \mid 1\leq i\leq n\}$ the set of its elements.

We denote by
$\fpite{x_0}{f}$ the fixpoint of the iteration of the
monotonic function $f$ initially applied on $x_0$.

Given a vector $v=\langle v_1,\dots,v_n\rangle$,
we write $\subst v i y$ for the vector equal to $v$ except on the component $i$, which is equal to
$y$.
Moreover, we write $\update v i + y$ and $\update v i - y$ for the vector equal to $v$ except on
component $i$, which is equal to $v_i + y$ and $v_i - y$.


\section{Definitions}
\label{sec:background}

This section settles the definitions of influence graph, parametric regulatory
network, and their concrete semantics.
Our framework is general enough to subsume most of the usual definitions of Boolean and multi-valued
networks with asynchronous semantics.
Note that, at this stage, we do not consider influences to be signed (i.e., negative/positive). These will be introduced in Sect.~\ref{sec:constraints}.


An influence graph is a classical directed graph, where nodes define the
variables of the system.

\begin{definition}
An \emph{Influence Graph} (IG) is a tuple $G=(V,I)$ where $V=\{1,\dots,n\}$ is
the finite set of $n$ nodes (components) and $I\subseteq V\times V$ is the set of directed edges (influences).

For each $v\in V$ we denote the set of \emph{incoming nodes}, also referred to as \emph{regulators}, as $\innodes v$, $\innodes v\DEF \{u\in V\mid (u,v)\in I\}$.
\end{definition}



Given an influence graph $G=(V,I)$ of size $n$,
we define a vector $m$ of $n$ dimensions which associates
to each node $v\in V$, its maximum value $m_v\geq 0$.\footnote{%
In general, $m_v\leq \card{\{(v,u)\mid (v,u)\in I\}}$ (out-degree of $v$)~\cite{Richard06}}

A parametric regulatory network $(G,m)$, also denoted as $G_m$,
is an influence graph $G$ coupled with a vector $m$ of maximal values of each node of $G$.

Let us denote by $\Omega_v \DEF \prod_{u\in\innodes v} \domv u$
the set of \emph{regulator states} of $v$.
A \emph{parameter} associates to each node $v$ and to each of its regulator states
a value in $\domv v$.
Intuitively, a parameter $\langle v,\omega\rangle$ specifies the value towards which the node
$v$ evolves when its regulators are in state $\omega$.
The set of parameters of a network is then given by
$\Omega \DEF \bigcup_{v\in V} \left(\{v\}\times \Omega_v\right)$.
Let us define
the total order $\preceq$ on $\Omega$ as follows: $\langle v_1,\omega_1\rangle\preceq \langle v_2,\omega_2\rangle \EQDEF v_1< v_2\vee (v_1=v_2\wedge \omega_1\trianglelefteq\omega_2)$
where $\trianglelefteq$ is the vector order.

The set of \emph{parametrisations} of a network is then the set of vectors
of dimension $\card{\Omega}$ where each coordinate $\langle v,\omega\rangle\in\Omega$, $\omega\in\Omega_v$, has
value in $\domv v$:
\[
\PG \DEF \textstyle\prod^\preceq_{\langle v,\omega\rangle\in\Omega} \domv v
\]
Given a parametrisation $P\in \PG$, a node $v\in V$, and a context
$\omega\in\Omega_v$,
$P_{v,\omega}\in\domv v$ is the coordinate $\langle v,\omega\rangle$ of the vector $P$.

A \emph{Parametric Regulatory Network} (PRN, Def.~\ref{def:prn}) is therefore defined by an influence
graph $G$ and the maximum values $m$ for the nodes from which derives
the set of all parametrisations $\PG$.
A PRN allows to define the set of node states $S(G_m)$ and the set
of transitions $\Delta(G_m)$
which correspond to the unitary increase or decrease of one and only one node
(asynchronous updating mode).

\begin{definition}
\label{def:prn}
A \emph{Parametric Regulatory Network} (PRN) is a couple $(G,m)$, also denoted
$G_m$, where $G=(V,I)$ is an influence graph and $m\in\mathbb N^n$ is the vector of the maximum value of
each node.

\begin{itemize}
\item The set of \emph{states} of $G_m$ is denoted by $S(G_m) \DEF \prod_{v\in V} \domv v$.
\item The set of \emph{transitions} of $G_m$ is denoted $\Delta(G_m)$ and
defined as a relation $\Delta(G_m)\subseteq S(G_m)\times S(G_m)$ such that
\begin{align*}
x \rightarrow y \in \Delta(G_m)
\EQDEF
\exists v\in V:
&(x_v < m_v \wedge y = \subst x v {x_v+1})\\
\vee &
(x_v > 0 \wedge y = \subst x v {x_v-1})
\end{align*}
\end{itemize}
\end{definition}

Given a transition $x\rightarrow y\in\Delta(G_m)$,
we write
$x\xrightarrow{v,+}y$ if $y=\subst x v {x_v+1}$
and
$x\xrightarrow{v,-}y$ if $y=\subst x v {x_v-1}$.

A Discrete Regulatory Network (DRN, Def.~\ref{def:drn}) can then be defined as a PRN $G_m$ associated with a unique
parametrisation $P\in\PG$.
The transition relation $\Delta(G_m,P)\subseteq\Delta(G_m)$ contains only transitions that
modify the value of a node in a direction consistent with parameter values given by $P$.


\begin{definition}
\label{def:drn}
A \emph{Discrete Regulatory Network} (DRN) is a couple $(G_m,P)$ where $G_m$ is
a PRN with $G=(V,I)$, and $P\in\PG$ a parametrisation.

The transition relation
$\Delta(G_m,P) \subseteq \Delta(G_m)$ is defined as, $\forall x\in S(G_m),
\forall v \in V$,
\begin{align*}
x\xrightarrow{v,+}y
\in\Delta(G_m,P)
& \EQDEF
    P_{v,\omega_v(x)} > x_v
\\
x\xrightarrow{v,-}y
\in\Delta(G_m,P)
&\EQDEF
    P_{v,\omega_v(x)} < x_v
\end{align*}
where
$\omega_v : S(G_m) \to \Omega_v$ with
$\omega_v(x) \DEF \prod_{u\in\innodes v} \{x_u\}$
is the projection of the state to the regulators of $v$.
\end{definition}

Note that $\Delta(G_m)$ is by definition the set of all possible transitions.
This is justified by the existence of at least one $P\in\PG:t\in\Delta(G_m,P)$ for any possible transition $t$.

\begin{example}
\label{ex:running}
Fig.~\ref{fig:running_example} gives the influence graph $G=(V,I)$ and parametrisation space
of the PRN $G_m$ with node $a$ having three values, and nodes $b$ and $c$ being Boolean.
Each parametrisation is composed of 11 parameters.
In total $\PG$ contains $3^3\cdot 2^2 \cdot 2^6=6,912$ different parametrisations.
An instance of DRN $(G_m,P)$ is given with its set of transitions $\Delta(G_m,P)$.
For example, the transition
$000 \xrightarrow{a,+} 100$ derives from the fact that
$\omega_a(000) = \langle a=0\rangle$ (or simply $\langle 0\rangle$)
and as $P_{a,\langle 0\rangle}=2$, the node $a$ can increase its value.

To demonstrate the constructions used within the paper clearly and concisely,
we use a toy example which generates sufficiently simple behaviour,
as opposed to real--world biological systems.
However, interplay of several regulators, the centrepiece of out example, is common in biology.

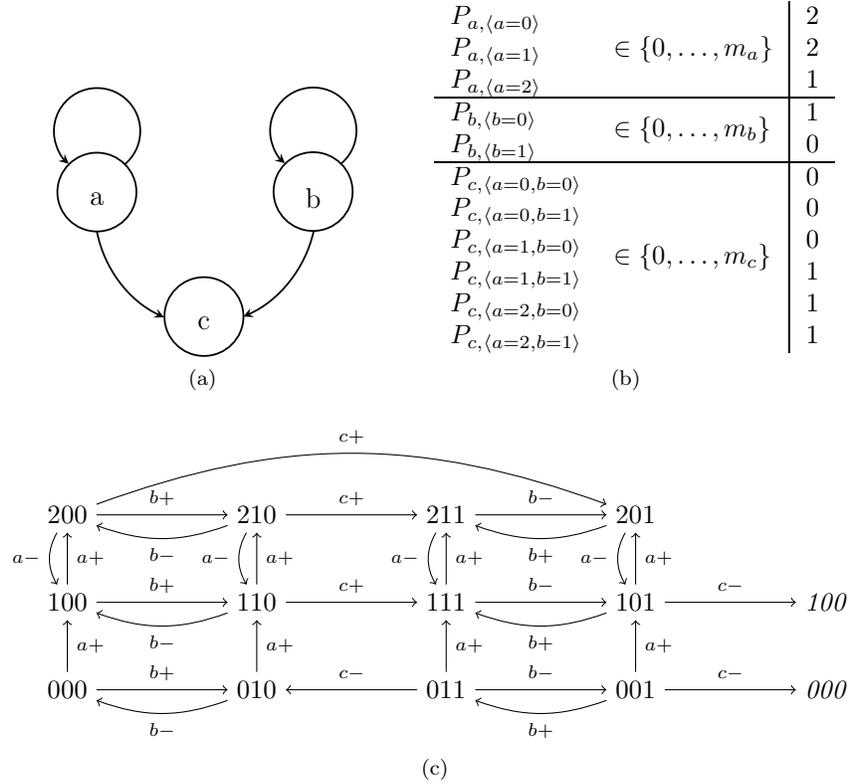
\begin{figure}[t]
\centering
\subfloat[][]{\scalebox{.9}{
\ifx\du\undefined
  \newlength{\du}
\fi
\setlength{\du}{15\unitlength}
\begin{tikzpicture}
\tikzstyle{every node}=[font=\large]
\pgftransformxscale{1.000000}
\pgftransformyscale{-1.000000}
\definecolor{dialinecolor}{rgb}{0.000000, 0.000000, 0.000000}
\pgfsetstrokecolor{dialinecolor}
\definecolor{dialinecolor}{rgb}{1.000000, 1.000000, 1.000000}
\pgfsetfillcolor{dialinecolor}
\definecolor{dialinecolor}{rgb}{1.000000, 1.000000, 1.000000}
\pgfsetfillcolor{dialinecolor}
\pgfpathellipse{\pgfpoint{22.103738\du}{20.108674\du}}{\pgfpoint{1.106838\du}{0\du}}{\pgfpoint{0\du}{1.104174\du}}
\pgfusepath{fill}
\pgfsetlinewidth{0.050000\du}
\pgfsetdash{}{0pt}
\pgfsetdash{}{0pt}
\pgfsetmiterjoin
\definecolor{dialinecolor}{rgb}{0.000000, 0.000000, 0.000000}
\pgfsetstrokecolor{dialinecolor}
\pgfpathellipse{\pgfpoint{22.103738\du}{20.108674\du}}{\pgfpoint{1.106838\du}{0\du}}{\pgfpoint{0\du}{1.104174\du}}
\pgfusepath{stroke}
\definecolor{dialinecolor}{rgb}{0.000000, 0.000000, 0.000000}
\pgfsetstrokecolor{dialinecolor}
\node at (22.103738\du,20.303674\du){a};
\definecolor{dialinecolor}{rgb}{1.000000, 1.000000, 1.000000}
\pgfsetfillcolor{dialinecolor}
\pgfpathellipse{\pgfpoint{28.161238\du}{20.094174\du}}{\pgfpoint{1.106838\du}{0\du}}{\pgfpoint{0\du}{1.104174\du}}
\pgfusepath{fill}
\pgfsetlinewidth{0.050000\du}
\pgfsetdash{}{0pt}
\pgfsetdash{}{0pt}
\pgfsetmiterjoin
\definecolor{dialinecolor}{rgb}{0.000000, 0.000000, 0.000000}
\pgfsetstrokecolor{dialinecolor}
\pgfpathellipse{\pgfpoint{28.161238\du}{20.094174\du}}{\pgfpoint{1.106838\du}{0\du}}{\pgfpoint{0\du}{1.104174\du}}
\pgfusepath{stroke}
\definecolor{dialinecolor}{rgb}{0.000000, 0.000000, 0.000000}
\pgfsetstrokecolor{dialinecolor}
\node at (28.161238\du,20.289174\du){b};
\pgfsetlinewidth{0.050000\du}
\pgfsetdash{}{0pt}
\pgfsetdash{}{0pt}
\pgfsetbuttcap
{
\definecolor{dialinecolor}{rgb}{0.000000, 0.000000, 0.000000}
\pgfsetfillcolor{dialinecolor}
\pgfsetarrowsend{stealth}
\definecolor{dialinecolor}{rgb}{0.000000, 0.000000, 0.000000}
\pgfsetstrokecolor{dialinecolor}
\pgfpathmoveto{\pgfpoint{22.886307\du}{19.327976\du}}
\pgfpatharc{50}{-229}{1.209737\du and 1.209737\du}
\pgfusepath{stroke}
}
\definecolor{dialinecolor}{rgb}{1.000000, 1.000000, 1.000000}
\pgfsetfillcolor{dialinecolor}
\pgfpathellipse{\pgfpoint{25.102938\du}{23.599378\du}}{\pgfpoint{1.106838\du}{0\du}}{\pgfpoint{0\du}{1.104174\du}}
\pgfusepath{fill}
\pgfsetlinewidth{0.050000\du}
\pgfsetdash{}{0pt}
\pgfsetdash{}{0pt}
\pgfsetmiterjoin
\definecolor{dialinecolor}{rgb}{0.000000, 0.000000, 0.000000}
\pgfsetstrokecolor{dialinecolor}
\pgfpathellipse{\pgfpoint{25.102938\du}{23.599378\du}}{\pgfpoint{1.106838\du}{0\du}}{\pgfpoint{0\du}{1.104174\du}}
\pgfusepath{stroke}
\definecolor{dialinecolor}{rgb}{0.000000, 0.000000, 0.000000}
\pgfsetstrokecolor{dialinecolor}
\node at (25.102938\du,23.794378\du){c};
\pgfsetlinewidth{0.050000\du}
\pgfsetdash{}{0pt}
\pgfsetdash{}{0pt}
\pgfsetbuttcap
{
\definecolor{dialinecolor}{rgb}{0.000000, 0.000000, 0.000000}
\pgfsetfillcolor{dialinecolor}
\pgfsetarrowsend{stealth}
\definecolor{dialinecolor}{rgb}{0.000000, 0.000000, 0.000000}
\pgfsetstrokecolor{dialinecolor}
\pgfpathmoveto{\pgfpoint{22.103701\du}{21.212663\du}}
\pgfpatharc{169}{115}{3.362903\du and 3.362903\du}
\pgfusepath{stroke}
}
\pgfsetlinewidth{0.050000\du}
\pgfsetdash{}{0pt}
\pgfsetdash{}{0pt}
\pgfsetbuttcap
{
\definecolor{dialinecolor}{rgb}{0.000000, 0.000000, 0.000000}
\pgfsetfillcolor{dialinecolor}
\pgfsetarrowsstart{stealth}
\definecolor{dialinecolor}{rgb}{0.000000, 0.000000, 0.000000}
\pgfsetstrokecolor{dialinecolor}
\pgfpathmoveto{\pgfpoint{26.209504\du}{23.599501\du}}
\pgfpatharc{66}{13}{3.464593\du and 3.464593\du}
\pgfusepath{stroke}
}
\pgfsetlinewidth{0.050000\du}
\pgfsetdash{}{0pt}
\pgfsetdash{}{0pt}
\pgfsetbuttcap
{
\definecolor{dialinecolor}{rgb}{0.000000, 0.000000, 0.000000}
\pgfsetfillcolor{dialinecolor}
\pgfsetarrowsend{stealth}
\definecolor{dialinecolor}{rgb}{0.000000, 0.000000, 0.000000}
\pgfsetstrokecolor{dialinecolor}
\pgfpathmoveto{\pgfpoint{28.943807\du}{19.313476\du}}
\pgfpatharc{50}{-229}{1.209737\du and 1.209737\du}
\pgfusepath{stroke}
}
\end{tikzpicture}}}
\hspace{5mm}
\subfloat[][]{
    \raisebox{2.3cm}{
    \begin{tabular}{ll|c}
        $P_{a,\langle a=0\rangle}$ & \multirow{3}{*}{$\in \domv a$} & 2\\
        $P_{a,\langle a=1\rangle}$ & & 2\\
        $P_{a,\langle a=2\rangle}$ && 1\\\hline
        $P_{b,\langle b=0\rangle}$ & \multirow{2}{*}{$\in \domv b$} &1\\
        $P_{b,\langle b=1\rangle}$&&0\\\hline
        $P_{c,\langle a=0,b=0\rangle}$ & \multirow{6}{*}{$\in \domv c$}&0\\
        $P_{c,\langle a=0,b=1\rangle}$&&0\\
        $P_{c,\langle a=1,b=0\rangle}$&&0\\
        $P_{c,\langle a=1,b=1\rangle}$&&1\\
        $P_{c,\langle a=2,b=0\rangle}$&&1\\
        $P_{c,\langle a=2,b=1\rangle}$&&1
    \end{tabular}}}

\subfloat[][]{
\begin{tikzpicture}[ampersand replacement=\&]
\matrix (m) [matrix of math nodes,row sep=2em,column sep=5em] {
200 \& 210 \& 211 \& 201 \&\\
100 \& 110 \& 111 \& 101 \&  \it 100\\
000 \& 010 \& 011 \& 001 \&  \it 000\\
};
\path[->,every node/.style={font=\scriptsize}]
(m-3-1) edge node [right] {$a+$} (m-2-1)
(m-2-1) edge node [right] {$a+$} (m-1-1)
(m-1-1) edge[bend right] node [left] {$a-$} (m-2-1)
(m-3-2) edge node [right] {$a+$} (m-2-2)
(m-2-2) edge node [right] {$a+$} (m-1-2)
(m-1-2) edge[bend right] node [left] {$a-$} (m-2-2)
(m-3-3) edge node [right] {$a+$} (m-2-3)
(m-2-3) edge node [right] {$a+$} (m-1-3)
(m-1-3) edge[bend right] node [left] {$a-$} (m-2-3)
(m-3-4) edge node [right] {$a+$} (m-2-4)
(m-2-4) edge node [right] {$a+$} (m-1-4)
(m-1-4) edge[bend right] node [left] {$a-$} (m-2-4)
(m-3-1) edge node[above] {$b+$} (m-3-2)
(m-3-2) edge[bend left=20] node[below] {$b-$} (m-3-1)
(m-2-1) edge node[above] {$b+$} (m-2-2)
(m-2-2) edge[bend left=20] node[below] {$b-$} (m-2-1)
(m-1-1) edge node[above] {$b+$} (m-1-2)
(m-1-2) edge[bend left=20] node[below] {$b-$} (m-1-1)
(m-3-3) edge node[above] {$b-$} (m-3-4)
(m-3-4) edge[bend left=20] node[below] {$b+$} (m-3-3)
(m-2-3) edge node[above] {$b-$} (m-2-4)
(m-2-4) edge[bend left=20] node[below] {$b+$} (m-2-3)
(m-1-3) edge node[above] {$b-$} (m-1-4)
(m-1-4) edge[bend left=20] node[below] {$b+$} (m-1-3)
(m-1-2) edge node[above]{$c+$} (m-1-3)
(m-2-2) edge node[above]{$c+$} (m-2-3)
(m-3-3) edge node[above]{$c-$} (m-3-2)
(m-2-4) edge node[above]{$c-$} (m-2-5)
(m-3-4) edge node[above]{$c-$} (m-3-5)
(m-1-1) edge[bend left=20] node[above]{$c+$} (m-1-4)
;

\end{tikzpicture}
}
\caption{(a) Influence graph $G$ and (b) parametrisation domain of a PRN $G_m$
with $m_a=2$, $m_b=1$, and $m_c=1$.
For readability, we use letters instead of numbers for nodes,
and write explicitly the component name in vectors.
In the rest of the paper, we also use shorter notations, e.g., 
        $P_{c,\langle 2,1\rangle}$
        instead of
        $P_{c,\langle a=2,b=1\rangle}$.
 (c) Transitions of the DRN with parametrisation $P=\langle 2,2,1,1,0,0,0,0,1,1,1\rangle$
corresponding to the right column of (b).
Nodes are states $S(G_m)$ in the order $a,b,c$.
}
\label{fig:running_example}
\end{figure}
\end{example}





\section{Abstraction of Parametrisation Sets}
\label{sec:representation}

The number of candidate parametrisations being exponential with the number of parameters (which is
exponential with the in-degree of nodes), the concrete representation of parametrisation set is a typical bottleneck.

In this section, we introduce an abstraction of a parametrisation set
by the means of a bounded convex sublattice of the lattice of all parametrisations
$\PG$ with respect to the \emph{parametrisation order} with $k=|\Omega|$
(Def.~\ref{def:param-order}) and study its restriction with respect to transitions.
A bounded convex sublattice can be specified solely by its least and its greatest elements
$\Plb\in\PG$, respectively $\Pub\in\PG$, allowing us to uniquely represent
a parametrisation set by only two parametrisations $(\Plb,\Pub)$.
Furthermore, we show that our abstraction introduces no over-approximation of the parametrisation set
unless the model is refined with a set of constraints (Section \ref{sec:constraints}).

\begin{definition}\label{def:param-order}
The parametrisation order $\leq$ on vectors of integers of length $k$ is a partial order
such that $\forall v,w\in\mathbb{N}^k$:
$$v\leq w\EQDEF\forall i\in\{0,\dots,k\}:v_i\leq w_i$$
\end{definition}

First, we consider in Sect.~\ref{sec:concrete} the restriction of a concrete parametrisation set for a given set of transitions
and we analyse its algebraical properties.
Such a restriction allows defining the semantics of Parametric Regulatory Networks, where
states of nodes are coupled with a parametrisation set, and transitions restrict the latter.
Sect.~\ref{sec:abstract} presents its abstract counterpart, and demonstrates that the abstraction is
exact: it preserves all parametrisations for any subset of transitions and introduces no
over-approximation.

\subsection{Concrete parametrisation space}
\label{sec:concrete}

From DRN semantics (Def.~\ref{def:drn}),
we define $\P_t$ the subset of parametrisations of a PRN enabling a transition $t$
(Def.~\ref{def:pt}).
Given a set of transitions $T$, the concrete set of parametrisations enabling all the transitions
in $T$ is the intersection of all the $\P_t$ for $t\in T$.
We denote by $p(T)$ the obtained parametrisation set (Def.~\ref{def:pT}).

\begin{definition}\label{def:pt}
Let $G_m$ be a PRN and $t\in\Delta(G_m)$ a transition.
The \emph{parametrisation set enabling $t$}, denoted $\P_t$, is defined as follows:
\begin{align*}
\P_{x\xrightarrow{v,+} y} &\DEF
\{P\in\PG\mid P_{v,\omega_v(x)} \geq x_v+1\},
\\
\P_{x\xrightarrow{v,-} y} &\DEF
\{P\in\PG\mid P_{v,\omega_v(x)} \leq  x_v-1\}.
\end{align*}
\end{definition}

\begin{definition}[$p(T)$]\label{def:pT}
Let $G_m$ be a PRN. 
Given a set of transitions $T\subseteq\Delta(G_m)$, the \emph{concrete parametrisation set enabling $T$}, denoted
$p(T)$, is defined as follows:
\begin{itemize}
\item[] $p(\emptyset) \DEF \PG$,
\item[] $p(T) \DEF \bigcap_{t\in T} \P_t$ if $T\neq\emptyset$.
\end{itemize}
\end{definition}

Given any sequence of transitions $\pi = x\to \dots\to y$, it follows that
$p(\toset \pi)\neq\emptyset$ if and only if there exists a DRN $(G_m,P)$ where
$\toset\pi\subseteq\Delta(G_m,P)$, i.e., the DRN can produce the trace $\pi$.
This leads to the definition of realisable traces of PRNs.

\begin{definition}[Concrete semantics of PRNs]
\label{def:concrete-PRNs}
Given a PRN $G_m$,
a sequence $\pi$ of transitions in $\Delta(G_m)$
is \emph{realisable} if and only if
$p(\toset\pi)\neq\emptyset$.
\end{definition}

It is important to remark that the set of parametrisations $\PG$
is a bounded lattice with respect to the parametrisation ordering.
It comes from the fact that the set of parametrisations is always finite.
This property is naturally extended to a parametrisation set enabling any set of transitions $T$.
\begin{property}
\label{pty:pT-convex}
$p(T)$ is a bounded convex sublattice of $\PG$.
\end{property}
\begin{proof}
Let $t=x\xrightarrow{v,+}y\in T$ and $P,P',P''\in\PG$ be arbitrary parametrisations such that $P'\leq P\leq P''$ and $P',P''\in\P_t$.
From $P'\in\P_t$ we know $P'_{v,\omega_v(x)}\geq y_v$ and since $P\geq P'$ we get $P_{v,\omega_v(x)}\geq P'_{v,\omega_v(x)}$
thus $P_{v,\omega_v(x)}\geq y_v$.

A symmetric proof can be constructed for decreasing transitions ($t=x\xrightarrow{v,-}y\in T$) using $P''$ to arrive at $P_{v,\omega_v(x)}\leq y_v$.
Surely then, $P$ must belong to $\P_t$, therefore for all $t\in T$, $\P_t$ is a convex sublattice of $\PG$.
The intersection of convex sublattices is a convex sublattice.
Boundedness follows from the fact that $\PG$ is finite.
\end{proof}

\subsection{Abstract parametrisation space}
\label{sec:abstract}

A bounded convex sublattice is fully determined by its least and greatest element.
We write $(\Plb,\Pub)$ to represent the convex sublattice of parametrisation sets, where
$\Plb,\Pub\in\PG$.

\paragraph{Additional notations}
An empty lattice is denoted by $\emptylattice$.
By abuse of notation, we may also write $(\Plb,\Pub)=\emptylattice$
to address the fact that $(\Plb,\Pub)$ represents an empty lattice ($\neg(\Plb\leq\Pub)$).
We use $\lb{A}$ and $\ub{A}$ to denote lower and upper bounds (respectively)
of a bounded lattice generated by a set of elements $A$.
Given two vectors $x,y$ of length $n$, we denote
$\max(x,y) \DEF \langle \max(x_i,y_i) \mid n \in \range 1 n\rangle$
and
$\min(x,y) \DEF \langle \min(x_i,y_i) \mid n \in \range 1 n\rangle$.

\medskip

In the concrete domain, we restrict the parameter set $p(T)$ by a transition $t\notin T$ to obtain $p(T)\cap\P_t=p(T\cup\{t\})$.
In order to obtain the abstract counterpart of the restriction, we define a narrowing operator $\restrict_t$ which
refines an abstract parametrisation space $(\Plb,\Pub)$ according to the specified transition $t$
(Def.~\ref{def:narrow-t}).
If the transition increases the value of $x_v$,
necessarily, all parametrisations $P$ should satisfy $P_{v,\omega_v(x)}\geq x_v+1$.
Therefore, the lower bound $\Plb$ of the parametrisation set at $\langle v,\omega_v(x)\rangle$-coordinate
is at least $x_v + 1$.
The case when the transition decreases the value of $v$ leads to an analogous refinement of the upper
bound $\Pub$.

\begin{definition}\label{def:narrow-t}
Let $(\Plb,\Pub)\in\PG^2$ be the abstraction of parametrisation set of a PRN $G_m$, and
$t\in\Delta(G_m)$ be a transition.
The \emph{narrowing of} $(\Plb,\Pub)$ \emph{by} $t$, $\restrict_t:\PG^2\to \PG^2$, is defined in the following way:
\begin{align}
\restrict_{x\xrightarrow{v,+} y}(\Plb,\Pub)
& \DEF
\left(
\max\left(\Plb, \subst {\lb{\PG}} {v,\omega_v(x)} {x_v+1}\right)
,\Pub
\right),
\\
\restrict_{x\xrightarrow{v,-} y}(\Plb,\Pub)
& \DEF
\left(
\Plb,
    \min\left(\Pub, \subst {\ub{\PG}} {v,\omega_v(x)} {x_v-1}\right)
\right).
\end{align}
\end{definition}

We can then define the abstract counterpart $\pabs(T)$ of $p(T)$ by iteratively
applying $\restrict_t$ for each $t\in T$ starting from the lower and upper bounds of $\PG$.
Iterative application of $\restrict_t$ for all $t\in T$ implicitly requires an order on transitions in $T$.
Due to the use of $\min$ and $\max$ in Def. \ref{def:narrow-t}, however,
the same result is obtained regardless of the order by which transitions in $T$ are explored.
We show that $\pabs(T)=p(T)$ (Theorem~\ref{thm:pabsT}) as a consequence of Property~\ref{pty:pT-convex}.

\begin{definition}[$\pabs(T)$]\label{def:pabsT}
Let $G_m$ be a PRN and $T\subseteq \Delta(G_m)$ a set of transitions.
The \emph{abstract parametrisation set} $\pabs(T)$ is defined inductively as follows:
\begin{itemize}
\item[] $\pabs(\emptyset) \DEF (\lb{\PG},\ub{\PG})$,
\item[] $\pabs(T \cup \{t\}) \DEF
    \restrict_t(\pabs(T))$.
\end{itemize}
\end{definition}

\begin{theorem}\label{thm:pabsT}
Given a PRN $G_m$ and transitions $T\subseteq\Delta(G_m)$,
$\pabs(T) = p(T)$.
\end{theorem}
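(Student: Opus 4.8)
The plan is to exploit Property~\ref{pty:pT-convex}: since $p(T)$ is a bounded convex sublattice of $\PG$, it is completely determined by its least and greatest elements, so the equality $\pabs(T)=p(T)$ reduces to showing that the pair of bounds computed by $\pabs$ coincides with $(\lb{p(T)},\ub{p(T)})$. Because $\pabs$ is defined inductively by peeling off transitions one at a time, I would fix an arbitrary enumeration of $T$ and prove the claim by induction on the number of transitions processed; the order-independence already remarked after Def.~\ref{def:pabsT} (a consequence of the commutativity and idempotence of $\min$ and $\max$) guarantees this choice is harmless.

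For the base case $T=\emptyset$ both sides equal $\PG$: indeed $p(\emptyset)=\PG$ by Def.~\ref{def:pT}, while $\pabs(\emptyset)=(\lb{\PG},\ub{\PG})$ represents exactly the set of parametrisations between the global bounds, i.e.\ $\PG$ itself. For the inductive step, suppose $\pabs(T)=(\Plb,\Pub)$ represents $p(T)$, so that $p(T)=\{P\in\PG\mid \Plb\leq P\leq\Pub\}$, and consider $p(T\cup\{t\})=p(T)\cap\P_t$. Take $t=x\xrightarrow{v,+}y$ (the decreasing case being symmetric on the upper bound). By Def.~\ref{def:narrow-t} the narrowing leaves $\Pub$ untouched and replaces $\Plb$ by $\Plb'=\max(\Plb,\subst{\lb{\PG}}{v,\omega_v(x)}{x_v+1})$, which, since $\lb{\PG}$ is the coordinatewise minimum, agrees with $\Plb$ on every coordinate except $\langle v,\omega_v(x)\rangle$, where it takes the value $\max(\Plb_{v,\omega_v(x)},x_v+1)$. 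It then suffices to check coordinatewise that, for any $P\in\PG$, the two conditions $\Plb\leq P$ and $P_{v,\omega_v(x)}\geq x_v+1$ together are equivalent to $\Plb'\leq P$; combined with $P\leq\Pub$ this yields $\{P\mid \Plb'\leq P\leq\Pub\}=p(T)\cap\P_t=p(T\cup\{t\})$, which is precisely the set represented by $\restrict_t(\pabs(T))=\pabs(T\cup\{t\})$.

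The heart of the argument --- and the only place where anything must really be verified --- is this coordinatewise step: the constraint defining $\P_t$ touches a single coordinate, so intersecting it with the \emph{box} $[\Plb,\Pub]$ can only tighten the bound on that one coordinate, and because the bounds act independently across coordinates the result is again such a box. I expect the main care to be needed in handling the degenerate case where the narrowing drives the lower bound above the upper bound at some coordinate: here $\restrict_t(\pabs(T))=\emptylattice$ and, correspondingly, the new coordinate constraint is incompatible with $\Pub$, so $p(T\cup\{t\})=\emptyset$ as well, and both sides represent the empty set. Checking that the convention $(\Plb,\Pub)=\emptylattice$ whenever $\neg(\Plb\leq\Pub)$ is respected throughout the induction, together with the symmetric treatment of decreasing transitions, completes the proof.
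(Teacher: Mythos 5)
Your proof is correct. The key observation you isolate --- that each $\P_t$ constrains a single coordinate $\langle v,\omega_v(x)\rangle$, so intersecting it with a coordinatewise box $[\Plb,\Pub]$ yields another such box with exactly the bound update performed by $\restrict_t$ --- is the same observation that underlies the paper's argument, but you package it differently. The paper's proof is non-inductive: it writes the closed-form expression for the bounds accumulated by $\pabs(T)$ (the coordinatewise $\max$ of the lower constraints and $\min$ of the upper constraints over all of $T$), shows $p(T)$ is contained in the resulting box, and then argues the reverse inclusion via the attainment of each coordinate bound by some element of $p(T)$, leaning on Property~\ref{pty:pT-convex} (that $p(T)$ is a bounded convex sublattice) to conclude. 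Your induction on $|T|$ sidesteps Property~\ref{pty:pT-convex} entirely by maintaining the exact set identity $p(T)=\{P\mid\Plb\leq P\leq\Pub\}$ as the invariant and verifying it is preserved by one narrowing step; this is arguably more self-contained and makes the "no over-approximation" claim transparent, at the cost of having to invoke the order-independence of the narrowing (which you correctly note is already justified after Def.~\ref{def:pabsT}). Your remark that the degenerate case $\neg(\Plb'\leq\Pub)$ needs no special treatment is also right, since the set $\{P\mid\Plb'\leq P\leq\Pub\}$ is then empty and the identity still holds.
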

\begin{proof}
Let $\pabs(T)=(\Plb,\Pub)$.
By Def.~\ref{def:narrow-t} and \ref{def:pabsT},
for each $v\in V$ and each $\omega\in\Omega_v$:
\begin{itemize}
\item[] $\Plb_{v,\omega} = \max(\{0\}\cup\{x_v+1\mid x\xrightarrow{v,+}y\in T \wedge
    \omega = \omega_v(x)\})$
and,

\item[] $\Pub_{v,\omega} = \min(\{\mv v\}\cup\{x_v-1\mid
x\xrightarrow{v,-}y\in T\wedge \omega = \omega_v(x)\})$.
\end{itemize}

By Def.~\ref{def:pt} and \ref{def:pT}, for each $v\in V$ and each $\omega\in\Omega_v$, any parametrisation $P\in p(T)$ satisfies
$\Plb_{v,\omega} \leq P_{v,\omega} \leq \Pub_{v,\omega}$.
Moreover,
for each $v\in V$ and each $\omega\in\Omega_v$,
there exist $P,P'\in p(T)$
such that
$P_{v,\omega}=\Plb_{v,\omega}$
and
$P'_{v,\omega}=\Pub_{v,\omega}$
\end{proof}

From this theorem derives the fact that the abstraction of the parametrisation set for a sequence
of transitions is not empty if and only if
there exists a concrete parametrisation which enables these transitions.

\begin{corollary}
Given a sequence of transitions $\pi = x\to \dots \to y$ in $\Delta(G_m)$,
\[ \pabs(\toset{\pi}) \neq \emptylattice \Longleftrightarrow
    p(\toset{\pi}) \neq \emptyset \]
\end{corollary}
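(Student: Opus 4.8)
The plan is to obtain the corollary as an almost immediate consequence of Theorem~\ref{thm:pabsT}, instantiated with $T=\toset\pi$. The only genuine content lies in reconciling the two notations involved: $\pabs(\toset\pi)$ is delivered as a pair of bounds $(\Plb,\Pub)$, whereas $p(\toset\pi)$ is literally a set of parametrisations, and the symbols $\emptylattice$ and $\emptyset$ refer to emptiness in these two representations respectively. So the task reduces to showing that these two notions of emptiness agree under the identification provided by the theorem.

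First I would recall, from the additional notations of Sect.~\ref{sec:abstract}, that a pair $(\Plb,\Pub)$ stands for the bounded convex sublattice $\{P\in\PG\mid \Plb\leq P\leq\Pub\}$, and that the abuse of notation $(\Plb,\Pub)=\emptylattice$ is by definition the condition $\neg(\Plb\leq\Pub)$. The key observation is then that, in the finite lattice $\PG$, this denoted set is empty if and only if $\neg(\Plb\leq\Pub)$: if $\Plb\leq\Pub$ then $\Plb$ itself belongs to the set, so it is nonempty; conversely, if $\neg(\Plb\leq\Pub)$ then no $P$ can simultaneously satisfy $\Plb\leq P$ and $P\leq\Pub$. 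Hence $(\Plb,\Pub)=\emptylattice$ holds precisely when the set it denotes is empty.

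Finally I would apply Theorem~\ref{thm:pabsT} with $T=\toset\pi$, yielding $\pabs(\toset\pi)=p(\toset\pi)$, i.e.\ the convex sublattice denoted by the computed bounds coincides with the concrete set $p(\toset\pi)$. Chaining this with the previous step gives $\pabs(\toset\pi)=\emptylattice$ iff the denoted set is empty iff $p(\toset\pi)=\emptyset$, and taking the contrapositive yields $\pabs(\toset\pi)\neq\emptylattice \Longleftrightarrow p(\toset\pi)\neq\emptyset$, which is exactly the claim. The only step demanding any care is the identification of emptiness across the two representations, and it rests solely on the finiteness of $\PG$ (already invoked for boundedness in Property~\ref{pty:pT-convex}); no further combinatorial argument is required.
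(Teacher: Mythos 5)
Your proposal is correct and follows exactly the route the paper intends: the corollary is stated as an immediate consequence of Theorem~\ref{thm:pabsT} applied to $T=\toset\pi$, and the paper gives no further argument. Your only addition is to spell out the reconciliation of $\emptylattice$ (i.e.\ $\neg(\Plb\leq\Pub)$) with literal set emptiness, which the paper leaves implicit; that step is sound and rests, as you note, on the finiteness of $\PG$.
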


\def\Sa{\scriptscriptstyle\triangle}%
\def\Sb{\scriptscriptstyle\heartsuit}%
\def\Sc{\scriptscriptstyle\clubsuit}%
\def\Sd{\scriptscriptstyle\spadesuit}%
\begin{example}
Fig.~\ref{fig:lattice} gives a sketch of the lattice representation of the parametrisation space of
the PRN $G_m$ introduced in Fig.~\ref{fig:running_example}.
The full parametrisation space $\PG$ is completely characterized by the convex sublattice with lower bound
$L=\langle 00000000000\rangle$ and upper bound $U=\langle 22211111111\rangle$.
\begin{figure}[t]
\centering
\def\myP#1#2#3#4#5#6#7#8#9{%
    \def\nextArgs##1##2{%
$#1\medspace#2\medspace#3\medspace\underset{\Sa}{#4}\thinspace\underset{\Sb}{#5}\medspace#6\medspace#7\medspace#8\medspace\underset{\Sc}{#9}\medspace##1\thinspace\underset{\Sd}{##2}$}%
    \nextArgs}
\begin{tikzpicture}[anchor=west,node distance=15mm]

\node[] (top) {\myP22211111111};
\node[font=\tiny,yshift=-0.5em,anchor=south west] at (top.north west)
{
\rotatebox{90}{$a,\langle 0\rangle$}
\rotatebox{90}{$a,\langle 1\rangle$}
\rotatebox{90}{$a,\langle 2\rangle$}
\rotatebox{90}{$b,\langle 0\rangle$}
\rotatebox{90}{$b,\langle 1\rangle$}
\rotatebox{90}{$c,\langle 0,0\rangle$}
\rotatebox{90}{$c,\langle 0,1\rangle$}
\rotatebox{90}{$c,\langle 1,0\rangle$}
\rotatebox{90}{$c,\langle 1,1\rangle$}
\rotatebox{90}{$c,\langle 2,0\rangle$}
\rotatebox{90}{$c,\langle 2,1\rangle$}
};
\node[below right of=top,xshift=3mm] (U) {{\myP 22210111111}};
\node[below of=U,xshift=1cm] (LR) {\myP 00010000101};
\node[anchor=north west,xshift=-5mm,yshift=-2mm] (LR1) at (LR.south west) {\myP 00000000101};
\node[anchor=north west,xshift=-5mm,yshift=-2mm] (L) at (LR1.south west) {\myP 00000000100};
\node[below left of=L,xshift=-3mm] (bot) {\myP 00000000000};

\node[xshift=-4cm,yshift=-1cm] (h1) at (top.south west) {};
\node[xshift=-4cm,yshift=1cm] (h2) at (bot.north west) {};
\node[xshift=4cm,yshift=-1cm] (h3) at (top.south east) {};
\node[xshift=4cm,yshift=1cm] (h4) at (bot.north east) {};
\node[below left of=U,xshift=-3cm] (h5) {};
\node[above left of=L,xshift=-3cm] (h6) {};
\node[below right of=U,xshift=3cm,yshift=4mm] (h7) {};
\node[above right of=L,xshift=3cm] (h8) {};
\node[xshift=-3mm,yshift=-5mm] (h9) at (U.south west) {};
\node[xshift=2mm,yshift=4mm] (h10) at (LR.north east) {};

\draw
    (U.west)+(0.1,0) -- ([xshift=-1mm]U.east)
    (LR.west)+(0.1,0) -- ([xshift=-1mm]LR.east)
    ([xshift=3mm]U.north west) -- (U.north west) 
        -- ([yshift=2mm]U.south west) -- ([xshift=3mm,yshift=2mm]U.south west)
    ([xshift=3mm]L.north west) -- (L.north west)
        -- ([yshift=2mm]L.south west) -- ([xshift=3mm,yshift=2mm]L.south west)
;

\draw
    (bot) -- (L)
    (L) -- (LR1)
    (LR1) -- (LR)
    (U) -- (top)
;
\draw[dashed]
    (h1) -- (top)
    (h3) -- (top)
    (bot) -- (h2)
    (bot) -- (h4)
    (h5) -- (U.west)
    (h7) -- (U.east)
    (L) -- (h6)
    (L) -- (h8)
    (LR) -- (h9) -- (U)
    (LR) -- (h10) -- (U)
;
\begin{pgfonlayer}{background}
\fill[blue,opacity=0.2]
    (U.west) -- (h5.center) -- (h6.center) 
        -- ([xshift=14mm]L.west) -- ([xshift=-14mm]L.east) -- (h8.center) -- (h7.center)
        -- (U.east) --cycle;
\fill[yellow]
    ([xshift=15mm]U.west) -- (h9.center)
    -- ([xshift=14mm]LR.west) -- ([xshift=-15mm]LR.east)
    -- (h10.center) -- ([xshift=-14mm]U.east) -- cycle;
\end{pgfonlayer}

\end{tikzpicture}
\caption{Sketch of Hasse diagram of the convex sublattice of $\PG$ for the PRN of
Fig.~\ref{fig:running_example} with $L=\langle 00000000000\rangle$ and $U=\langle 22211111111\rangle$.
The two parametrisations with a left bracket correspond to the
refined abstraction $\pabs(T)$ with 
$T=\{ 110\xrightarrow{c,+} 111; 111 \xrightarrow{b,-1} 101 \}$.
The two parametrisations underlined correspond to the further refined abstraction
$\pabs_R(T)$ with $\{(a,c,+1),(b,b,\o)\}\subseteq R$
(Sect.~\ref{sec:constraints}).
The parameter corresponding to $b,\langle b=0\rangle$ is marked with $\Sa$;
$b,\langle b=1\rangle$ with $\Sb$;
$c,\langle a=1,b=1\rangle$ with $\Sc$;
and
$c,\langle a=2,b=1\rangle$ with $\Sd$.}
\label{fig:lattice}
\end{figure}

Given the set of transitions
$T=\{ 110\xrightarrow{c,+} 111; 111 \xrightarrow{b,-1} 101 \} \subset \Delta(G_m)$,
$\pabs(T)$ refines the abstraction in the following way:
the transition $110\xrightarrow{c,+} 111$ imposes that
$P_{c,\langle a=1,b=1\rangle} \geq 1$, hence the lower bound of the parametrisation space
is adjusted to $\langle 00000000\mathbf{1}00\rangle$;
and
the transition $111\xrightarrow{b,-} 101$ imposes that
$P_{b,\langle b=1\rangle} \leq 0$ which allows to refine the upper bound
to $\langle 2221\mathbf{0}111111\rangle$.
Theorem~\ref{thm:pabsT} states that $\pabs(T) = p(T)$, i.e.,
the set of parametrisations of $G_m$ enabling $T$
is exactly the set of parametrisations with the adjusted bounds of the convex sublattice
(see the light blue area in Figure~\ref{fig:lattice}).
\end{example}

\section{Global Constraints on Parametrisations}
\label{sec:constraints}

The results of Sect.~\ref{sec:representation} apply on PRNs having an influence
graph without signs on edges.
In general, there is usually partial knowledge of sign of some influences
as well as knowledge on the necessity of some influences, the latter to be referred to as \emph{observability}.

Signed influences lead to global constraints on the admissible parametrisations
in the form of \emph{monotonicity constraints} \cite{BernotSemBRN}:
the sole activation of a positive (resp. negative) regulator $u$ of node $v$ cannot cause a
decrease (resp. increase) of its value.
This imposes inequality restrictions among parameters.

Similarly, an observable influence of $u$ on $v$ imposes that, in some state, a
change in the value of $u$ should change the value of $v$.
This is again translated as inequality constraints among parameters~\cite{me12}.

This section extends our abstraction to account for such constraints.

\subsection{Definitions}

Given a PRN $G_m$ with $G=(V,I)$,
we define a \emph{well-formed} set of influence constraints
$R \subseteq V \times V \times \{+1,-1,\o\}$
such that
$\forall (u,v,c)\in R$,
$u\in\innodes v$,
and
$\forall u,v\in V, \{(u,v,+1),(u,v,-1)\}\not\subseteq R$.%
\footnote{In the framework considered in this paper, an influence being both positive- and
negative-monotonic is equivalent to having no influence.}%
~In this setting,
$(u,v,+1)$ means that the influence of $u$ on $v$ is \emph{positive-monotonic};
$(u,v,-1)$ means that the influence of $u$ on $v$ is \emph{negative-monotonic};
and
$(u,v,\o)$ means that the influence of $u$ on $v$ is \emph{observable}.

We say that $u$ has a \emph{positive-monotonic influence} on $v$
only if, for any $P\in\P$,
\begin{align*}
\forall \omega\in\Omega_v
\forall x_u\in\{1,\cdots\mv u\},
P_{v,\subst\omega u {x_u}} \geq P_{v,\subst\omega u {x_u-1}}
\end{align*}
i.e., the sole increase of the activator $u$ cannot cause a decrease of the regulated node $v$.

Similarly, $u$ has a \emph{negative-monotonic influence} on $v$
only if, for any $P\in\P$,
\begin{align*}
\forall \omega\in\Omega_v
\forall x_u\in\{1,\cdots\mv u\},
P_{v,\subst\omega u {x_u}} \leq P_{v,\subst\omega u {x_u-1}}
\end{align*}
i.e., the sole increase of the inhibitor $u$ cannot cause an increase of the regulated node $v$.

Finally, we say that $u$ has an \emph{observable influence} on $v$
only if, for any $P\in\P$,
\begin{align*}
\exists \omega\in\Omega_v
\exists x_u\in\{1,\cdots\mv u\},
P_{v,\subst\omega u {x_u}} \neq P_{v,\subst\omega u {x_u-1}}
\end{align*}
i.e., there exists a state where the sole change of the regulator $u$ triggers a change of the
regulated node $v$.

Note that the definitions are complementary, it is indeed often the case in biology that an influence is both monotonic (either positive or negative) and observable.

The concrete set of parametrisations satisfying a constraint $r\in R$ is
therefore characterised as follows.
\begin{definition}[$\P_r$]
\label{def:Pr}
Given $r\in R$, $\P_r\subseteq\PG$ is the subset of parametrisations satisfying
the influence constraint $r$ with:
\begin{align*}
\P_{(u,v,+1)} &\DEF
\{ P \in \PG \mid
    \forall \omega\in\Omega_v,
    \forall x_u\in \range 1 {\mv u},
        P_{v,\subst \omega u {x_u}} \geq
        P_{v,\subst \omega u {x_u-1}} \}
\\
\P_{(u,v,-1)} &\DEF
\{ P \in \PG \mid
    \forall \omega\in\Omega_v,
    \forall x_u\in \range 1 {\mv u},
        P_{v,\subst \omega u {x_u}} \leq
        P_{v,\subst \omega u {x_u-1}} \}
\\
\P_{(u,v,\o)} &\DEF
\{ P \in \PG \mid
    \exists \omega\in\Omega_v,
    \exists x_u\in \range 1 {\mv u},
        P_{v,\subst \omega u {x_u}} \neq
        P_{v,\subst \omega u {x_u-1}} \}
\end{align*}
\end{definition}

Given a node $v$, the monotonicity constraints allow to define a partial order over its regulator
states $\Omega_v$: $\omega\in\Omega_v$ is $\preceq_v$-smaller than $\omega'\in\Omega_v$
if for every parametrisation $P$ that satisfies the monotonicity constraints we have $P_{v,\omega}\leq P_{v,\omega'}$.

\begin{definition}[$\preceq_{v}$]
	Let $R$ be an arbitrary well-formed set of constraints.
    The \emph{monotonicity order} $\preceq_{v}\subseteq {\Omega_v}^2$
	on the regulatory contexts of $v$ is the partial order
    such that $\forall \omega, \omega'\in\Omega_v$:
	\begin{align*}
		\omega \preceq_v \omega' \EQDEF
            \forall u\in\innodes v,
\begin{cases}
\omega_u\leq\omega'_u &\text{if }(u,v,+1)\in R\\
\omega_u\geq\omega'_u &\text{if }(u,v,-1)\in R\\
\omega_u=\omega'_u & \text{otherwise.}
\end{cases}
	\end{align*}
   We write $\omega\parallel_v\omega'$ if and only if $\omega$ and $\omega'$ are not comparable
   according to $\preceq_v$. This is the case notably when $\omega_u\neq\omega'_u$ for some $u$ such that the influence $(u,v)$ is not monotonic.
\end{definition}

\subsection{Concrete parametrisation space}

The set of parametrisations which satisfy both constraints $R$ and enable a set of transitions $T$
can be directly derived by the intersection of $p(T)$ (Def.~\ref{def:pT}) with the parametrisations
satisfying $R$ (Def.~\ref{def:Pr}).
\begin{definition}\label{def:pRT}
Let $G_m$ be a PRN, and $R$ a well-formed set of influence constraints.
Given a set of transitions $T\subseteq\Delta(G_m)$, the \emph{parametrisation set enabling $T$ under $R$}, denoted
$p_R(T)$, is given by:
\begin{align*}
p_R(T) \DEF p(T) \cap
\bigcap_{(u,v,s)\in R} \P_{(u,v,s)}
\end{align*}
\end{definition}

\begin{remark}
$p_R(T)$ is no longer a convex sublattice.
\end{remark}

\subsection{Abstract parametrisation space}

Given the lower and upper bounds $(L,U)$ of the convex sublattice of parametrisations, this section
introduces narrowing operators $\nabla_r$ to account for the influence constraints $r\in R$ and adjust
these boundaries accordingly.
We demonstrate that the narrowing operators lead to the optimal abstraction of the concrete
parametrisation set, i.e., it is equal to the smallest convex sublattice which includes $p_R(T)$.

\paragraph{Ensure monotonicity}
Given an influence constraint $(u,v,s)\in R$ with $s\in\{+1,-1\}$ we define the operator
$\restrict_{(u,v,s)} : \PG^2 \to \PG^2$ which increases the lower bound and decreases the upper bound
until the $s$-monotonicity constraint is satisfied:
\begin{equation}
\restrict_{(u,v,s)}(\Plb,\Pub) \DEF
    \fpite{(\Plb,\Pub)}{f}
\end{equation}
where $f(\Plb,\Pub)\DEF(\Plb',\Pub')$ with $\forall \omega\in\Omega_v$,
\begin{align*}
\Plb'_{v,\omega} &=
 \max\left(\{\Plb_{v,\omega}\} \cup
    \left\{\Plb_{v,\subst \omega u {\omega_u-s}} \mid
        \omega_u - s\in\domv u \right\}
       \right)
\\
\Pub'_{v,\omega} &=
 \min\left(\{\Pub_{v,\omega}\} \cup
    \left\{
    \Pub_{v,\subst \omega u {\omega_u+s}} \mid
        \omega_u + s\in\domv u \right\}
       \right)
\end{align*}
and for all $a\in V, a\neq v$, and for all $\omega\in\Omega_a$,
$\Plb'_{a,\omega} = \Plb_{a,\omega}$
and
$\Pub'_{a,\omega} = \Pub_{a,\omega}$.

By iterating over regulator states $\Omega_v$ in $\preceq_v$ order, the fixpoint of $L$ can be
computed in $\card{\Omega_v}$ steps; and similarly for the fixpoint of $U$ by following the
anti-$\preceq_v$ order.

\paragraph{Ensure observability}
Given an influence constraint $(u,v,\o)$ and the boundaries $(L,U)$ of the convex sublattice of
parametrisations, the operator $\restrict_{(u,v,\o)}:\PG^2\to\PG^2$ refines $(L,U)$ to ensure the
satisfiability of the observability constraint.

\def\cst{cst}
In its simplest form, the observability criterion can be applied when for all the regulator states
$\omega\in\Omega_v$ but one, $L_{v,\omega}=U_{v,\omega}=\cst$ where $\cst\in\domv v$: in that case, it should be ensured
that for the remaining unique regulator state $\omega'\in\Omega_v$,
$L_{v,\omega'}\neq \cst$ and $U_{v,\omega'}\neq \cst$.
Intuitively, if all parameters of $v$ but one are fixed to the same value $\cst$, the remaining parameter
should take a different value, and hence, neither its upper or lower bound can be equal to $\cst$.
Although this simple measure ensures all influences of $v$ are observable,
it is only applicable in cases where having the value of the last not fixed regulator state $\omega'$ set to $\cst$ would lead to no influence of $v$ being observable.

Our definition generalises this reasoning to take into account the state of other regulators of $v$ and the monotonicity constraints.
This is especially true for the case when an influence $(u,v)$ is both observable and monotonic.
In such a case it is enough to ensure that the $\preceq_v$-minimal element $\omega$ has a lower value than the $\preceq_v$-maximal element $\omega'$,
which can be achieved by increasing the value of $L_{v,\omega'}$ and/or decreasing the value of $U_{v,\omega}$.

The formal definition is a little technical as it also accommodates the case when an observable $(u,v)$ is not monotonic,
which is achieved similarly to the simple case described in the beginning:
\begin{equation}
\restrict_{(u,v,o)}(\Plb,\Pub)\DEF
\begin{cases}
\emptylattice &\text{if $A_{u,v}(\Plb,\Pub)=\emptyset$}\\
(L',U') &\text{otherwise,}
\end{cases}
\end{equation}
where
\begin{align*}
A_{u,v}(\Plb,\Pub)&\DEF\{\omega\in\Omega_v\mid
\exists x_u\in\range 1 {\mv u}:\\
&\qquad L_{v,\subst \omega u {x_u}} <
U_{v,\subst \omega u {x_u-1}}\vee
U_{v,\subst \omega u {x_u}} >
L_{v,\subst \omega u {x_u-1}}\}
\end{align*}
and
$L' = \begin{cases}
	\update \Plb {v,\omega} + 1
	&\text{if $\overline{B}=\{\omega\}$}\\
	\Plb &\text{otherwise}
\end{cases}$
and
$U'=\begin{cases}
	\update \Pub {v,\omega} - 1
	&\text{if $\underline{B}=\{\omega\}$}\\
	\Pub &\text{otherwise}
\end{cases}$, with
\begin{align*}
\overline{B}&\DEF\{\omega\in A_{u,v}(\Plb,\Pub) \mid
\Plb_{v,\omega}<\Pub_{v,\omega}\wedge
\forall\omega'\in A_{u,v}(\Plb,\Pub): \\
& \qquad
\Plb_{v,\omega}=\Plb_{v,\omega'}
\wedge (\omega'\preceq_v\omega \vee\omega\parallel_{v}\omega')
\}\\
\underline{B}&\DEF\{\omega\in A_{u,v}(\Plb,\Pub) \mid
\Plb_{v,\omega}<\Pub_{v,\omega}\wedge
\forall\omega'\in A_{u,v}(\Plb,\Pub):\\
&\qquad
\Pub_{v,\omega}=\Pub_{v,\omega'}
\wedge (\omega\preceq_v\omega'\vee\omega\parallel_v\omega')
\}
\end{align*}

The set $A_{u,v}(\Plb,\Pub)$ is the set of regulator states $\omega\in\Omega_v$ for which there exist
parametrisations within $(\Plb,\Pub)$ where changing the value of $u$ changes the value of $v$.
Note that if this set is empty, $u$ has no observable influence: the empty lattice is then returned.

$\overline B$ (resp. $\underline B$) is the set of $\preceq_v$-maximal (resp. $\preceq_v$-minimal) elements $\omega\in A_{u,v}(\Plb,\Pub)$ such that
$\Plb_{v,\omega}<\Pub_{v,\omega}$, or an empty set if regulator states in $A_{u,v}(\Plb,\Pub)$ do not have the same lower (resp. upper) bound value.
Increasing (resp. decreasing) the lower (resp. upper) bound of any $\preceq_v$-maximal (resp. $\preceq_v$-minimal) element in $A_{u,v}(\Plb,\Pub)$ ensures observability of $(u,v)$ while respecting the monotonicity restrictions.
No restriction is made in case several maximal (resp. minimal) regulator states exist, in order to preserve all possible behaviours at the cost of an over-approximation.
Thus, the lower (resp. upper) bound is only modified if a unique $\preceq_v$-maximal (resp. $\preceq_v$-minimal) element exists. Note that $\preceq_v$-maximal (resp. $\preceq_v$-minimal) $\omega$ such that $\Plb_{v,\omega}=\Pub_{v,\omega}$ are excluded since their lower (resp. upper) bound cannot be increased (resp. decreased) any further.

The condition for all lower (resp. upper) bounds of regulator states in $A_{u,v}(\Plb,\Pub)$ to be equal is in place to prevent restrictions if $(u,v)$ is already observable under $\Plb$ (resp. $\Pub$). More precisely, $(u,v)$ is observable under $\Plb$ (resp. $\Pub$) when lower (resp. upper) bounds of elements in $A_{u,v}(\Plb,\Pub)$ differ and there exists a unique $\preceq_v$-maximal (resp. $\preceq_v$-minimal) regulatory state $\omega\in A_{u,v}(\Plb,\Pub)$ such that $\Plb_{v,\omega}<\Pub_{v,\omega}$.
We can, however, assume the existence of such $\omega$ without loss of generality, as no restriction takes place otherwise, regardless of lower (resp. upper) bound equality.

\begin{example}
Consider the PRN from figure \ref{fig:running_example} with the simple modification of node $a$ being boolean.
We enrich the PRN with a set of constraints $R=\{(b,c,\o)\}$ such that only the interaction $(b,c)$ is observable.
Let us assume that the boundary parametrisations $L,U$ have the following values for parameters of node $c$ regulation:
\begin{center}
\begin{tabular}{cc||cc}
$\omega_a$ & $\omega_b$ & $\Plb_{c,\omega}$ & $\Pub_{c,\omega}$\\\hline
0 & 0 & 0 & 1\\
0 & 1 & 0 & 0\\\hline
1 & 0 & 1 & 1\\
1 & 1 & 1 & 1\\
\end{tabular}
\end{center}
As $(b,c)$ interaction is observable, there exists no admissible parametrisation with $P_{c,\langle 00\rangle}=0$
(otherwise the value of $b$ never has an effect on the value of $c$).
Therefore $L_{c,\langle 00\rangle}$ should be changed to $1$.
With our definition, $A_{b,c}(\Plb,\Pub)=\{\langle 00\rangle,\langle 01\rangle\}$,
$\overline B = \{\langle 00\rangle\}$ and $\underline B=\emptyset$.
Observe that since no influences are monotonic in this case, all the elements of $A_{u,v}(\Plb,\Pub)$ are $\preceq_v$-maximal and minimal at the same time.
The observability enforcement, therefore, defaults to the simple mode of choosing the last regulator state with different lower and upper bounds, $\langle 00\rangle$.
$\underline B$ remains empty as not all regulator states in $A_{u,v}(\Plb,\Pub)$ share the same upper bound.

Let us now consider the same example, but with larger set of constraints $R=\{(a,c,+1),(b,c,+1),(b,c,\o)\}$, i.e., $a$ and $b$ now have
a positive-monotonic influence on $c$.
Let us again assume boundary parametrisations:
\begin{center}
\begin{tabular}{cc||cc}
$\omega_a$ & $\omega_b$ & $\Plb_{c,\omega}$ & $\Pub_{c,\omega}$\\\hline
0 & 0 & 0 & 1\\
0 & 1 & 0 & 1\\\hline
1 & 0 & 0 & 1\\
1 & 1 & 0 & 1\\
\end{tabular}
\end{center}
In this case $A_{b,c}(L,U) = \Omega_c$.
Remark that all the lower and upper bounds are identical.
Due to the monotonicity constraints, $\langle 00\rangle$ is the unique $\preceq_c$-minimal element
of $\Omega_c$, and $\langle 11\rangle$ is the unique $\preceq_c$-maximal element.
Hence $\overline{B}=\{\langle 11\rangle\}$ and $\underline B=\{\langle 00\rangle\}$.
Therefore, $L_{c,\langle 11\rangle}$ is set to $1$ and $U_{c,\langle 00\rangle}$ is set to $0$.
\end{example}

\paragraph{Ensure influence constraints $R$}
Finally, given the full set of influence constraints $R$,
the narrowing operator $\nabla_R:\PG^2\to\PG^2$ iteratively applies the operators
$\nabla_r$ for reach $r\in R$ until fixpoint:
\begin{equation}
\restrict_R(\Plb,\Pub) \DEF
\fpite{(\Plb,\Pub)}
    {\left(\bigcirc_{(u,v,r)\in R} \restrict_{(u,v,r)}\right)}
\end{equation}
where
$\bigcirc_{e\in\{e_1,\dots,e_n\}} f_e \DEF
    f_{e_1} \circ \cdots \circ f_{e_n}$.

\begin{remark}
The fixpoint is unique and is reachable in at most $\card{\Omega_v}^{\mv v+1}$ iterations.
\end{remark}

The abstract counterpart $\pabs_R(T)$ of $p_R(T)$ is then defined as follows.
\begin{definition}[$\pabs_R(T)$]
Let $G_m$ be a PRN with well-formed influence constraints $R$.
The \emph{abstraction of the parametrisation set enabling a set of transitions} $T\subseteq\Delta(G_m)$
is defined inductively as follows:
\begin{align*}
\pabs_R(\emptyset) &\DEF \restrict_R(\lb{\PG},\ub{\PG}) \\
\pabs_R(T \cup \{t = x\xrightarrow{v,s}y\}) &\DEF
    \restrict_{\{(u,v',r) \in R\mid v=v'\}}
        \left(\restrict_t(\pabs_R(T))\right)
\end{align*}
\end{definition}

As stated in the following theorem, the defined narrowing operators actually lead to the best possible abstraction of the concrete
$p_R(T)$ by the means of a convex sublattice.
\begin{theorem}
$\pabs_R(T)$
is the smallest convex sublattice including $p_R(T)$
\label{thm:pabsTR}
\end{theorem}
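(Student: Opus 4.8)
The plan is to reduce the claim to a coordinate-wise characterisation of the two bounds of $\pabs_R(T)$. Recall that any interval $[\Plb,\Pub]$ is a convex sublattice, and that the smallest convex sublattice containing an arbitrary subset $S\subseteq\PG$ is exactly $[\bigwedge S,\bigvee S]$: any convex sublattice containing $S$ must contain the meet and the join of $S$ (by finiteness and closure under $\wedge,\vee$) and hence, by convexity, the whole interval between them. Since $\bigwedge$ and $\bigvee$ are computed coordinate-wise in the product lattice $\PG$, it therefore suffices to prove that, writing $\pabs_R(T)=(\Plb,\Pub)$, for every coordinate $\langle v,\omega\rangle$
\[
\Plb_{v,\omega}=\min\{P_{v,\omega}\mid P\in p_R(T)\}
\qquad\text{and}\qquad
\Pub_{v,\omega}=\max\{P_{v,\omega}\mid P\in p_R(T)\},
\]
both sides read as $\emptylattice$ when $p_R(T)=\emptyset$. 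Each equality I would split into a \emph{soundness} inequality ($\Plb$ is a lower bound of $p_R(T)$, dually for $\Pub$) and a \emph{tightness} witness (the bound is attained by some element of $p_R(T)$).

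First I would prove $p_R(T)\subseteq\pabs_R(T)$ by induction on the inductive definition of $\pabs_R(T)$, maintaining the invariant that $(\Plb,\Pub)$ contains $p_R(T)$ and checking that each narrowing operator preserves it. For $\restrict_t$ the argument is that of Theorem~\ref{thm:pabsT}. For the monotonicity operator $\restrict_{(u,v,s)}$ one uses that every $P\in\P_{(u,v,s)}$ satisfies $P_{v,\omega}\ge P_{v,\subst\omega u{\omega_u-s}}\ge\Plb_{v,\subst\omega u{\omega_u-s}}$, so raising $\Plb_{v,\omega}$ to that value loses no $P$; the symmetric statement handles $\Pub$, and passing to the fixpoint is harmless since it only increases $\Plb$ and decreases $\Pub$ monotonically. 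For the observability operator $\restrict_{(u,v,\o)}$ the crucial point is that, when $\overline B=\{\omega\}$, the increment $\update\Plb{v,\omega}+1$ is forced on all of $p_R(T)$: if some $P\in p_R(T)$ had $P_{v,\omega}=\Plb_{v,\omega}$, then monotonicity together with the fact that all states of $A_{u,v}(\Plb,\Pub)$ share the lower bound $\Plb_{v,\omega}$ would pin every observability-relevant parameter of $v$ to a single value, contradicting $P\in\P_{(u,v,\o)}$; the case $\underline B=\{\omega\}$ is dual. I would also record that re-applying only the constraints whose regulated node is $v$ after $\restrict_t$ suffices, since $\restrict_t$ and those constraints only touch coordinates $\langle v,\cdot\rangle$, leaving every other constraint already satisfied.

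Next I would establish tightness by exhibiting, coordinate by coordinate, a witness in $p_R(T)$ attaining each bound. When observability is inactive, the witness for the lower bound is $\Plb$ itself: at the fixpoint $\Plb$ is monotone; it enables every increasing transition of $T$ by the $\restrict_t$ narrowing of $\Plb$; and it enables every decreasing transition because $\Plb\le\Pub$ while $\restrict_t$ has already pushed $\Pub$ down to the required value at the relevant coordinate. Hence $\Plb\in p_R(T)$, and dually $\Pub\in p_R(T)$. To handle observability, which is existential and therefore non-convex, I would start from $\Plb$ (resp. $\Pub$) and perturb it on a single coordinate $\langle v,\omega'\rangle$ with $\Plb_{v,\omega'}<\Pub_{v,\omega'}$, chosen so as to witness the relevant $(u,v,\o)$ while leaving the target coordinate and all monotonicity relations intact; the sets $\overline B$ and $\underline B$ identify exactly when such a free coordinate is available, which is precisely the situation in which the bound must be shown attainable.

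The main obstacle will be the observability operator on both sides. Because its concrete counterpart $\P_{(u,v,\o)}$ is not a sublattice, one cannot argue by intersecting sublattices as in Property~\ref{pty:pT-convex}; instead one must show directly that a single convex interval captures the coordinate-wise extrema of a non-convex set. Concretely, the delicate steps are (i) justifying in the soundness argument that the $+1$ (resp. $-1$) increment attached to a \emph{unique} $\preceq_v$-maximal (resp. minimal) state is genuinely forced on every element of $p_R(T)$, which needs the full case analysis behind $A_{u,v}(\Plb,\Pub)$, $\overline B$, $\underline B$ and the treatment of non-monotonic influences via $\parallel_v$; and (ii) constructing, in the tightness argument, witnesses that simultaneously enable $T$, respect all monotonicity constraints, realise each observability constraint, and fix the target coordinate to its computed value. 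Finally I would invoke the well-definedness of the fixpoint $\restrict_R$ asserted in the preceding remark (confluence and termination) to make the inductive soundness argument and the witness construction coherent.
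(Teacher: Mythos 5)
Your decomposition into soundness ($p_R(T)\subseteq\pabs_R(T)$) and tightness ($\pabs_R(T)\subseteq[p_R(T)]$) is exactly the paper's, and your soundness half — induction over the narrowing applications, with the monotonicity operator justified via $P_{v,\omega}\geq P_{v,\subst\omega u{\omega_u-s}}\geq\Plb_{v,\subst\omega u{\omega_u-s}}$ and the observability operator justified by showing the $+1$ increment is forced — mirrors the paper's inner induction over the chain of $\restrict_r$ calls. Where you genuinely diverge is the tightness direction: you propose a direct, coordinate-wise witness construction ($\Plb_{v,\omega}$ and $\Pub_{v,\omega}$ are each attained by some element of $p_R(T)$), whereas the paper argues incrementally by induction on $|T|$, showing that whenever $[p_R(T\cup\{t\})]$ is strictly smaller than $[p_R(T)]$ at some coordinate, one of $\restrict_t$, a monotonicity narrowing, or an observability narrowing must fire to match it. Your route is conceptually cleaner (it also yields the non-emptiness corollary immediately), but it front-loads all the difficulty into the witness construction; the paper's route spreads that difficulty over a case analysis on how a single new transition can move a bound.

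The one place where your plan is thinner than it should be is precisely that witness construction under observability. Your global witnesses $\Plb$ and $\Pub$ work only in the monotonicity-only case; once observability constraints are present you must produce, for each fixed target coordinate $\langle v,\omega\rangle$, a parametrisation that (a) keeps the value $\Plb_{v,\omega}$ at the target, (b) respects every monotonicity constraint after the perturbation, and (c) witnesses \emph{every} observability constraint on $v$ simultaneously. Requirement (c) is rescued by the paper's Lemma~\ref{lem:observ_parity} (perturbing a single coordinate of a non-observing parametrisation makes \emph{all} influences observable at once), and requirements (a)+(b) together need a two-coordinate version of the construction — you must show that fixing the target to its bound still leaves a $\preceq_v$-maximal (or $\parallel_v$-incomparable) free coordinate to perturb, and that when no such coordinate exists the narrowing operator has already moved the bound or returned $\emptylattice$. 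That is the content of the paper's Theorem~\ref{lem:density} together with Lemmas~\ref{lem:mono-order-bounds} and~\ref{lem:mono-restrict}; without an analogue of it your tightness argument is an outline rather than a proof. Stating and proving that density-style lemma is the real work, and your plan should name it as a required lemma rather than fold it into the phrase ``chosen so as to witness the relevant $(u,v,\o)$''.
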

The theorem can be proven using mathematical induction on the size of the set $T$. The induction corresponds to the actual application where restrictions generally happen by adding one transition at a time. Within the induction step, the proof is split into two branches.
First containing the proof of soundness (i.e. the smallest convex sublattice containing $p_R(T)$ is included within $\pabs_R(T)$), once again using induction on application of $\restrict_r$ for individual constraints $r\in R$. The inner induction step details a discussion showing that for every restriction that occurs, the smallest convex sublattice must also be accordingly smaller compared to the one obtained for one transition less.
The second branch proves that $\pabs_R(T)$ is the best over-approximation (i.e. $\pabs_R(T)$ is contained within the smallest convex sublattice containing $p_R(T)$). Here the discussion goes the other way saying that if the smallest convex sublattice with the extra transition is smaller, then a restriction must have occurred to reflect the change in $\pabs_R(T)$.
Together the two branches give the coveted equality.

The proof relies on several properties of the interplay between the parametrisation set and influence constraints.
Namely, a very important property could be referred to as the density of the parametrisation set.
More precisely, given a parametrisation set $p_R(T)$ for some $T$, two arbitrary regulator states $\omega,\omega'\in\Omega_v$ for some $v$ and arbitrary parameter values $k\in\{\lb{p_R(T)}_{v,\omega},\dots,\ub{p_R(T)}_{v,\omega}\}$ for $\omega$ and $l\in\{\lb{p_R(T)}_{v,\omega'},\dots,\ub{p_R(T)}_{v,\omega'}\}$ for $\omega'$, it is only under specific conditions imposed by the constraints in $R$, that no parametrisation $P$ such that $P_{v,\omega}=k$ and $P_{v,\omega'}=l$ belongs to $p_R(T)$.

As the proof contains a considerable amount of technical discussion on different constraint types and constraint-transition relations it has been omitted within this section and is instead given in~\ref{app:proof-best-approx} alongside auxiliary lemmas.



Because of the optimality of the abstraction, one can then derive that
$\pabs_R(T)$ is not the empty lattice if and only if
$p_R(T)$ is not empty:
for any set of transitions $T\subseteq\Delta(G_m)$,
there exists a parametrisation $P$ such that $T\subseteq\Delta(G_m,P)$
if and only if
$\pabs_R(T)\neq\emptylattice \Leftrightarrow p_R(T)\neq\emptyset$.
\begin{corollary}
Given a sequence of transitions $\pi = x\to \dots \to y$ in $\Delta(G_m)$,
\[ \pabs_R(\tilde{\pi}) \neq \emptylattice \Longleftrightarrow
    p_R(\tilde{\pi}) \neq \emptyset \]
\end{corollary}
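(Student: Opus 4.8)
The plan is to obtain the corollary as an immediate consequence of Theorem~\ref{thm:pabsTR}, instantiated at $T=\toset\pi$: it states that $\pabs_R(\toset\pi)$ is the smallest convex sublattice including $p_R(\toset\pi)$. The equivalence then reduces to checking that such an optimal convex sublattice is empty exactly when the concrete set it abstracts is empty, which I would verify in each direction separately.

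First I would treat the implication from right to left. Assuming $p_R(\toset\pi)\neq\emptyset$, Theorem~\ref{thm:pabsTR} guarantees that $\pabs_R(\toset\pi)$ includes $p_R(\toset\pi)$, so it contains at least one parametrisation and is therefore not the empty lattice, i.e.\ $\pabs_R(\toset\pi)\neq\emptylattice$.

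For the converse I would argue by contraposition. Suppose $p_R(\toset\pi)=\emptyset$. Since the empty set is contained in every convex sublattice, and $\emptylattice$ is by convention the least element among them, the smallest convex sublattice including $p_R(\toset\pi)$ is $\emptylattice$ itself; by the optimality asserted in Theorem~\ref{thm:pabsTR} this forces $\pabs_R(\toset\pi)=\emptylattice$. Contrapositively, $\pabs_R(\toset\pi)\neq\emptylattice$ entails $p_R(\toset\pi)\neq\emptyset$, which closes the equivalence.

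There is no genuine obstacle here, as all the work has been delegated to the optimality of the abstraction: Theorem~\ref{thm:pabsTR} ensures that the narrowing operators neither discard a feasible parametrisation nor fabricate one, so (non)emptiness transfers faithfully between the concrete and abstract sides. The only point requiring care is the bookkeeping convention that $\emptylattice$ is the smallest convex sublattice; with it in place, the corollary also carries the intended semantic reading that $\pabs_R(\toset\pi)\neq\emptylattice$ holds precisely when some DRN $(G_m,P)$ whose parametrisation $P$ satisfies the constraints $R$ realises the trace $\pi$.
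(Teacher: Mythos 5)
Your proposal is correct and matches the paper's own (implicit) argument: the corollary is derived directly from the optimality statement of Theorem~\ref{thm:pabsTR}, with the same convention that the smallest convex sublattice containing the empty set is $\emptylattice$ (a convention the paper also invokes explicitly inside the proof of that theorem).
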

Contrary to the case without influence constraints, there is no guarantee that each parametrisation
within the convex sublattice is in $p_R(T)$.
However, it is guaranteed that all the parametrisations in $p_R(T)$ are in $\pabs_R(T)$.

\begin{example}
Let us reconsider the example at the end of Sect.~\ref{sec:representation}
for the PRN $G_m$ of Fig.~\ref{fig:running_example} with transitions
$T=\{ 110\xrightarrow{c,+} 111; 111 \xrightarrow{b,-1} 101 \}$.
Recall that
$\pabs(T)=(L,U)=(\langle 00000000100\rangle,\langle 22210111111\rangle)$,
as illustrated by Fig.~\ref{fig:lattice}.

Let us assume the influence constraints $R=\{(a,c,+1),(b,b,\o)\}$.
Considering first the positive-monotonic influence of $a$ on $c$, it results that
$P_{c,\langle 21\rangle}$ should be greater or equal to $P_{c,\langle 11\rangle}$
(in particular $\langle 01\rangle\preceq_c\langle 11\rangle\preceq_c\langle 21\rangle$).
Because $L_{c,\langle 11\rangle}=1$, the operator $\restrict_{(a,c,+1)}(L,U)$ increases the lower
bound $L_{c,\langle 21\rangle}$ from 0 to 1
(parametrisation $\langle 00000000101\rangle$ illustrated in Fig.~\ref{fig:lattice}, the component
$c,\langle 21\rangle$ being marked with $\Sd$).

No further adjustments can be made based on $(a,c,+1)$ at this point, thus the observability constraint $(b,b,\o)$ is applied.
Because $L_{b,\langle 1\rangle}=U_{b,\langle 1\rangle}=0$,
the operator $\restrict_{(b,b,\o)}$ increases the lower bound
$L_{b,\langle 0\rangle}$ from 0 to 1.
At this point we should return to $(a,c,+1)$ to check if the intermediate modifications allow for more restrictions,
however, in this example, the fixed point is reached with $\restrict_r$ being applied only once for both constraints in $R$.

It results that $\pabs_R(T)=(\langle 00010000101\rangle,\langle 22210111111\rangle)$, which is
illustrated by the yellow area in Fig.~\ref{fig:lattice}.
\end{example}

\section{Unfolding Semantics for Parametric Regulatory Networks}
\label{sec:unfolding}

Unfolding semantics \cite{esparza02} for (safe) Petri nets
are used for exploring feasible sequences of transitions without the redundancy of 
investigating different interleavings of the same process
which differ only in the ordering of concurrent transitions.
Here, we say that two distinct transitions $t$ and $t'$ are \emph{concurrent}
if, from any state that enables both of them,
one may fire $t$ followed by $t'$, $t'$ followed by $t$,
or both at the same time, and still reach the same final state.
Unfoldings simply keep track of such concurrent occurrences
by storing them in a partial order built recursively
from a representation of the initial marking by a set of places,
and applying the Petri net dynamics locally;
see below for a formal definition.
The resulting structure is called an \emph{occurrence net}:
a bipartite, acyclic graph with some additional properties,
giving a partial order representation of the net's semantics.

Large biological networks are expected to show a high degree of concurrency, as
the value of each node typically depends on only a few nodes compared to the
size of the network.
Therefore, concurrency-aware methods can enhance greatly the tractability of the
analysis of the reachable state space \cite{chatain14}.

In this section, we introduce an unfolding semantics for Parametric Regulatory
Networks with the aim of exploring the reachable transitions and
associated parametrisation sets while avoiding the combinatorial explosion of
interleaving due to concurrent transitions.

Sect.~\ref{sec:onet} gives the definition of an occurrence net which will be
generated from PRNs unfolding semantics.
Sect.~\ref{sec:def-unfolding} establishes the unfolding semantics of PRNs with
concrete and abstract parametrisation space.
Usually, an unfolding is infinite (as soon as a cycle is possible).
Sect.~\ref{sec:complix} details how a complete finite prefix of this unfolding
can be derived in order to obtain a finite occurrence net from which can be
extracted all reachable states and associated concrete or abstract parametrisation space.

\subsection{Occurrence net: events, conditions, and configurations}
\label{sec:onet}

Here we give a brief definition of \emph{occurence net} as a special type of \emph{event structure}~\cite{nielsen81}.

\begin{definition}
An  \emph{occurrence net} $\mathcal O=\langle E, C, F, C_0\rangle$
is a bipartite acyclic digraph between
\emph{events} $E$ and \emph{conditions} $C$
with edges $F\subseteq (E\times C) \cup (C\times E)$
and set of \emph{initial conditions} $C_0\subseteq C$ on which we define:
\begin{itemize}
\item the \emph{pre-set} and \emph{post-set}
of a node $n\in E\cup C$ as
$\pre n\DEF\{ m \in E\cup C\mid (m,n)\in F\}$
and
$\post n\DEF\{m \in E\cup C\mid (n,m)\in F\}$,
respectively;
\item the \emph{causal relation} $\prec\, \subseteq E\times E$ among events
such that
$e'\prec e \EQDEF$ there exists a non-empty path from $e'$ to $e$ in $F$;
\item the \emph{conflict relation} $\conflict \subseteq E\times E$ among events such that
$e'\conflict e\EQDEF \exists u,v\in E$ s.t. $u\neq v$,  $u=e'\vee u\prec e'$,
and $v=e\vee v\prec e$ with $\pre u\cap\pre v\neq\emptyset$;
\end{itemize}
and which satisfy:
\begin{itemize}
\item for all conditions $c\in C$,
    $\card{\pre c}\leq 1$;
    and $\pre c=\emptyset \Leftrightarrow c\in C_0$;
\item $\forall e\in E$, $\neg(e\conflict e)$.
\end{itemize}

A set of events $\cfg\subseteq E$ is a \emph{configuration}
if and only if
for all events $e\in\cfg$,
$\{ e'\in E \mid e'<e\}\subseteq\cfg$
and for all events $e,e'\in\cfg$,
$\neg(e\conflict e')$.
Given an event $e\in E$, we denote the \emph{minimal configuration} containing $e$ by
$\mincfg e=\{e'\in E\mid e'\leq e\}$.
\end{definition}

We extend pre- and post-set notations to sets of nodes:
for any $N\subseteq E\cup C$,
$\pre N\DEF\cup_{n\in N}\pre n$
and
$\post N\DEF\cup_{n\in N}\post n$.

\subsubsection{Application to PRNs}

Let us assume a PRN $G_m$ with $G=(V,I)$ and influence constraints $R$.

A \emph{condition} $c\in C$ is characterised by a triplet
$\langle e,v,j\rangle$
where $e\in E\cup\{\bot\}$ is the parent event of $c$,
or $\bot$ if $c\in C_0$,
$v\in V$ is a node of the PRN,
and $j\in\domv v$ one of its possible values.

An \emph{event} $e\in E$ corresponds to the increase or decrease of a node
value,
and is characterised by a triplet
$\langle C',v,s\rangle$
where
$C'\subseteq C$ is the set of parents conditions of $e$, referred to as
\emph{pre-conditions},
$v\in V$ is a node of the PRN,
and $s\in\{+1,-1\}$,
which satisfy:
\begin{itemize}
\item $\card{C'}=\card{\{v\}\cup\innodes v}$,
i.e., the number of pre-conditions is the number of regulators of $v$ plus $v$
itself;
\item $\forall u\in \innodes v$, $\exists \langle e',u',j\rangle\in C'$
with $u=u'$,
i.e., for each regulator $u$ of $v$ there is a corresponding condition in $C'$;
\item $\exists \langle e',v',j\rangle\in C'$ with $v=v'$
and such that
$j+s \in\domv v$,
i.e., there is a pre-condition corresponding to a value of $v$ which allows the
change by $s$.
\end{itemize}
We denote by $\omega_v(C') \DEF\prod_{u\in\innodes v} \{ j \mid (e',u,j) \in C'\} \in\Omega_v$ the
state of regulators of $v$.
Remark that
from every state $x\in S(G_m)$ of the PRN $G_m$ where
$\omega_v(x) = \omega_v(C')$
and
$x_v = j$
with $\langle e',v,j\rangle\in C'$,
$t = x\xrightarrow{v,s}\subst x v {x_v+s}$ 
is a transition of the PRN.
Moreover, remark that since the value of $s$ is fixed by $e$, any such $t$ has the same $\P_t$ and $\restrict_t$ as both values depend solely on $\omega_v(C')$ contrary to the whole state.
Therefore, we use $\P_e$ and $\restrict_e$ to refer to the common values of $\P_t$ and $\restrict_t$ respectively.

Subsequently, given any configuration $\cfg$, we use
$p(\cfg)$, $p_R(\cfg)$, $\pabs(\cfg)$, and $\pabs_R(\cfg)$
to denote the corresponding
$p(T)$, $p_R(T)$, $\pabs(T)$, and $\pabs_R(T)$.

Finally, we denote the terminal set of conditions of a configuration $\cfg\subseteq E$ as $\cut\cfg = (C_0 \cup \post\cfg)\setminus\pre\cfg$.
Remark that due to the nature of the events, there is a unique condition in the $\cut\cfg$ for every node of the PRN.
This allows us to define the state reached by application of a configuration $\cfg\subseteq E$ as $X(\cfg)\DEF x\in S(G_m)$ such that
$\forall \langle e,v,j\rangle\in\cut\cfg, x_v=j$.

\subsection{Unfolding of Parametric Regulatory Networks}
\label{sec:def-unfolding}

Given an initial state $x^0\in S(G_m)$ of the PRN $G_m$ with influence
constraints $R$,
its unfolding is the unique maximal occurrence net
$\mathcal U=\langle E,C,F,C_0\rangle$
with $C_0 = \{ \langle\bot,v,x^0_v\rangle\mid v\in V\}$,
$C_0 \subseteq C$, and
such that for any event $e\in E$,
$p_R(\mincfg e) \neq \emptyset$~\cite{engelfriet91}.

The unfolding $\mathcal U$ is typically infinite, and its set of events $E$ and
conditions $C$ can be defined inductively as follows:
\begin{enumerate}[label=(\roman*)]
\item Start with $C:=C_0$ and $E:=\emptyset$.
\item An event $e=\langle C',v,s\rangle$ is a possible extension of the
unfolding if and only if $C'\subseteq C$, 
$\mincfg e$ is a configuration, and $p_R(\mincfg e)\neq\emptyset$.
In such a case, $e$ is added as child of each condition $c\in C'$,
together with
new conditions $\langle e,u,j\rangle$ for each $\langle e',u,j\rangle\in C'$, $u\neq v$,
and the condition
$\langle e,v,j+s\rangle$ where $\langle e',v,j\rangle\in C'$,
all being children of $e$.
\end{enumerate}

Remark that $p_R(\mincfg e)$ can be computed inductively as:
\[
p_R(\mincfg{e=\langle C',v,s\rangle}) =
\P_e \cap \left(\bigcap_{c\in C'} p_R(\mincfg{\pre c})\right)
\]
Therefore, one can store with each event $e$ its parametrisation space
$p_R(\mincfg e)$ and re-use it when computing causally related events.

Similarly, we can relax $p_R$ with its abstraction $\pabs_R$.
Also remark that $\pabs_R(\mincfg e)$ can be computed inductively as:
\[
\pabs_R(\mincfg{e=\langle C',v,s\rangle}) =
    \restrict_{\{(u,v',r) \in R\mid v=v'\}}
        \left(\restrict_e
\left(\bigcap_{c\in C'} \pabs_R(\mincfg{\pre c})\right)
        \right)
\]
where
$(L,U)\cap(L',U') \DEF (\max(L,L'),\min(U,U'))$.

\medskip

The construction ensures that for each sequence of transitions $\pi=x^0\to\dots\to y$
which is realisable for the concrete semantics of PRNs with constraints $R$,
i.e., such that $p_R(\toset \pi)\neq\emptyset$,
there exists a configuration $\cfg$ of $\mathcal U$ composed of the
corresponding events and such that $X(\cfg)=y$ and
$p_R(\cfg)\neq\emptyset$, or equivalently, $\pabs_R(\cfg)\neq\emptylattice$.

\tikzstyle{condition}=[circle,draw]
\tikzstyle{event}=[rectangle,draw]
\tikzstyle{cutoff event}=[event,dashed]
\tikzstyle{initial condition}=[condition,fill=blue!30]
\begin{example}
Fig.~\ref{fig:unfolding} shows a partial unfolding of the PRN $G_m$ from Fig.~\ref{fig:running_example}
with the influence constraints $R=\{(a,a,-),(b,b,-),(a,c,+),(b,c,+)\}$,
i.e., $a$ and $b$ auto-inhibit themselves and both activate $c$.
\begin{figure}[t]
\centering
\scalebox{0.85}{
\begin{tikzpicture}[>=latex,line join=bevel,]
\node (c22) at (254.0bp,129.5bp) [condition] {$c_1$};
  \node (c23) at (254.0bp,168.5bp) [condition] {$b_1$};
  \node (c0) at (11.0bp,105.5bp) [initial condition] {$a_0$};
  \node (c24) at (376.0bp,89.5bp) [condition] {$a_1$};
  \node (c11) at (254.0bp,89.5bp) [condition] {$a_0$};
  \node (e14) at (315.0bp,89.5bp) [cutoff event,label=above:$e_5$] {$a+$};
  \node (e0) at (72.0bp,144.5bp) [event,label=above:$e_1$] {$a+$};
  \node (c13) at (254.0bp,49.5bp) [condition] {$b_1$};
  \node (c21) at (254.0bp,208.5bp) [condition] {$a_1$};
  \node (c12) at (254.0bp,10.5bp) [condition] {$c_1$};
  \node (c3) at (133.0bp,163.5bp) [condition] {$a_1$};
  \node (c2) at (133.0bp,123.5bp) [initial condition] {$c_0$};
  \node (e7) at (193.5bp,138.5bp) [event,label=above:$e_3$] {$c+$};
  \node (e6) at (193.5bp,64.5bp) [event,label=above:$e_4$] {$c+$};
  \node (e1) at (72.0bp,42.5bp) [event,label=above:$e_2$] {$b+$};
  \node (c1) at (11.0bp,42.5bp) [initial condition] {$b_0$};
  \node (c4) at (133.0bp,45.5bp) [condition] {$b_1$};
  \draw [->] (e14) ..controls (336.55bp,89.5bp) and (346.06bp,89.5bp)  .. (c24);
  \draw [->] (e1) ..controls (93.218bp,43.528bp) and (103.27bp,44.039bp)  .. (c4);
  \draw [->] (e6) ..controls (216.24bp,44.434bp) and (228.43bp,33.178bp)  .. (c12);
  \draw [->] (c0) ..controls (50.057bp,96.856bp) and (130.44bp,78.596bp)  .. (e6);
  \draw [->] (e7) ..controls (215.39bp,149.21bp) and (226.16bp,154.74bp)  .. (c23);
  \draw [->] (e6) ..controls (214.93bp,59.26bp) and (224.98bp,56.684bp)  .. (c13);
  \draw [->] (e7) ..controls (214.85bp,135.37bp) and (224.76bp,133.84bp)  .. (c22);
  \draw [->] (e6) ..controls (215.02bp,73.269bp) and (225.2bp,77.621bp)  .. (c11);
  \draw [->] (e7) ..controls (215.7bp,163.87bp) and (229.46bp,180.33bp)  .. (c21);
  \draw [->] (c1) ..controls (28.894bp,42.5bp) and (38.913bp,42.5bp)  .. (e1);
  \draw [->] (c4) ..controls (148.23bp,68.121bp) and (166.03bp,96.425bp)  .. (e7);
  \draw [->] (c3) ..controls (150.61bp,156.4bp) and (160.87bp,152.02bp)  .. (e7);
  \draw [->] (c2) ..controls (150.53bp,127.74bp) and (160.65bp,130.33bp)  .. (e7);
  \draw [->] (c4) ..controls (150.41bp,50.829bp) and (160.89bp,54.233bp)  .. (e6);
  \draw [->] (c2) ..controls (149.01bp,108.37bp) and (161.89bp,95.371bp)  .. (e6);
  \draw [->] (e0) ..controls (93.914bp,151.24bp) and (103.99bp,154.48bp)  .. (c3);
  \draw [->] (c11) ..controls (272.02bp,89.5bp) and (281.69bp,89.5bp)  .. (e14);
  \draw [->] (c0) ..controls (28.166bp,116.19bp) and (39.237bp,123.5bp)  .. (e0);
\end{tikzpicture}}
\caption{Excerpt of the unfolding of the PRN from Fig.~\ref{fig:running_example} with monotonic constraints $R=\{(a,a,-),(b,b,-),(a,c,+),(b,c,+)\}$.
Conditions are drawn as circles and are labelled with the corresponding node value;
events are drawn as boxes and are labelled with the corresponding node value increase or decrease.
Initial conditions are filled with light blue.
Dashed event ($e_5$) will be declared as \emph{cut-off} (Sect.~\ref{sec:complix}).
}
\label{fig:unfolding}
\end{figure}
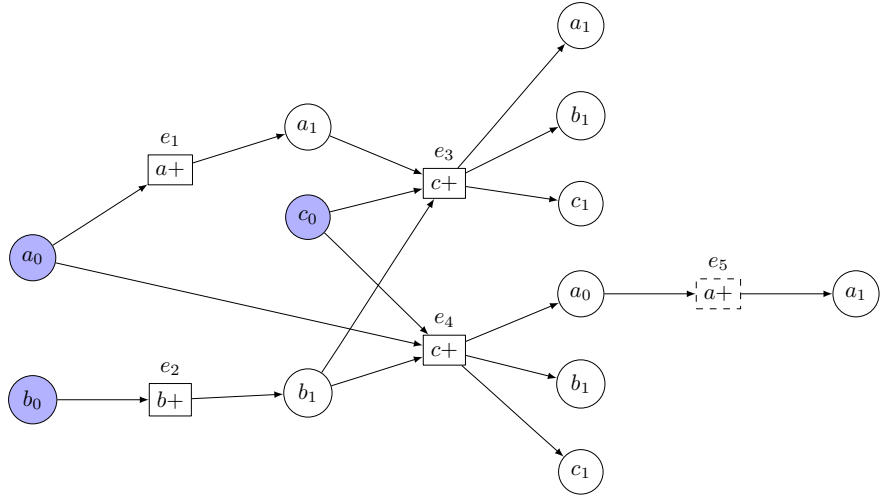

First, notice that events $e_1$ and $e_2$ are concurrent: one can apply them in any order and end in the
state $\langle a=1,b=1,c=0\rangle$.
In a classical state graph computation, this would generate 4 transitions ($a+$ then $b+$; $b+$ then
$a+$) instead of these two events.
The states reached by each of these two events are
$X(\{e_1\}) = \langle 1,0,0\rangle$
and
$X(\{e_2\}) = \langle 0,1,0\rangle$.

Let us consider the configuration $\mincfg{e_3} = \{e_1,e_2,e_3\}$.
The abstract parametrisation space
$\pabs_R(\mincfg{e_3})=(L^{e_3},U^{e_3})$
results in
$L^{e_3}_{a,\langle 0\rangle}=1$ (due to $e_1$),
$L^{e_3}_{b,\langle 0\rangle}=U^{e_3}_{b,\langle 0\rangle}=1$ (due to $e_2$),
$L^{e_3}_{c,\langle 1,1\rangle}=U^{e_3}_{b,\langle 1,1\rangle}=1$ (due to $e_3$),
and
$L^{e_3}_{c,\langle 2,1\rangle}=U^{e_3}_{b,\langle 2,1\rangle}=1$ (due to $(a,c,+)\in R$).

Then, let us consider the configuration $\mincfg{e_4}=\{e_2,e_4\}$.
The abstract parametrisation space
$\pabs_R(\mincfg{e_4})=(L^{e_4},U^{e_4})$
results in
$L^{e_3}_{b,\langle 0\rangle}=U^{e_3}_{b,\langle 0\rangle}=1$ (due to $e_2$),
$L^{e_3}_{c,\langle 0,1\rangle}=U^{e_3}_{b,\langle 0,1\rangle}=1$ (due to $e_4$),
and,
$L^{e_3}_{c,\langle 1,1\rangle}=U^{e_3}_{b,\langle 1,1\rangle}=L^{e_3}_{c,\langle 2,1\rangle}=U^{e_3}_{b,\langle 2,1\rangle}=1$ (due to $(a,c,+)\in R$).

It results that $\pabs_R(\mincfg{e_3})$ and  $\pabs_R(\mincfg{e_4})$ are incomparable:
indeed, whereas the first has more restrictions on the parameters of node $a$,
the latter has more constraints on the parameter $P_{c,\langle 0,1\rangle}$, i.e., when $b$ is
active and not $a$.
Intuitively, the configuration $\mincfg{e_4}$ corresponds to the case when the logic of $c$
activation is a disjunction between $a$ and $b$; where $\mincfg{e_3}$
matches with an \emph{and} logic for $c$ activation, but has observed an increase of $a$.

Finally, the extension of $\mincfg{e_4}$ with $e_5$ leads to
a parametrisation space included in $\pabs_R(\mincfg{e_3})$:
indeed,
$\pabs_R(\mincfg{e_5})$ refines the lower bound of $\pabs_R(\mincfg{e_4})$
for the parameter $P_{a,\langle 0\rangle}$, similarly to $\pabs_R(\mincfg{e_1})$.
Moreover, remark that $X(\mincfg{e_3})=X(\mincfg{e_5})=\langle 1,1,1\rangle$.
Therefore, to any extension of $\mincfg{e_5}$ corresponds an equivalent extension of $\mincfg{e_3}$.
\end{example}

\subsection{Complete finite prefix}
\label{sec:complix}

In the general case, the unfolding of a PRN is infinite.
As the unfolding is a representation of all the processes of the network and the number of states is
finite, there exist finite prefixes of the unfolding from which all the
configurations can be reconstructed, and in particular all reachable states can be recovered.
As in~\cite{esparza02},
we refer to such a prefix as a \emph{complete finite prefix} (CFP), and show below a possible construction.

Our construction follows the same principle as the construction of CFP for safe
Petri nets \cite{esparza02}, with the additional care of parametrisation spaces.
Essentially, the main idea is to detect during the construction
configurations from which can be derived the equivalent extensions.
In such cases, only one configuration should be extended, and the others
stopped: their last event is marked as a \emph{cut-off}.
As demonstrated in \cite{esparza02}, the completeness of a prefix can be guaranteed as soon as the
computation of extensions and cut-offs is performed in a specific order,
so-called \emph{total adequate order}.

In the remainder of this section,
we extend the total adequate order and cut-off used for safe Petri net unfolding to
the PRN unfolding, which results in an algorithm for the CFP of PRN unfolding.

\subsubsection{A total adequate ordering of PRN configurations}


We construct a total adequate order of PRN unfolding configurations based on the total adequate order over configurations of Petri net unfolding as introduced in~\cite{esparza02}.
The Petri net total adequate order relies on recording the number of instances of each Petri net transition in the configuration in a structure similar to a Parikh vector.
Furthermore, Foata normal forms are used to refine the records with respect to causality in cases where records of two configurations are identical.
Our approach differs solely in recording the number of node $v$ value changes per regulatory context $\omega$, for each $(v,\omega)\in\Omega$ instead of keeping record for each Petri net transition.


We define $\varphi(\cfg)$ as the Parikh vector associated to a configuration $\cfg\subseteq E$
as a $\card{\Omega}$ dimensional vector, where, for each $(v,\omega)\in\Omega$,
we associate the number of corresponding events in $\cfg$:
\[
\varphi(\cfg)_{v,\omega} \DEF \card{\{\langle C',v,s\rangle\in\cfg\mid \omega = \omega_v(C')\}}
\]

The Foata normal form serves to distinguish between configurations based on causal constraints.
Before the definition of Foata normal form, we introduce a partition of a configuration $\cfg$ into causal layers defined iteratively as follows:
\begin{enumerate}[label=(\roman*)]
\item $E^\cfg_1\DEF\{e\in \cfg\mid\forall e'\in \cfg:\neg(e'\prec e)\}$
\item For $1<i\in\mathbb{N}:E^\cfg_i=\{e\in\cfg\setminus\bigcup_{j<i}E^\cfg_j\mid\forall e'\in \cfg:e'\prec e\Rightarrow e'\in\bigcup_{j<i}E^\cfg_j\}$
\end{enumerate}
The Foata normal form of a configuration $\cfg$ is defined as a vector $FC(\cfg)\DEF(\varphi(E^\cfg_1),\dots,\varphi(E^\cfg_k))$,
where $k\in\mathbb{N}$ is the larget natural number such that $E^\cfg_k\neq\emptyset$.
Such $k$ is guaranteed to exist as $\cfg$ is finite.
Intuitively, the Foata normal form $FC(\cfg)$ is a layered representation
of $\cfg$ in respect to causality relation and represents steps in which events of
$\cfg$ can fire if all concurrent events fire synchronously.

We can then define the total ordering $\cfglt$ over configurations of $\mathcal U$ as follows,
where we use the lexicographic order $<$ to compare Parikh vectors and Foata normal forms.

\begin{definition}[$\cfglt \subset \powerset E\times\powerset E$]
Let $\mathcal{U}=\langle E,C,F,C_0\rangle$ be the unfolding of PRN $G_m$ and let
$\cfg_1,\cfg_2\subseteq E$ be two finite configurations of $\mathcal{U}$.
We say that $\cfg_1\cfglt \cfg_2$ iff one of the following conditions holds:
    \begin{itemize}
        \item[] $|\cfg_1|<|\cfg_2|$,
        \item[] $|\cfg_1|=|\cfg_2|\wedge\varphi(\cfg_1)\palt\varphi(\cfg_2)$,
        \item[] $|\cfg_1|=|\cfg_2|\wedge\varphi(\cfg_1)=\varphi(\cfg_2)\wedge FC(\cfg_1)\palt FC(\cfg_2)$.
    \end{itemize}
\end{definition}
\begin{property}
$\cfglt$ is an adequate order, i.e.,
\begin{itemize}
\item[] $\cfglt$ is well-founded,
\item[] $\cfg_1\subset\cfg_2$ implies $\cfg_1\cfglt\cfg_2$, and,
\item[] if $\cfg_1\cfglt\cfg_2$ and $X_{\cfg_1}=X_{\cfg_2}$, then for any extension $e$ of
$\cfg_1$ and $f$ of $\cfg_2$ such that $e\equiv f$,
$\cfg_1 \cup\{e\} \cfglt\cfg_2\cup\{f\}$
where $\langle C'_1,v_1,s_1\rangle\equiv\langle C'_2,v_2,s_2\rangle
\EQDEF v_1 = v_2 \wedge s_1 = s_2\wedge \omega_{v_1}(C'_1) = \omega_{v_2}(C'_2)$.
\end{itemize}
\end{property}

The total adequate order $\cfglt$ introduced here is identical to the total adequate order used by Esparza et al.~\cite{esparza02}.
The only difference is the use of regulator states $\langle v,\omega\rangle$ instead of transitions for Parikh vectors,
which has ultimately no influence on the properties of the order itself.

\subsubsection{Cut-offs}

Let us consider two events $e,e'$ of the unfolding such that
$X(\mincfg e)=X(\mincfg{e'})$ and $\pabs(\mincfg e)\subseteq\pabs(\mincfg {e'})$.
Let us assume there exists an event $f=\langle C_1,v,s\rangle$ being an extension of $\mincfg e$,
i.e., $C_1\subset\cut{\mincfg e}$, or equivalently $\omega_v(C_1)=\omega_v(X(\mincfg e))$,
and $\pabs(\mincfg f)\neq\emptylattice$.
It derives that there exists an isomorphic event $f'=\langle C_2,v,s\rangle$
with $\omega_v(C_2)=\omega_v(C_1)=\omega_v(X(\mincfg{e'}))$ being an extension of $\mincfg{e'}$,
$\pabs(\mincfg{f'})\neq\emptylattice$, and $X(\mincfg f)=X(\mincfg{f'})$.

Therefore, every extension of $\mincfg{e}$ has a counterpart extension of $\mincfg{e'}$.
It is then sufficient to compute the extension of the $\cfglt$-smallest of two events to preserve the
completeness of the reachable states \cite{esparza02}.

This leads to the definition of a cut-off event during PRN unfolding which extends the usual
definition for Petri nets with the additional requirement of inclusion of parametrisation space.
\begin{definition}[Cut-off]
    An event $e\in E$ is considered a cut-off event if there exists a different
    event $e'\in E$ such that:
\begin{itemize}
\item[] $X(\mincfg e)=X(\mincfg{e'})$,
\item[] $\pabs_R(\mincfg e)\subseteq \pabs_R(\mincfg{e'})$.
\end{itemize}
\end{definition}

\begin{example}
In Fig.~\ref{fig:unfolding},
the event $e_5$ is a cut-off due to $e_3$ as
$X(\mincfg{e_3})=(1,1,1)=X(\mincfg{e_5})$ and the parametrisation sets
$\pabs_R(\mincfg{e_5})=(\langle 10010010101\rangle,\langle 22211111111\rangle)$ and
$\pabs_R(\mincfg{e_3})=(\langle 10010000101\rangle,\langle 22211111111\rangle)$
giving us $\pabs_R(\mincfg{e_5})\subseteq\pabs_R(\mincfg{e_3})$.
As explained at the end of Sect.~\ref{sec:def-unfolding},
$L_{c,\langle 01\rangle}=1$ (7th position in the vector) enforced by $e_4$,
requires $L_{c,\langle 11\rangle}=1$ (9th position) due to influence $(a,c)$ being positive-monotonic.
The same monotonicity constraint is also responsible for $L_{c,\langle 21\rangle}=1$ (11th position) in both cases.

Note that $\lb{e_3}\cfglt\lb{e_5}$ does not necessarily have to hold. Even if $\lb{e_5}\cfglt\lb{e_3}$ holds, event $e_5$
is declared cut-off once $e_3$ is added to the unfolding.
\end{example}

\subsubsection{Algorithm}

The computation of the CFP extends the one of the unfolding of Sect.~\ref{sec:def-unfolding} by
visiting the candidate extensions
following $\cfglt$ order, and by declaring cut-off events from which no extension is possible.

Following our unfolding semantics, an event $e$ is an extension of an occurrence net
only if $\mincfg e$ is a configuration and $\pabs_R(\mincfg e)\neq\emptylattice$
(or equivalently $p(\mincfg e)\neq\emptyset$).
Additionally, in the case of CFP computation, we require that $e$ is not an extension of a cut-off event.

The CFP $\cfp=\langle E,C,F,C_0,\coff\rangle$ is inductively defined as follows, starting with $C:=C_0$, $E:=\emptyset$,
$\coff:=\emptyset$.\\
Repeat until no possible extension of $\cfp$ exists:
\begin{enumerate}[label=(\roman*)]
\item Let $e=\langle{C',v,s}\rangle$ be the $\cfglt$-smallest extension of $\cfp$
\item For each $e'\in E, e'\neq e$ s.t. $X(\mincfg {e'})=X(\mincfg{e})$
\begin{itemize}
\item if $\pabs_R(\mincfg{e})\subseteq\pabs_R(\mincfg{e'})$,
    mark $e$ as a cut-off ($\coff=\coff\cup\{e\}$)
\item if $\pabs_R(\mincfg{e'})\subsetneq\pabs_R(\mincfg{e})$,
    mark $e'$ as a cut-off ($\coff=\coff\cup\{e'\}$)
\end{itemize}
\end{enumerate}

The latter statement takes care of declaring cut-off events $e'$ due to the newly added extension $e$.
This case can occur as the 
total adequate order $\cfglt$ does not correlate with set inclusion order over parametrisation spaces.
In other words, $\cfg_1\cfglt\cfg_2$ does not guarantee $p(\cfg_1)\subseteq p(\cfg_2)$ and conversely.
Hence, by \emph{a posteriori} declaring $e'$ cut-off, we ensure that none of its extensions will be
considered, as they are redundant.

Following \cite{esparza02},
as $\cfglt$ is a total adequate order, $\cfp$ is complete, i.e., any configuration of the unfolding
$\mathcal U$ can be reconstructed from $\cfp$.
In particular, any state reachable by a configuration of $\mathcal U$ is reachable by a
configuration of $\cfp$.

\begin{example}
Fig.~\ref{fig:prefix} gives a complete finite prefix of the PRN $G_m$ from
Fig.~\ref{fig:running_example}, with
$R=\{(a,a,-),(b,b,-),(a,c,+),(b,c,+),(a,a,\o),(b,b,\o),(a,c,\o),(b,c,\o)\}$, i.e.,
$a$ and $b$ auto-inhibit themselves, both activate $c$, and all influences are observable.
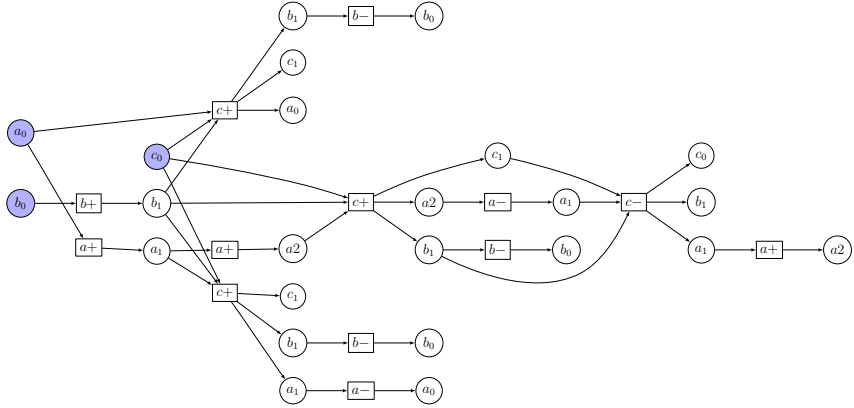
\begin{figure}[tp]
    \centering
    \scalebox{0.45}{\begin{tikzpicture}[>=latex,line join=bevel,font=\large]
\node (c59) at (581.0bp,129.0bp) [condition] {$a_1$};
  \node (c58) at (581.0bp,169.0bp) [condition] {$b_1$};
  \node (c37) at (353.0bp,51.0bp) [condition] {$b_0$};
  \node (c1) at (11.0bp,168.0bp) [initial condition] {$b_0$};
  \node (c57) at (581.0bp,208.0bp) [condition] {$c_0$};
  \node (c0) at (11.0bp,227.0bp) [initial condition] {$a_0$};
  \node (e32) at (410.5bp,129.0bp) [event] {$b-$};
  \node (e31) at (410.5bp,169.0bp) [event] {$a-$};
  \node (e10) at (296.0bp,169.0bp) [event] {$c+$};
  \node (e15) at (296.0bp,325.0bp) [event] {$b-$};
  \node (e39) at (524.5bp,169.0bp) [event] {$c-$};
  \node (e0) at (68.0bp,131.0bp) [event] {$a+$};
  \node (c13) at (239.0bp,325.0bp) [condition] {$b_1$};
  \node (c12) at (239.0bp,286.0bp) [condition] {$c_1$};
  \node (c11) at (239.0bp,246.0bp) [condition] {$a_0$};
  \node (c33) at (410.5bp,208.0bp) [condition] {$c_1$};
  \node (c32) at (353.0bp,169.0bp) [condition] {$a2$};
  \node (e24) at (296.0bp,51.0bp) [event] {$b-$};
  \node (c35) at (353.0bp,11.0bp) [condition] {$a_0$};
  \node (c47) at (468.0bp,129.0bp) [condition] {$b_0$};
  \node (c34) at (353.0bp,129.0bp) [condition] {$b_1$};
  \node (c3) at (125.0bp,128.0bp) [condition] {$a_1$};
  \node (c2) at (125.0bp,207.0bp) [initial condition] {$c_0$};
  \node (e7) at (182.0bp,93.0bp) [event] {$c+$};
  \node (e6) at (182.0bp,246.0bp) [event] {$c+$};
  \node (e1) at (68.0bp,168.0bp) [event] {$b+$};
  \node (c6) at (239.0bp,130.0bp) [condition] {$a2$};
  \node (e3) at (182.0bp,129.0bp) [event] {$a+$};
  \node (c4) at (125.0bp,168.0bp) [condition] {$b_1$};
  \node (c22) at (239.0bp,90.0bp) [condition] {$c_1$};
  \node (c23) at (239.0bp,51.0bp) [condition] {$b_1$};
  \node (c46) at (468.0bp,169.0bp) [condition] {$a_1$};
  \node (c21) at (239.0bp,11.0bp) [condition] {$a_1$};
  \node (e44) at (638.0bp,129.0bp) [event] {$a+$};
  \node (e22) at (296.0bp,11.0bp) [event] {$a-$};
  \node (c25) at (353.0bp,325.0bp) [condition] {$b_0$};
  \node (c67) at (695.0bp,129.0bp) [condition] {$a2$};
  \draw [->] (c3) ..controls (143.24bp,128.31bp) and (153.06bp,128.49bp)  .. (e3);
  \draw [->] (c33) ..controls (433.95bp,202.69bp) and (458.63bp,196.4bp)  .. (479.0bp,189.0bp) .. controls (487.96bp,185.75bp) and (497.63bp,181.49bp)  .. (e39);
  \draw [->] (e15) ..controls (312.64bp,325.0bp) and (322.89bp,325.0bp)  .. (c25);
  \draw [->] (e7) ..controls (196.48bp,72.906bp) and (214.54bp,45.981bp)  .. (c21);
  \draw [->] (c1) ..controls (28.851bp,168.0bp) and (39.083bp,168.0bp)  .. (e1);
  \draw [->] (c4) ..controls (140.87bp,147.72bp) and (158.18bp,124.11bp)  .. (e7);
  \draw [->] (c2) ..controls (141.99bp,218.34bp) and (154.16bp,226.97bp)  .. (e6);
  \draw [->] (c34) ..controls (383.9bp,113.27bp) and (439.15bp,88.941bp)  .. (479.0bp,109.0bp) .. controls (496.9bp,118.01bp) and (509.4bp,138.0bp)  .. (e39);
  \draw [->] (e22) ..controls (313.03bp,11.0bp) and (322.89bp,11.0bp)  .. (c35);
  \draw [->] (c23) ..controls (256.85bp,51.0bp) and (267.08bp,51.0bp)  .. (e24);
  \draw [->] (c2) ..controls (156.23bp,202.24bp) and (207.68bp,193.64bp)  .. (250.0bp,183.0bp) .. controls (258.94bp,180.75bp) and (268.72bp,177.77bp)  .. (e10);
  \draw [->] (c0) ..controls (26.612bp,201.5bp) and (46.163bp,167.37bp)  .. (e0);
  \draw [->] (e6) ..controls (198.55bp,246.0bp) and (208.64bp,246.0bp)  .. (c11);
  \draw [->] (c4) ..controls (162.66bp,168.22bp) and (240.25bp,168.68bp)  .. (e10);
  \draw [->] (c6) ..controls (256.46bp,141.67bp) and (268.32bp,150.08bp)  .. (e10);
  \draw [->] (e44) ..controls (655.03bp,129.0bp) and (664.89bp,129.0bp)  .. (c67);
  \draw [->] (e10) ..controls (313.64bp,156.9bp) and (325.88bp,147.99bp)  .. (c34);
  \draw [->] (c2) ..controls (132.24bp,194.61bp) and (134.23bp,190.63bp)  .. (136.0bp,187.0bp) .. controls (149.22bp,159.98bp) and (164.26bp,128.4bp)  .. (e7);
  \draw [->] (c13) ..controls (256.85bp,325.0bp) and (267.08bp,325.0bp)  .. (e15);
  \draw [->] (e32) ..controls (427.45bp,129.0bp) and (438.11bp,129.0bp)  .. (c47);
  \draw [->] (c32) ..controls (371.48bp,169.0bp) and (381.55bp,169.0bp)  .. (e31);
  \draw [->] (e0) ..controls (85.11bp,130.12bp) and (95.108bp,129.58bp)  .. (c3);
  \draw [->] (e7) ..controls (199.64bp,80.29bp) and (211.88bp,70.945bp)  .. (c23);
  \draw [->] (c34) ..controls (371.09bp,129.0bp) and (381.56bp,129.0bp)  .. (e32);
  \draw [->] (e6) ..controls (199.64bp,258.1bp) and (211.88bp,267.01bp)  .. (c12);
  \draw [->] (e3) ..controls (199.03bp,129.29bp) and (208.89bp,129.47bp)  .. (c6);
  \draw [->] (c59) ..controls (599.24bp,129.0bp) and (609.06bp,129.0bp)  .. (e44);
  \draw [->] (e39) ..controls (541.99bp,180.8bp) and (554.12bp,189.48bp)  .. (c57);
  \draw [->] (e10) ..controls (312.55bp,169.0bp) and (322.64bp,169.0bp)  .. (c32);
  \draw [->] (c46) ..controls (486.37bp,169.0bp) and (496.06bp,169.0bp)  .. (e39);
  \draw [->] (e24) ..controls (312.64bp,51.0bp) and (322.89bp,51.0bp)  .. (c37);
  \draw [->] (c0) ..controls (49.637bp,231.23bp) and (126.7bp,239.89bp)  .. (e6);
  \draw [->] (e39) ..controls (541.9bp,156.96bp) and (553.84bp,148.2bp)  .. (c59);
  \draw [->] (e6) ..controls (196.4bp,265.25bp) and (214.21bp,290.82bp)  .. (c13);
  \draw [->] (e1) ..controls (84.639bp,168.0bp) and (94.893bp,168.0bp)  .. (c4);
  \draw [->] (c4) ..controls (140.97bp,189.23bp) and (158.54bp,214.15bp)  .. (e6);
  \draw [->] (e31) ..controls (427.76bp,169.0bp) and (437.85bp,169.0bp)  .. (c46);
  \draw [->] (e10) ..controls (314.77bp,177.71bp) and (329.13bp,184.36bp)  .. (342.0bp,189.0bp) .. controls (357.92bp,194.75bp) and (376.47bp,199.86bp)  .. (c33);
  \draw [->] (c21) ..controls (257.24bp,11.0bp) and (267.06bp,11.0bp)  .. (e22);
  \draw [->] (e7) ..controls (198.64bp,92.147bp) and (208.89bp,91.588bp)  .. (c22);
  \draw [->] (e39) ..controls (541.0bp,169.0bp) and (551.16bp,169.0bp)  .. (c58);
  \draw [->] (c3) ..controls (142.38bp,117.58bp) and (154.08bp,110.13bp)  .. (e7);
\end{tikzpicture}}
    \caption{The complete finite prefix obtained by unfolding the PRN of
    Fig.~\ref{fig:running_example} with
    $R=\{(a,a,-),(b,b,-),(a,c,+),(b,c,+),(a,a,\o),(b,b,\o),(a,c,\o),(b,c,\o)\}$.
    Cut-off events are not represented.}
    \label{fig:prefix}
\end{figure}

The completeness property, coupled with the result on optimal abstraction of parametrisation space
(Theorem~\ref{thm:pabsTR}), ensures that:
\begin{itemize}
\item for any configuration of prefix, there exists a sequence of transitions in $\Delta(G_m)$ realisable
with respect to the concrete semantics of PRNs (Def.~\ref{def:concrete-PRNs}), i.e., there exists a
parametrisation $P\in\PG$ such that all the transitions are in $\Delta(G_m,P)$.
\item for any parametrisation $P\in\PG$, for any realisable sequence of transitions in $\Delta(G_m,P)$, one can
reconstruct from the prefix (with the cut-off events) a configuration $\cfg$ which contains the
corresponding events and such that $P\in\pabs_R(\cfg)$.
\end{itemize}
\end{example}

Standard complete finite prefixes of Petri nets computed using a total adequate order for extensions
have a number of non-cut-off events which does not exceed the number of reachable states~\cite{esparza02}.
This claim does not hold in our setting, because several events with the same state can exist in our
CFP of PBNs (the cut-offs depend also on the parametrisation space).
However, because of the resulting partial ordering of transitions in the CFP,
one can easily argue that the number of configurations in the CFP is smaller than the
number of concrete traces.
Future work may consider defining an ordering encompassing both configurations and
parametrisation space to avoid redundant exploration of configurations for the prefix computation.


\section{Experiments}
\label{sec:experiments}

Algorithms presented in previous sections have been implemented in a prototype tool \emph{Pawn}
written in Python.\footnote{Pawn is available online: \url{https://github.com/GeorgeKolcak/Pawn}} In
this section, we provide its experimental evaluation performed on several well-known Boolean and
multi-valued regulatory networks that have been studied in the literature. This study extends the
preliminary evaluation provided in~\cite{Kolcak-SASB16}. 

\subsection{Experiment Description}

Several parametrised models were selected varying in size of the network, in
average connectivity of nodes, and in the network type (Boolean vs.
multi-valued). Each experiment is conducted in the way that for a given initial
state the size of its unfolding is computed with respect to full parameter space
provided that all regulations are considered monotonic and observable. The size of the unfolding is characterised by the number of non-cut-off events. This number gives a good figure of the effect of compaction achieved. In models where the reachable state space is sensitive to the initial state, we re-run the experiment for different initial states. By default, we have considered initial states as set in the original model.

To clarify the compaction achieved with unfoldings, we compare the size of the unfoldings with the size of the complete symbolic execution tree achieved from the same initial state. To this end, we employ the tool \sputnik~\cite{gallet14} that implements automata-based LTL model checking of parametric regulatory networks by (finite) symbolic execution of the product automaton. \sputnik{} explicitly traverses the product states in DFS manner while symbolically executing the transitions representing constraints on parameters. To achieve exactly the reachable states of the model state transition graph, we use a B\"uchi automaton with a single state looping over an atomic proposition satisfied in every state of the model.

\sputnik{} implements an additional parameter constraint called {\it Min-Max}. It states that in a state where all the activators (resp. inhibitors) are enabled and all of the inhibitors (resp. activators)
are disabled at the same time, then the regulated node must be at its maximum (resp.
minimum) level. Apparently, in our parameter encoding, it means that the only valid parameter context for such a state is the maximal (or minimal) in the respective component. To this end, we have also included the {\it Min-Max} constraint in \tool.

\subsection{Models}

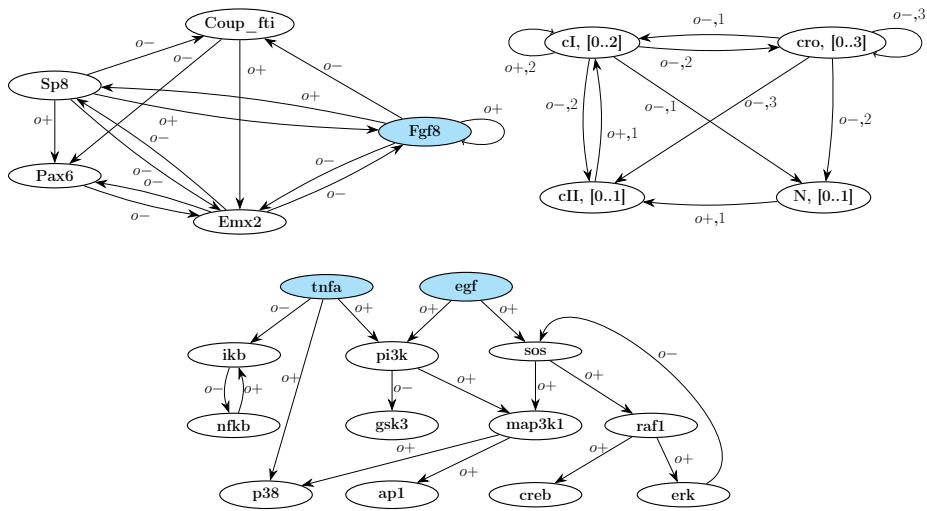
\begin{figure}[t]
\scalebox{.35}{\begin{tikzpicture}
\usetikzlibrary{shapes,snakes,arrows.meta,calc}
\tikzstyle{every node}=[draw,very thick,ellipse,minimum width=100pt,font=\bf\LARGE,align=center];
\tikzset{edge/.style={-{Stealth[scale=2.5,length=5,width=3.5]},line width=.7pt}};

\begin{scope}[local bounding box=scope1]
\node (coup) { Coup\_fti};
\node[left=200pt,below=50pt] (sp8) { Sp8};
\node[right=200pt,below=100pt,fill=cyan!30] (fgf) { Fgf8};
\node[left=200pt,below=150pt] (pax) { Pax6};
\node[below=200pt] (emx) { Emx2};

\draw[edge] (coup) to node [above,draw=none,near start] { $o-$} (pax); 
\draw[edge] (coup) to node [above,draw=none,near start] {~~~~~~$o+$} (emx); 

\draw[edge] (sp8) to node [above,draw=none,midway] { $o+$~~~~} (pax); 
\draw[edge,bend right=5] (sp8) to node [below,draw=none,midway] { $o-$} (emx); 
\draw[edge,bend right=5] (sp8) to node [below,draw=none,near start] { $o+$} (fgf);
\draw[edge] (sp8) to node [above,draw=none,midway] { $o-$} (coup);

\draw[edge] (fgf) to node [above,draw=none,midway] { $o-$} (coup);  
\draw[edge,bend right=5] (fgf) to node [above,draw=none,near start] { $o+$} (sp8);
\draw[edge,bend right=5] (fgf) to node [above,draw=none,midway] { $o-$} (emx);  
\draw[edge] (fgf) edge[loop right,edge] node [above,draw=none,near start] {$o+$} ();

\draw[edge,bend right=5] (emx) to node [below,draw=none,midway] { $o-$} (fgf); 
\draw[edge,bend right=5] (emx) to node [above,draw=none,midway] { $o-$} (pax); 
\draw[edge,bend right=5] (emx) to node [above,draw=none,midway] { ~~$o-$} (sp8);

\draw[edge,bend right=5] (pax) to node [below,draw=none,midway] { $o-$} (emx); 
\end{scope}

\begin{scope}[shift={($(scope1.east)+(2cm,3)$)}]
\node (cI) {cI, [0..2]};
\node[right=200pt] (cro) {cro, [0..3]};
\node[below=150pt] (cII) {cII, [0..1]};
\node[right=250pt,below=150pt] (N) {N, [0..1]};

\draw[edge,bend right=5] (cI) to node [below,draw=none,near start] {$o-$,$2$} (cro);
\draw[edge,bend right=10] (cI) to node [below,draw=none,near start] {$o-$,$2$~~~~~~~~} (cII);
\draw[edge] (cI) to node [below=5pt,draw=none,near start] {$o-$,$1$} (N);
\draw[edge] (cI) edge[loop left,edge] node [below,draw=none,near start] {$o+$,$2$} ();

\draw[edge,bend right=5] (cro) to node [above,draw=none] {$o-$,$1$} (cI);
\draw[edge,bend left=5] (cro) to node [draw=none,midway] {~~~~~~~~$o-$,$2$} (N);
\draw[edge] (cro) to node [below,draw=none,near start] {$o-$,$3$} (cII);
\draw[edge] (cro) edge[loop right,edge] node [above,draw=none,near start] {$o-$,$3$} ();

\draw[edge,bend left=5] (N) to node [below,draw=none,midway] {$o+$,$1$} (cII);

\draw[edge,bend right=10] (cII) to node [above,draw=none,near start] {~~~~~~~~$o+$,$1$} (cI);
\end{scope}

\begin{scope}[shift={($(scope1.south)+(2cm,-2)$)}]
\node[fill=cyan!30] (tnfa) {tnfa};
\node[right=100pt,fill=cyan!30] (egf) {egf};

\node[left=100pt,below=60pt] (ikb) {ikb};
\node[below=75pt,right=20pt] (pi3k) {pi3k};
\node[below=70pt,right=175pt] (sos) {sos};

\node[below=150pt,left=50pt] (nfkb) {nfkb};
\node[below=150pt,right=20pt] (gsk3) {gsk3};
\node[below=150pt,right=175pt] (map3) {map3k1};
\node[below=150pt,right=300pt] (raf) {raf1};

\node[below=225pt,left=15pt] (p38) {p38};
\node[below=225pt,right=20pt] (ap1) {ap1};
\node[below=225pt,right=175pt] (creb) {creb};
\node[below=225pt,right=335pt] (erk) {erk};

\draw[edge] (tnfa) to node [above,draw=none,midway] { $o-$} (ikb);
\draw[edge] (tnfa) to node [above,draw=none,midway] { $o+$~~} (p38);
\draw[edge] (tnfa) to node [above,draw=none,midway] { ~~$o+$} (pi3k); 

\draw[edge] (egf) to node [above,draw=none,midway] { $o+$} (pi3k);
\draw[edge] (egf) to node [above,draw=none,midway] { ~~$o+$} (sos);

\draw[edge,bend right=20] (ikb) to node [above,draw=none,near end] { $o-$~~~~} (nfkb);
\draw[edge,bend right=20] (nfkb) to node [above,draw=none,near start] { ~~~~$o+$} (ikb);

\draw[edge] (pi3k) to node [above,draw=none,near end] { ~~~~$o-$} (gsk3);
\draw[edge] (pi3k) to node [above,draw=none,midway] { ~$o+$} (map3);

\draw[edge] (sos) to node [above,draw=none,near end] { ~~~~$o+$} (map3);
\draw[edge] (sos) to node [above,draw=none,midway] { ~$o+$} (raf);

\draw[edge] (map3) to node [above,draw=none,midway] { ~~$o+$} (p38);
\draw[edge] (map3) to node [below,draw=none,midway] { ~~$o+$} (ap1);

\draw[edge] (raf) to node [above,draw=none,midway] { ~~$o+$} (creb);
\draw[edge] (raf) to node [above,draw=none,near end] { ~~~~$o+$} (erk);

\draw[edge,bend left=250] (erk) to node [above,draw=none,midway] { ~~$o-$} (sos);
\end{scope}

\end{tikzpicture}}
\caption{\textbf{(left)} A Boolean regulatory network controlling the
	 cortical area development. The state marked in blue has been set to initial value $1$ in one of the experiments. \textbf{(right)} A multi-valued regulatory network of bacteriophage $\lambda$ life cycle. Node labels are extended with ranges describing the value domain. Accordingly, edge labels include threshold levels. \textbf{(bottom)} Model of the signalling pathway of EGF-TNF$\alpha$. The only two states that start with initial value~$1$ are marked in blue.}
    \label{fig:bio_models}
\end{figure}

In all considered models, all regulations are defined with observability and monotonicity restrictions. The additional {\it Min-Max} constraint is employed only when explicitly noted.  

First, we use a Boolean model of the gene regulatory network underlying mammalian cortical area development~\cite{giacomantonio10}. The network is depicted in Figure~\ref{fig:bio_models} (left). The unfolding has been computed with respect to two different initial states --- all species inactive (Fgf8=0) and all species inactive with the only exception of Fgf8 (Fgf8=1). 

The smallest multi-valued model we have analysed is the well-studied regulatory network of bacteriophage $\lambda$ life cycle~\cite{thieffry95} ($\lambda$-switch) also analysed in~\cite{gallet14,me12}. The network structure is shown in Figure~\ref{fig:bio_models} (right). The initial state is $0$ for all nodes. The model is considered in two configurations, with and without the {\it Min-Max} constraint.

As an example of a larger Boolean model, we consider a model of EGF-TNF$\alpha$ signalling
pathway~\cite{macnamara12,Caspots-BioSystems16} (Figure \ref{fig:bio_models} (bottom)).
In this case the initial state is set to $\mathit{tnfa}$ and $\mathit{egf}$ nodes active whereas all other nodes are considered inactive.

We have also considered two larger multi-valued models ($>10$ nodes). First, we have analysed a model published in~\cite{mbodj13}. It represents several key signalling pathways of Drosophila including cross-talks. The network has the size of $15$ nodes and its structure is shown in Figure~\ref{fig:bio_models2} (left).

Second, we have analysed a model describing the control of the
developmental process in primary sex determination of placental
mammals. The model has been recently published
in~\cite{sanchez16}. The network is multi-valued and has $14$ nodes
but in contrast to the Drosophila model, it is highly
interconnected. In Figure~\ref{fig:bio_models2} (right) there is shown its basic topology including the information about considered initial states. 

\begin{figure}[h]
\begin{minipage}{.5\textwidth}
\scalebox{.4}{
        {\LARGE
\begin{tikzpicture}[>=latex',line join=bevel,]
  \pgfsetlinewidth{1bp}
\pgfsetcolor{black}
\definecolor{init}{rgb}{0.752941, 0.878431, 1.000000}
\definecolor{init2}{rgb}{1, 0.878431, 0.752941}
  \draw [->] (352.2bp,71.697bp) .. controls (343.4bp,63.135bp) and (332.62bp,52.656bp)  .. (315.59bp,36.104bp);
  \draw [->] (229.49bp,143.7bp) .. controls (240.96bp,134.88bp) and (255.06bp,124.03bp)  .. (275.76bp,108.1bp);
  \draw [->] (171.0bp,287.7bp) .. controls (171.0bp,279.98bp) and (171.0bp,270.71bp)  .. (171.0bp,252.1bp);
  \draw [->] (54.283bp,291.44bp) .. controls (57.193bp,290.2bp) and (60.136bp,289.02bp)  .. (63.0bp,288.0bp) .. controls (122.21bp,266.85bp) and (142.93bp,273.47bp)  .. (215.72bp,248.56bp);
  \draw [->] (217.33bp,143.8bp) .. controls (228.22bp,125.77bp) and (246.0bp,96.692bp)  .. (262.0bp,72.0bp) .. controls (267.9bp,62.899bp) and (274.53bp,53.043bp)  .. (286.14bp,36.06bp);
  \draw [->] (298.0bp,71.697bp) .. controls (298.0bp,63.983bp) and (298.0bp,54.712bp)  .. (298.0bp,36.104bp);
  \draw [->] (287.91bp,291.83bp) .. controls (265.23bp,280.81bp) and (232.64bp,264.97bp)  .. (198.45bp,248.35bp);
  \draw [->] (225.2bp,287.7bp) .. controls (216.4bp,279.14bp) and (205.62bp,268.66bp)  .. (188.59bp,252.1bp);
  \draw [->] (54.085bp,291.83bp) .. controls (76.769bp,280.81bp) and (109.36bp,264.97bp)  .. (143.55bp,248.35bp);
  \draw [->] (298.0bp,143.7bp) .. controls (298.0bp,135.98bp) and (298.0bp,126.71bp)  .. (298.0bp,108.1bp);
  \draw [->] (179.9bp,215.7bp) .. controls (184.0bp,207.73bp) and (188.95bp,198.1bp)  .. (198.2bp,180.1bp);
  \draw [->] (188.8bp,287.7bp) .. controls (197.6bp,279.14bp) and (208.38bp,268.66bp)  .. (225.41bp,252.1bp);
  \draw [->] (359.91bp,291.83bp) .. controls (337.23bp,280.81bp) and (304.64bp,264.97bp)  .. (270.45bp,248.35bp);
  \draw [->] (116.8bp,287.7bp) .. controls (125.6bp,279.14bp) and (136.38bp,268.66bp)  .. (153.41bp,252.1bp);
  \draw [->] (243.0bp,287.7bp) .. controls (243.0bp,279.98bp) and (243.0bp,270.71bp)  .. (243.0bp,252.1bp);
  \draw [->] (201.92bp,36.189bp) .. controls (206.35bp,60.424bp) and (210.09bp,104.89bp)  .. (210.26bp,143.87bp);
  \draw [->] (126.09bp,291.83bp) .. controls (148.77bp,280.81bp) and (181.36bp,264.97bp)  .. (215.55bp,248.35bp);
  \draw [->] (243.0bp,359.7bp) .. controls (243.0bp,351.98bp) and (243.0bp,342.71bp)  .. (243.0bp,324.1bp);
  \draw [->] (201.09bp,143.87bp) .. controls (196.66bp,119.67bp) and (192.92bp,75.211bp)  .. (192.74bp,36.189bp);
  \draw [->] (273.05bp,71.876bp) .. controls (259.96bp,62.893bp) and (243.74bp,51.763bp)  .. (221.05bp,36.19bp);
  \draw [->] (297.2bp,287.7bp) .. controls (288.4bp,279.14bp) and (277.62bp,268.66bp)  .. (260.59bp,252.1bp);
  \draw [->] (234.1bp,215.7bp) .. controls (230.0bp,207.73bp) and (225.05bp,198.1bp)  .. (215.8bp,180.1bp);
\begin{scope}
  \definecolor{strokecol}{rgb}{0.0,0.0,0.0};
  \pgfsetstrokecolor{strokecol}
  \draw (223.0bp,36.0bp) -- (169.0bp,36.0bp) -- (169.0bp,0.0bp) -- (223.0bp,0.0bp) -- cycle;
  \draw (196.0bp,18.0bp) node {Dad};
\end{scope}
\begin{scope}
  \definecolor{strokecol}{rgb}{0.0,0.0,0.0};
  \pgfsetstrokecolor{strokecol}
  \draw[fill=init] (54.0bp,324.0bp) -- (0.0bp,324.0bp) -- (0.0bp,288.0bp) -- (54.0bp,288.0bp) -- cycle;
  \draw (27.0bp,306.0bp) node {Punt};
\end{scope}
\begin{scope}
  \definecolor{strokecol}{rgb}{0.0,0.0,0.0};
  \pgfsetstrokecolor{strokecol}
  \draw[fill=init] (198.0bp,324.0bp) -- (144.0bp,324.0bp) -- (144.0bp,288.0bp) -- (198.0bp,288.0bp) -- cycle;
  \draw (171.0bp,306.0bp) node {Gbb};
\end{scope}
\begin{scope}
  \definecolor{strokecol}{rgb}{0.0,0.0,0.0};
  \pgfsetstrokecolor{strokecol}
  \draw[fill=init] (270.0bp,324.0bp) -- (216.0bp,324.0bp) -- (216.0bp,288.0bp) -- (270.0bp,288.0bp) -- cycle;
  \draw (243.0bp,306.0bp) node {Sog};
\end{scope}
\begin{scope}
  \definecolor{strokecol}{rgb}{0.0,0.0,0.0};
  \pgfsetstrokecolor{strokecol}
  \draw[fill=init2] (342.0bp,324.0bp) -- (288.0bp,324.0bp) -- (288.0bp,288.0bp) -- (342.0bp,288.0bp) -- cycle;
  \draw (315.0bp,306.0bp) node {Dpp};
\end{scope}
\begin{scope}
  \definecolor{strokecol}{rgb}{0.0,0.0,0.0};
  \pgfsetstrokecolor{strokecol}
  \draw (414.0bp,324.0bp) -- (360.0bp,324.0bp) -- (360.0bp,288.0bp) -- (414.0bp,288.0bp) -- cycle;
  \draw (387.0bp,306.0bp) node {Tsg};
\end{scope}
\begin{scope}
  \definecolor{strokecol}{rgb}{0.0,0.0,0.0};
  \pgfsetstrokecolor{strokecol}
  \draw (327.0bp,36.0bp) -- (269.0bp,36.0bp) -- (269.0bp,0.0bp) -- (327.0bp,0.0bp) -- cycle;
  \draw (298.0bp,18.0bp) node {Targets};
\end{scope}
\begin{scope}
  \definecolor{strokecol}{rgb}{0.0,0.0,0.0};
  \pgfsetstrokecolor{strokecol}
  \draw (270.0bp,252.0bp) -- (216.0bp,252.0bp) -- (216.0bp,216.0bp) -- (270.0bp,216.0bp) -- cycle;
  \draw (243.0bp,234.0bp) node {Tkv};
\end{scope}
\begin{scope}
  \definecolor{strokecol}{rgb}{0.0,0.0,0.0};
  \pgfsetstrokecolor{strokecol}
  \draw (198.0bp,252.0bp) -- (144.0bp,252.0bp) -- (144.0bp,216.0bp) -- (198.0bp,216.0bp) -- cycle;
  \draw (171.0bp,234.0bp) node {Sax};
\end{scope}
\begin{scope}
  \definecolor{strokecol}{rgb}{0.0,0.0,0.0};
  \pgfsetstrokecolor{strokecol}
  \draw (325.0bp,108.0bp) -- (271.0bp,108.0bp) -- (271.0bp,72.0bp) -- (325.0bp,72.0bp) -- cycle;
  \draw (298.0bp,90.0bp) node {Brk};
\end{scope}
\begin{scope}
  \definecolor{strokecol}{rgb}{0.0,0.0,0.0};
  \pgfsetstrokecolor{strokecol}
  \draw[fill=init] (270.0bp,396.0bp) -- (216.0bp,396.0bp) -- (216.0bp,360.0bp) -- (270.0bp,360.0bp) -- cycle;
  \draw (243.0bp,378.0bp) node {Tld};
\end{scope}
\begin{scope}
  \definecolor{strokecol}{rgb}{0.0,0.0,0.0};
  \pgfsetstrokecolor{strokecol}
  \draw[fill=init] (397.0bp,108.0bp) -- (343.0bp,108.0bp) -- (343.0bp,72.0bp) -- (397.0bp,72.0bp) -- cycle;
  \draw (370.0bp,90.0bp) node {Nej};
\end{scope}
\begin{scope}
  \definecolor{strokecol}{rgb}{0.0,0.0,0.0};
  \pgfsetstrokecolor{strokecol}
  \draw[fill=init] (126.0bp,324.0bp) -- (72.0bp,324.0bp) -- (72.0bp,288.0bp) -- (126.0bp,288.0bp) -- cycle;
  \draw (99.0bp,306.0bp) node {Scw};
\end{scope}
\begin{scope}
  \definecolor{strokecol}{rgb}{0.0,0.0,0.0};
  \pgfsetstrokecolor{strokecol}
  \draw (241.0bp,180.0bp) -- (173.0bp,180.0bp) -- (173.0bp,144.0bp) -- (241.0bp,144.0bp) -- cycle;
  \draw (207.0bp,162.0bp) node {MadMed};
\end{scope}
\begin{scope}
  \definecolor{strokecol}{rgb}{0.0,0.0,0.0};
  \pgfsetstrokecolor{strokecol}
  \draw[fill=init] (325.0bp,180.0bp) -- (271.0bp,180.0bp) -- (271.0bp,144.0bp) -- (325.0bp,144.0bp) -- cycle;
  \draw (298.0bp,162.0bp) node {Shn};
\end{scope}
\end{tikzpicture}

}
}
\end{minipage}
\begin{minipage}{.5\textwidth}
  \scalebox{.4}{
        {\LARGE
\begin{tikzpicture}[>=latex',line join=bevel,]
  \pgfsetlinewidth{1bp}
\definecolor{init}{rgb}{0.752941, 0.878431, 1.000000}
\pgfsetcolor{black}
  \draw [->] (111.1bp,71.697bp) .. controls (107.0bp,63.728bp) and (102.05bp,54.1bp)  .. (92.797bp,36.104bp);
  \draw [->] (202.24bp,247.49bp) .. controls (212.02bp,247.78bp) and (220.0bp,243.28bp)  .. (220.0bp,234.0bp) .. controls (220.0bp,228.2bp) and (216.88bp,224.27bp)  .. (202.24bp,220.51bp);
  \draw [->] (116.8bp,359.7bp) .. controls (125.6bp,351.14bp) and (136.38bp,340.66bp)  .. (153.41bp,324.1bp);
  \draw [->] (81.099bp,36.104bp) .. controls (77.812bp,44.129bp) and (73.023bp,53.871bp)  .. (62.815bp,71.697bp);
  \draw [->] (73.596bp,143.7bp) .. controls (69.725bp,135.73bp) and (65.048bp,126.1bp)  .. (56.308bp,108.1bp);
  \draw [->] (202.12bp,224.09bp) .. controls (229.98bp,214.74bp) and (274.74bp,199.21bp)  .. (331.0bp,176.09bp);
  \draw [->] (50.983bp,71.697bp) .. controls (54.294bp,63.644bp) and (59.098bp,53.894bp)  .. (69.308bp,36.104bp);
  \draw [->] (92.774bp,180.19bp) .. controls (108.23bp,204.85bp) and (136.82bp,250.46bp)  .. (160.26bp,287.87bp);
  \draw [->] (28.251bp,359.97bp) .. controls (31.751bp,312.29bp) and (41.672bp,177.18bp)  .. (46.729bp,108.31bp);
  \draw [->] (225.2bp,359.7bp) .. controls (216.4bp,351.14bp) and (205.62bp,340.66bp)  .. (188.59bp,324.1bp);
  \draw [->] (171.0bp,359.7bp) .. controls (171.0bp,351.98bp) and (171.0bp,342.71bp)  .. (171.0bp,324.1bp);
  \draw [->] (147.88bp,216.94bp) .. controls (134.89bp,207.93bp) and (119.53bp,196.53bp)  .. (98.956bp,180.21bp);
  \draw [->] (286.0bp,143.7bp) .. controls (286.0bp,135.98bp) and (286.0bp,126.71bp)  .. (286.0bp,108.1bp);
  \draw [->] (109.24bp,179.14bp) .. controls (122.17bp,188.12bp) and (137.44bp,199.45bp)  .. (157.93bp,215.7bp);
  \draw [->] (224.25bp,36.462bp) .. controls (222.17bp,66.09bp) and (210.17bp,128.05bp)  .. (198.0bp,180.0bp) .. controls (195.99bp,188.56bp) and (193.61bp,197.85bp)  .. (187.61bp,215.88bp);
  \draw [->] (54.085bp,363.83bp) .. controls (76.769bp,352.81bp) and (109.36bp,336.97bp)  .. (143.55bp,320.35bp);
  \draw [->] (171.96bp,215.88bp) .. controls (173.61bp,205.52bp) and (177.18bp,192.05bp)  .. (180.0bp,180.0bp) .. controls (190.94bp,133.33bp) and (201.73bp,78.572bp)  .. (213.26bp,36.462bp);
  \draw [->] (166.07bp,287.7bp) .. controls (165.72bp,279.98bp) and (166.02bp,270.71bp)  .. (168.13bp,252.1bp);
  \draw [->] (97.638bp,359.85bp) .. controls (94.697bp,322.83bp) and (87.734bp,235.18bp)  .. (83.369bp,180.23bp);
  \draw [->] (166.3bp,215.68bp) .. controls (156.29bp,196.78bp) and (138.55bp,166.23bp)  .. (118.0bp,144.0bp) .. controls (107.84bp,133.01bp) and (95.18bp,122.66bp)  .. (75.082bp,108.14bp);
  \draw [->] (385.24bp,175.49bp) .. controls (395.02bp,175.78bp) and (403.0bp,171.28bp)  .. (403.0bp,162.0bp) .. controls (403.0bp,156.2bp) and (399.88bp,152.27bp)  .. (385.24bp,148.51bp);
  \draw [->] (340.2bp,143.7bp) .. controls (331.4bp,135.14bp) and (320.62bp,124.66bp)  .. (303.59bp,108.1bp);
  \draw [->] (179.92bp,252.1bp) .. controls (180.28bp,259.79bp) and (179.99bp,269.05bp)  .. (177.9bp,287.7bp);
  \draw [->] (111.06bp,31.793bp) .. controls (126.88bp,40.752bp) and (145.7bp,54.29bp)  .. (156.0bp,72.0bp) .. controls (180.26bp,113.71bp) and (180.55bp,171.76bp)  .. (177.22bp,215.63bp);
  \draw [->] (313.24bp,103.49bp) .. controls (323.02bp,103.78bp) and (331.0bp,99.281bp)  .. (331.0bp,90.0bp) .. controls (331.0bp,84.199bp) and (327.88bp,80.268bp)  .. (313.24bp,76.513bp);
  \draw [->] (270.18bp,71.697bp) .. controls (262.43bp,63.22bp) and (252.96bp,52.864bp)  .. (237.64bp,36.104bp);
\begin{scope}
  \definecolor{strokecol}{rgb}{0.0,0.0,0.0};
  \pgfsetstrokecolor{strokecol}
  \draw[fill=init] (126.0bp,396.0bp) -- (72.0bp,396.0bp) -- (72.0bp,360.0bp) -- (126.0bp,360.0bp) -- cycle;
  \draw (99.0bp,378.0bp) node {Wt1};
\end{scope}
\begin{scope}
  \definecolor{strokecol}{rgb}{0.0,0.0,0.0};
  \pgfsetstrokecolor{strokecol}
  \draw[fill=init]  (75.0bp,108.0bp) -- (21.0bp,108.0bp) -- (21.0bp,72.0bp) -- (75.0bp,72.0bp) -- cycle;
  \draw (48.0bp,90.0bp) node {Dmrt1};
\end{scope}
\begin{scope}
  \definecolor{strokecol}{rgb}{0.0,0.0,0.0};
  \pgfsetstrokecolor{strokecol}
  \draw (147.0bp,108.0bp) -- (93.0bp,108.0bp) -- (93.0bp,72.0bp) -- (147.0bp,72.0bp) -- cycle;
  \draw (120.0bp,90.0bp) node {AF};
\end{scope}
\begin{scope}
  \definecolor{strokecol}{rgb}{0.0,0.0,0.0};
  \pgfsetstrokecolor{strokecol}
  \draw (313.0bp,180.0bp) -- (259.0bp,180.0bp) -- (259.0bp,144.0bp) -- (313.0bp,144.0bp) -- cycle;
  \draw (286.0bp,162.0bp) node {IW};
\end{scope}
\begin{scope}
  \definecolor{strokecol}{rgb}{0.0,0.0,0.0};
  \pgfsetstrokecolor{strokecol}
  \draw[fill=init]  (109.0bp,180.0bp) -- (55.0bp,180.0bp) -- (55.0bp,144.0bp) -- (109.0bp,144.0bp) -- cycle;
  \draw (82.0bp,162.0bp) node {Sf1};
\end{scope}
\begin{scope}
  \definecolor{strokecol}{rgb}{0.0,0.0,0.0};
  \pgfsetstrokecolor{strokecol}
  \draw[fill=init]  (198.0bp,396.0bp) -- (144.0bp,396.0bp) -- (144.0bp,360.0bp) -- (198.0bp,360.0bp) -- cycle;
  \draw (171.0bp,378.0bp) node {AS};
\end{scope}
\begin{scope}
  \definecolor{strokecol}{rgb}{0.0,0.0,0.0};
  \pgfsetstrokecolor{strokecol}
  \draw[fill=init]  (202.0bp,252.0bp) -- (148.0bp,252.0bp) -- (148.0bp,216.0bp) -- (202.0bp,216.0bp) -- cycle;
  \draw (175.0bp,234.0bp) node {Sox9};
\end{scope}
\begin{scope}
  \definecolor{strokecol}{rgb}{0.0,0.0,0.0};
  \pgfsetstrokecolor{strokecol}
  \draw (111.0bp,36.0bp) -- (57.0bp,36.0bp) -- (57.0bp,0.0bp) -- (111.0bp,0.0bp) -- cycle;
  \draw (84.0bp,18.0bp) node {Foxl2};
\end{scope}
\begin{scope}
  \definecolor{strokecol}{rgb}{0.0,0.0,0.0};
  \pgfsetstrokecolor{strokecol}
  \draw[fill=init]  (385.0bp,180.0bp) -- (331.0bp,180.0bp) -- (331.0bp,144.0bp) -- (385.0bp,144.0bp) -- cycle;
  \draw (358.0bp,162.0bp) node {Fgf9};
\end{scope}
\begin{scope}
  \definecolor{strokecol}{rgb}{0.0,0.0,0.0};
  \pgfsetstrokecolor{strokecol}
  \draw[fill=init]  (270.0bp,396.0bp) -- (216.0bp,396.0bp) -- (216.0bp,360.0bp) -- (270.0bp,360.0bp) -- cycle;
  \draw (243.0bp,378.0bp) node {Y};
\end{scope}
\begin{scope}
  \definecolor{strokecol}{rgb}{0.0,0.0,0.0};
  \pgfsetstrokecolor{strokecol}
  \draw[fill=init]  (54.0bp,396.0bp) -- (0.0bp,396.0bp) -- (0.0bp,360.0bp) -- (54.0bp,360.0bp) -- cycle;
  \draw (27.0bp,378.0bp) node {Gata4};
\end{scope}
\begin{scope}
  \definecolor{strokecol}{rgb}{0.0,0.0,0.0};
  \pgfsetstrokecolor{strokecol}
  \draw (198.0bp,324.0bp) -- (144.0bp,324.0bp) -- (144.0bp,288.0bp) -- (198.0bp,288.0bp) -- cycle;
  \draw (171.0bp,306.0bp) node {Sry};
\end{scope}
\begin{scope}
  \definecolor{strokecol}{rgb}{0.0,0.0,0.0};
  \pgfsetstrokecolor{strokecol}
  \draw[fill=init]  (249.0bp,36.0bp) -- (195.0bp,36.0bp) -- (195.0bp,0.0bp) -- (249.0bp,0.0bp) -- cycle;
  \draw (222.0bp,18.0bp) node {b\_cat};
\end{scope}
\begin{scope}
  \definecolor{strokecol}{rgb}{0.0,0.0,0.0};
  \pgfsetstrokecolor{strokecol}
  \draw[fill=init]  (313.0bp,108.0bp) -- (259.0bp,108.0bp) -- (259.0bp,72.0bp) -- (313.0bp,72.0bp) -- cycle;
  \draw (286.0bp,90.0bp) node {Wnt4};
\end{scope}
\end{tikzpicture}

}
     }
\end{minipage}

    \caption{\textbf{(left)} Multi-valued network of signalling
      pathways cross-talk in Drosophila. The states that start with
      initial value $1$ (resp. $2$) are marked in blue
      (resp. red). All nodes considered initally non-zero have been
      set to their maximal level. \textbf{(right)} Multi-valued
      network of mammalian primary sex development. The states with
      initial value $1$ are marked in blue. All the unmarked nodes are initiated $0$. Edge labels are ommitted for sake of simplicity.}
    \label{fig:bio_models2}
\end{figure}
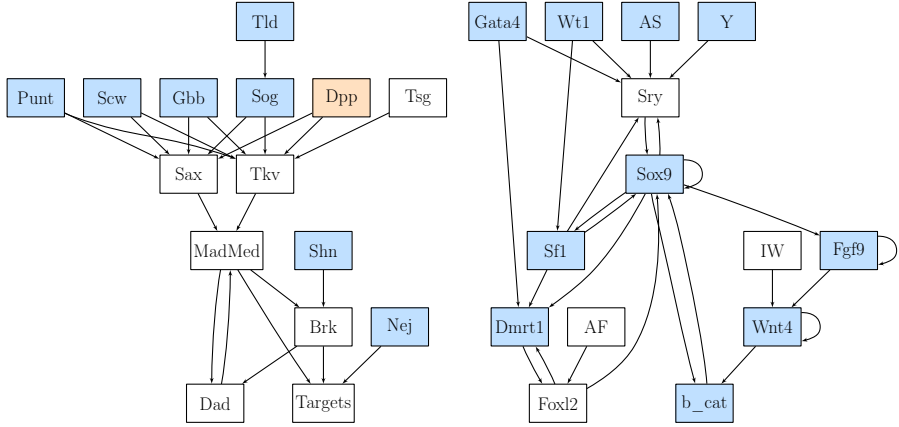 

\subsection{Results}

Computations conducted on all the defined models have led to results shown in
Table~\ref{tab:results}. Unfoldings constructed by \tool{} are characterised by
their size with and without cut-off events. The number of symbolically executed
states computed by \sputnik{} is given for comparison. 

Since both tools are implemented as prototypes without any optimisations, we do not include computation times but rather focus on space which is crucial in this case. However, in all models with the only exception of the Primary Sex Determination model, \tool{} has computed the results in a couple of minutes. In case of the Primary Sex Determination model, \tool{} computed the unfolding in 2 hours whereas sputnik{} has been stopped in 3 days without achieving results.
In case of the Drosophila model, \sputnik{} has been stopped after 2 days of computations whereas \tool{} needed a couple of minutes to compute the unfolding. 
\sputnik{} reached a symbolic execution tree of size at least $7,000,000$ before being timed out in all three relevant cases.

\begin{table}
    \centering
	\begin{scriptsize}
    \begin{tabular}{|l|c|c|r|r|}
        \hline
        Model (init. state) & Type & $\#$ nodes & $\#$ events (incl. cut-offs) & Sym. exec. size\\\hline\hline 
        Cortical Dev. (Fgf8=0) & BN & 5 & 554 (1,939) & 8,312\\\hline 
        Cortical Dev. (Fgf8=1) & BN & 5 & 1,054 (3,530) & 8,312\\\hline
        EGF-TNF$\alpha$ & BN & 13 & 1,057 (2,658) & 534,498\\\hline
        $\lambda$-switch & MN & 4 & 170 (575) & 68,011\\\hline
        $\lambda$-switch w/ Min-Max & MN & 4 & 157 (527) & 15,139\\\hline
        Prim. Sex Det. w/ Min-Max & MN & 14 & 19,954 (88,994) & >7,000,000\\\hline
        Drosophila Signalling & MN & 15 & 781 (2,698) & >7,000,000\\\hline
        Drosophila w/ Min-Max & MN & 15 & 731 (2,507) & >7,000,000\\\hline
    \end{tabular}
	\end{scriptsize}
    \caption{Comparison of the size of the obtained structures between
      unfolding and the symbolic representation for different
      models. The number of unfolding events is specified as a total
      number of non-cut-off events. The number including cut-off
      events is given in brackets. Sym. exec. size is the number of
      states of the complete execution tree constructed by
      \sputnik. The notation '>7,000,000' means the size was at least $7,000,000$ when the
      particular experiment has been stopped after 2 days of computation.}
    \label{tab:results}
\end{table}

Concurrency-aware semantics shows a great improvement in the compactness
of the resulting structure. It is striking in the case of models of signalling pathway cross-talks (Drosophila and EGF-TNF) where concurrency is high due to the low connectivity of the influence graph. The size of unfolding prefixes remains very compact even in cases with more interwoven topology. It is worth noting that the constructed unfoldings preserve the set of reachable states, and any process can be reconstructed from them, with an additional computation cost~\cite{Esparza2001}.

Cortical Development model provides another interesting observation -- the unfolding can be sensitive to the initial state. In this model, the considered initial states give the same reachable state space. However, depending on the initial state, the respective unfoldings have substantially different size.

Theorem \ref{thm:pabsTR} ensures that the set of reachable states in the prefix is exact
despite the over-approximation (for each reachable state there exists at least one
true positive within the computed parametrisation set).
Future work may be aimed at evaluation of the rate of false positives among
the parametrisations for the purposes of parameter identification.

\section{Discussion}
\label{sec:discussion}

This article introduces an abstraction of the parametrisation space of Parametric Regulatory
Networks (PRNs) by the means of two parametrisations, defining the greatest lower bound and least
upper bound of convex sublattice.
We defined narrowing operators to refine the abstracted parametrisation space according to the
existence of transitions, as well as influence monotonicity and observability constraints.
We demonstrate that our operators lead to the smallest approximation of the concrete
parametrisation space attainable by the means of convex lattice.

Our results guarantee that the abstract interpretation of PRN semantics introduce no
over-approximation over the realisable sequences of transitions:
any sequence of transitions allowed by the abstract semantics of PRNs is a realisable sequence of
transitions for the concrete semantics of PRNs, i.e., there exists at least one Discrete Regulatory
Network in which the trace exists.

We also introduce an unfolding semantics for PRNs which takes advantage of the concurrency between
transitions to provide a compact representation of possible behaviours.
The unfolding semantics is built equivalently on the concrete and abstract semantics of PRNs.

Thanks to the compact abstraction of parametrisation spaces and to the unfolding semantics,
preliminary experiments show that our method can explore the full dynamics of PRNs for multi-valued networks
with a dozen of components.

Our approach naturally extends to Gene Regulatory Networks with multiplexes \cite{BernotMultiplexes}
as parameters of multiplex nodes are fully determined (their lower bound and upper
bound parameter values are equal).
Yet, future work may consider extensions of our approach to account for partial parameters
specifications and arbitrary constraints on parametrisations.
Another research direction is the application of our semantics for the parameter identification
problem from temporal properties expressed as LTL or CTL specifications.


{\small
\paragraph{Acknowledgements}
JK, SH, and LP have been partially supported by
ANR-FNR project ``AlgoReCell'' ANR-16-CE12-0034.
and by Labex DigiCosme (project 
ANR-11-LABEX-0045-DIGICOSME) operated by ANR as part of the program 
``Investissement d'Avenir'' Idex Paris-Saclay (ANR-11-IDEX-0003-02).
DS has been supported by the Czech Science Foundation grant No.
GA15-11089S. Main ideas of this paper have been set during JK's and DS's research visits partially supported
by Inria and by Universit\'e Paris-Sud, France.
}

\bibliographystyle{elsarticle-num}
\bibliography{main}

\newpage

\appendix

\section{Proof of Theorem~\ref{thm:pabsTR}}
\label{app:proof-best-approx}

In this appendix, we give the full proof of theorem~\ref{thm:pabsTR} stating that $\pabs_R(T)$ is equal to the smallest convex sublattice containing $p_R(T)$.
It is important to note that in this appendix we will only consider the influence of one node in all the proofs.
We can afford to do this as all the restrictions imposed on $\PG$ to obtain $p_R(T)$ have only local effects, in the sense of the node being regulated.
This is clearly the case for including a transition $t$ in the set $T$, as a transition only affects one associated regulatory state,
however, it also naturally extends to monotonicity and observability constraints.
Marking an influence of node $v$ as monotonic or observable only imposes restriction on the values of parameters that govern the regulation of $v$.
Thanks to this locality, we can analyse the regulation of each node in $V$ separately,
and obtain the final parametrisation set as cartesian sum of the parametrisation sets considered for each node separately.

From here on, all parametrisations are thus considered local to the regulation of the given node,
as well as constraint set $R$ is considered only as the subset of constraints on influences of the given node.

\paragraph{Additional notations}
$[p_R(T)]$ is the smallest convex sublattice including $p_R(T)$.

\medskip

We first introduce several lemmas to aid us in the main proof.
Lemma~\ref{lem:mono-order} describes the intuitive connection between monotonicity constraints and monotonicity order ($\preceq_v$)
while lemma~\ref{lem:mono-order-bounds} extends the intuition to bounds of the abstract parametrisation set.

\begin{lemma}\label{lem:mono-order}
Every parametrisation $P$ that satisfies all monotonic constraints on influences of node $v$,
must also have $P_{v,\omega}\leq P_{v,\omega'}$ for any couple $\omega,\omega'\in\Omega_v$ such that $\omega\preceq_v\omega'$ and vice versa.

Formally, for an arbitrary parametrisation $P\in\PG$ and node $v$:
$$\forall(u,v,s)\in R:(s\in\{-1,1\}\Rightarrow
P\in\P_{(u,v,s)})\Leftrightarrow$$
$$\forall\omega,\omega'\in\Omega_v:
(\omega\preceq_v\omega'\Rightarrow
P_{v,\omega}\leq P_{v,\omega'})$$
\end{lemma}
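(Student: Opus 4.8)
The statement is a biconditional, so the plan is to prove each implication separately, with a single bridging observation doing most of the work: the monotonicity order $\preceq_v$ is generated by single-regulator unit steps. Concretely, whenever $\omega\preceq_v\omega'$ I can pass from $\omega$ to $\omega'$ along a finite chain $\omega=\eta^0,\eta^1,\dots,\eta^k=\omega'$ in which each $\eta^{i+1}$ differs from $\eta^i$ in exactly one regulator $u\in\innodes v$ and only by one unit: an increase of $u$ when $(u,v,+1)\in R$ and a decrease when $(u,v,-1)\in R$, leaving regulators on which $\omega$ and $\omega'$ already agree untouched. Such a chain exists precisely because $\omega\preceq_v\omega'$ forces $\omega_u\le\omega'_u$ in the positive case, $\omega_u\ge\omega'_u$ in the negative case, and $\omega_u=\omega'_u$ otherwise; all intermediate values lie between $\omega_u$ and $\omega'_u$ and hence in $\domv u$, so every $\eta^i$ is a legitimate element of $\Omega_v$. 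Well-formedness of $R$ guarantees at most one sign per influence, so the direction of each step is unambiguous. Observability constraints play no role, as the left-hand side only constrains $P$ for $s\in\{-1,1\}$.

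For the forward implication, I assume $P$ satisfies every monotonic constraint on influences of $v$ and fix $\omega\preceq_v\omega'$. Along the chain above, each elementary step $\eta^i\to\eta^{i+1}$ is exactly an instance of the inequality defining $\P_{(u,v,+1)}$ or $\P_{(u,v,-1)}$: a unit increase of $u$ under positive monotonicity does not decrease $P_{v,\cdot}$, and a unit decrease of $u$ under negative monotonicity likewise does not decrease it. Hence $P_{v,\eta^i}\le P_{v,\eta^{i+1}}$ at every step, and transitivity gives $P_{v,\omega}\le P_{v,\omega'}$.

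For the converse, I assume $P_{v,\omega}\le P_{v,\omega'}$ holds whenever $\omega\preceq_v\omega'$ and verify each monotonic constraint directly on adjacent contexts. Given $(u,v,+1)\in R$, an arbitrary $\omega\in\Omega_v$ and $x_u\in\range 1{\mv u}$, I compare $\subst\omega u{x_u-1}$ with $\subst\omega u{x_u}$: these agree on every regulator except $u$, where the first is strictly smaller, so $\subst\omega u{x_u-1}\preceq_v\subst\omega u{x_u}$ (the equal coordinates satisfy whichever sign clause applies, and $u$ satisfies the positive clause). The hypothesis then yields $P_{v,\subst\omega u{x_u-1}}\le P_{v,\subst\omega u{x_u}}$, which is precisely the defining inequality of $\P_{(u,v,+1)}$. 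The case $(u,v,-1)\in R$ is symmetric: one checks instead $\subst\omega u{x_u}\preceq_v\subst\omega u{x_u-1}$ and applies the hypothesis to recover the negative-monotonicity inequality.

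I do not anticipate a real obstacle: once the reduction to single-regulator unit steps is in place, both directions are bookkeeping. The only point demanding care is confirming that an adjacent pair of contexts differing in a single regulator is genuinely $\preceq_v$-comparable --- this rests on the ``otherwise'' clause forcing equality on the untouched regulators together with well-formedness excluding a regulator that carries both signs --- so I would make that comparability check explicit rather than leave it tacit.
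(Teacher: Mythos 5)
Your proof is correct and follows essentially the same route as the paper's: both reduce the monotonicity constraints to inequalities on contexts differing by a unit step in a single regulator and then extend to arbitrary $\preceq_v$-comparable pairs by transitivity. Your version merely makes explicit the chain construction and the comparability check that the paper's chain of equivalences leaves implicit.
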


\begin{proof}
We conduct the proof directly.
$$\forall(u,v,s)\in R:s\in\{-1,1\}\Rightarrow P\in\P_{(u,v,s)}$$
$$\Longleftrightarrow$$
$$\forall(u,v,s)\in R:s\in\{-1,1\}\Rightarrow\forall\omega\in\Omega_v,\forall x_u\in\domv u:$$
$$P_{v,\subst\omega u {x_u}}\geq P_{v,\subst\omega u {x_u-s}}$$
with the obvious exception of $x_u=0$ in case $s=1$, and $x_u=m_u$ when $s=-1$.
$$\Longleftrightarrow$$
$$\forall\omega,\omega'\in\Omega_v:\omega\preceq_v\omega'\wedge$$
$$(\forall\bar\omega\in\Omega_v:\omega\preceq_v\bar\omega\preceq_v\omega'\Rightarrow
\omega=\bar\omega\vee\omega'=\bar\omega)\Rightarrow
P_{v,\omega}\leq P_{v,\omega'}$$
$$\Longleftrightarrow \text{by transitivity of} \preceq_v,\leq$$
$$\forall\omega,\omega'\in\Omega_v:\omega\preceq_v\omega'\Rightarrow P_{v,\omega}\leq P_{v,\omega'}$$
\end{proof}

\begin{lemma}\label{lem:mono-order-bounds}
The lower and upper boundary parametrisations, $\Plb$ and $\Pub$ respectively, of the smallest convex sublattice covering the concrete parametrisation set $p_R(T)$ for some transition set $T$ have, for every node $v$ and any couple $\omega,\omega'\in\Omega_v$ such that $\omega\preceq_v\omega'$, $\Plb_{v,\omega}\leq\Plb_{v,\omega'}$ and $\Pub_{v,\omega}\leq\Pub_{v,\omega}$.

Formally, for a set of transitions $T$ such that $p_R(T)\neq\emptyset$ and $(L,U)=[p_R(T)]$, and an arbitrary node $v$:
$$\forall\omega,\omega'\in\Omega_v:\omega\preceq_v\omega'\Rightarrow
\Plb_{v,\omega}\leq\Plb_{v,\omega'}\wedge
\Pub_{v,\omega}\leq\Pub_{v,\omega'}$$
\end{lemma}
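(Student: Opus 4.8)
The plan is to reduce the statement to two ingredients: a coordinatewise description of the bounds $(\Plb,\Pub)=[p_R(T)]$, and the monotonicity property already recorded in Lemma~\ref{lem:mono-order}. Since $\PG$ is a finite lattice under the parametrisation order, with meet and join taken coordinatewise, every convex sublattice containing the nonempty set $p_R(T)$ must contain its global meet $\bigwedge p_R(T)$ and global join $\bigvee p_R(T)$, and, being convex, the entire order interval between them. Hence $[p_R(T)]$ is exactly that interval, and for every node $v$ and every $\omega\in\Omega_v$ its endpoints satisfy
$$\Plb_{v,\omega}=\min_{P\in p_R(T)} P_{v,\omega}, \qquad \Pub_{v,\omega}=\max_{P\in p_R(T)} P_{v,\omega}.$$
These minima and maxima are attained because $p_R(T)$ is finite and, by hypothesis, nonempty.

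First I would note that $p_R(T)\subseteq\bigcap_{(u,v,s)\in R}\P_{(u,v,s)}$ by Definition~\ref{def:pRT}, so every $P\in p_R(T)$ satisfies all the monotonicity constraints on the influences of $v$. Lemma~\ref{lem:mono-order} then yields, for every $\omega,\omega'\in\Omega_v$ with $\omega\preceq_v\omega'$ and every $P\in p_R(T)$, the inequality $P_{v,\omega}\leq P_{v,\omega'}$.

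The two inequalities then follow by a short chaining. Fixing $\omega\preceq_v\omega'$, let $P^\star\in p_R(T)$ attain the minimum of the $\omega'$ coordinate, $P^\star_{v,\omega'}=\Plb_{v,\omega'}$; then
$$\Plb_{v,\omega}\leq P^\star_{v,\omega}\leq P^\star_{v,\omega'}=\Plb_{v,\omega'},$$
the first step since $\Plb_{v,\omega}$ is the minimum over the $\omega$ coordinate, the second by the monotonicity inequality above. Symmetrically, letting $P^{\star\star}\in p_R(T)$ attain the maximum of the $\omega$ coordinate, $P^{\star\star}_{v,\omega}=\Pub_{v,\omega}$, gives
$$\Pub_{v,\omega}=P^{\star\star}_{v,\omega}\leq P^{\star\star}_{v,\omega'}\leq\Pub_{v,\omega'}.$$

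The one genuinely delicate point is the first paragraph: justifying that the endpoints of $[p_R(T)]$ are the coordinatewise extrema over $p_R(T)$ \emph{itself}, and not over some larger set generated from it. This rests on the standard fact that in a finite lattice the smallest convex sublattice containing a nonempty set $S$ is the order interval $[\bigwedge S,\bigvee S]$; I expect this to be where most of the care is needed, while the remainder reduces to the two-line chainings above once Lemma~\ref{lem:mono-order} is invoked.
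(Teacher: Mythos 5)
Your proof is correct and follows essentially the same route as the paper's: both arguments reduce to the observation that the coordinates of $\Plb$ and $\Pub$ are the coordinatewise minima and maxima over $p_R(T)$, and then chain this with the pointwise inequality $P_{v,\omega}\leq P_{v,\omega'}$ supplied by Lemma~\ref{lem:mono-order}. The only difference is presentational — you make explicit the standard fact that $[p_R(T)]$ is the order interval $[\bigwedge p_R(T),\bigvee p_R(T)]$, which the paper uses implicitly — and this added care is harmless.
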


\begin{proof}
Let $\omega,\omega'\in\Omega_v:\omega\prec_v\omega'$ be arbitrary.

We know $\forall P\in p_R(T)$ that $P_{v,\omega}\geq\Plb_{v,\omega}$.
Furthermore by lemma~\ref{lem:mono-order} we have $P_{v,\omega'}\geq P_{v,\omega}$ and thus $P_{v,\omega'}\geq\Plb_{v,\omega}$.
Since this holds for all $P\in p_R(T)$, it must also hold that $\Plb_{v,\omega'}\geq\Plb_{v,\omega}$.

The proof for the upper boundary parametrisation $U$ is symmetrical.
\end{proof}

Lemma~\ref{lem:observ_parity}, on the other hand, captures a very important property of the obeservability constraint.
The property slightly resembles parity, as given some influence of node $v$ not observable under a parametrisation $P$,
changing the value of an arbitrary single parameter of node $v$ regulation in $P$ results in a parametrisation $P'$ such that all influences of $v$ are observable under $P'$.
The connection with parity is especially apparent in case of Boolean networks, where parameters can only have values $0$ or $1$.
In the Boolean case, every influence of $v$ is observable under any parametrisation that has odd number of node $v$ regulation parameters valued $1$.
The reverse does not apply, however, as odd parity of parameters valued $1$ is sufficient, but not necessary condition for observability of all influences in Boolean PRNs.

\begin{lemma}\label{lem:observ_parity}
Given a parametrisation $P\in\PG$ and a constraint $r=(u,v,\o)\in R$ such that $P\notin\P_r$,
then for every other parametrisation $P'$ that differs from $P$ in value of exactly one $\omega\in\Omega_v$,
and for all observability constraints $r'=(u',v,o)\in R$ on $v$, $P'\in\P_{r'}$.

Formally, for a parametrisation $P\in\PG$ and a constraint $r=(u,v,o)\in R$ such that $P\notin\P_r$:
$$\forall\omega\in\Omega_v,\forall x_v\in\domv v:x_v\neq P_{v,\omega}\Rightarrow
\forall(u',v,o)\in R:\subst P {v,\omega} {x_v}\in\P_{(u',v,o)}$$
\end{lemma}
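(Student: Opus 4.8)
The plan is to argue directly from Def.~\ref{def:Pr}, using the single structural fact that $P\notin\P_{(u,v,\o)}$ makes $P_v$ \emph{flat in the $u$-direction}. First I would unfold the hypothesis: $P\notin\P_{(u,v,\o)}$ says exactly that for every $\omega\in\Omega_v$ and every $x_u\in\range 1{\mv u}$ one has $P_{v,\subst\omega u{x_u}}=P_{v,\subst\omega u{x_u-1}}$. A telescoping induction along the $u$-coordinate then yields the clean form I will use: whenever two contexts $\omega,\bar\omega\in\Omega_v$ differ only in their $u$-component, $P_{v,\omega}=P_{v,\bar\omega}$.

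Next I would fix the perturbed context $\omega_0$ together with the new value $x_v\neq P_{v,\omega_0}$, and set $P'=\subst P{v,\omega_0}{x_v}$, which coincides with $P$ at every context except $\omega_0$. For an arbitrary observability constraint $(u',v,\o)\in R$ (so $u'\in\innodes v$, and I take the harmless standing assumption $\mv{u'}\geq 1$, without which $\P_{(u',v,\o)}$ would be empty and unsatisfiable), the goal is to exhibit a single $u'$-edge, i.e.\ a pair of contexts differing by one $u'$-step, on which $P'$ takes two distinct values; by Def.~\ref{def:Pr} this is precisely $P'\in\P_{(u',v,\o)}$. I would split into $u'=u$ and $u'\neq u$.

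For $u'=u$ the claim is immediate: choosing any $u$-neighbour $\tau$ of $\omega_0$ (which exists as $\mv u\geq 1$), flatness gives $P'_{v,\tau}=P_{v,\tau}=P_{v,\omega_0}$, while $P'_{v,\omega_0}=x_v\neq P_{v,\omega_0}$, so the edge $(\omega_0,\tau)$ already witnesses observability. For $u'\neq u$ I would look at the elementary square at $\omega_0$ spanned by one $u$-step and one $u'$-step, with corners $\omega_0$, its $u'$-neighbour $\sigma$, its $u$-neighbour $\tau$, and the diagonal $\rho=\subst\tau{u'}{\sigma_{u'}}$. Since only $\omega_0$ is perturbed and $P_v$ is flat along $u$, I get $P'_{v,\tau}=P_{v,\omega_0}$ and $P'_{v,\rho}=P_{v,\sigma}=P'_{v,\sigma}$, so the two $u'$-edges of the square carry values $(x_v,P_{v,\sigma})$ on $(\omega_0,\sigma)$ and $(P_{v,\omega_0},P_{v,\sigma})$ on $(\tau,\rho)$. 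Were both edges constant, I would obtain $x_v=P_{v,\sigma}=P_{v,\omega_0}$, contradicting $x_v\neq P_{v,\omega_0}$; hence at least one edge is non-constant and witnesses $P'\in\P_{(u',v,\o)}$.

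As $(u',v,\o)$ was arbitrary, this proves the lemma. I expect the $u'\neq u$ case to be the main obstacle: the non-obvious content is that perturbing one parameter of a $u$-flat parametrisation necessarily induces observable variation in \emph{every} regulator direction simultaneously, and the two-edge square argument is exactly what forces this without any case analysis on the sign of the perturbation. The flatness reduction and the $u'=u$ case are routine bookkeeping, the only point to watch being the standing assumption $\mv w\geq 1$ for the regulators $w\in\{u,u'\}$ whose neighbours the argument consumes.
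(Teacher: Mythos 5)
Your proposal is correct and follows essentially the same route as the paper's proof: establish that $P\notin\P_{(u,v,\o)}$ makes $P$ flat along the $u$-direction, dispose of the case $u'=u$ with a single $u$-neighbour, and for $u'\neq u$ use the elementary square spanned by one $u$-step and one $u'$-step (your $\omega_0,\sigma,\tau,\rho$ are the paper's $\omega,\omega',\hat\omega,\hat\omega'$), concluding that both $u'$-edges being constant would force $x_v=P_{v,\omega}$. The only cosmetic difference is that you phrase the final step as a contradiction on the two edges, where the paper splits on whether $x_v=P_{v,\omega'}$; the content is identical.
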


\begin{proof}
Let $P\in\PG$ and $r=(u,v,o)\in R$ be such that $P\notin\P_r$ and
let $\omega\in\Omega_v,x_v\in\domv v$ be arbitrary such that $P_{v,\omega}\neq x_v$.

We denote the modified parametrisation as $P'=\subst P {v,\omega} {x_v}$
and a regulator state identical to $\omega$ up to value of $u$ as $\hat\omega=\subst \omega u {\omega_u+k}$ where $k=\begin{cases}
1 &\omega_u=0\\
-1 &\text{otherwise}
\end{cases}$.

For $r$ we get $P'_{v,\hat\omega}=P_{v,\hat\omega}=P_{v,\omega}\neq P'_{v,\omega}$ and thus $P'\in\P_r$.
Now let us assume $v$ has at least two (observable) influences and let $r'=(u',v,o)\in R$ be arbitrary.

First, we introduce two additional regulator states.
The regulator state identical to $\omega$ up to the value $u'$, denoted $\omega'=\subst \omega {u'} {\omega_{u'}+k'}$
and the regulator state identical to $\omega$ up to the values of both $u$ and $u'$, denoted $\hat\omega'=\subst {\hat\omega} {u'} {\omega_{u'}+k'}$.
Where $k'=\begin{cases}
1 &\omega_{u'}=0\\
-1 &\text{otherwise}
\end{cases}$.

The regulator states $\omega,\hat\omega,\omega'$ and $\hat\omega'$ are now used to show that $(u',v)$ is indeed observable under $P'$.
This is achieved by showing that either $P'_{v,\omega}\neq P'_{v,\omega'}$ or $P'_{v,\hat\omega}\neq P'_{v,\hat\omega'}$ as
both $\omega,\omega'$ and $\hat\omega,\hat\omega'$ differ only in the value of $u'$.
We also use the analogous fact that $\omega,\hat\omega$ and $\omega',\hat\omega'$ differ only in the value of $u$.

The result is trivial if $x_v\neq P_{v,\omega'}$ as $x_v=P'_{v,\omega}\neq P_{v,\omega'}=P'_{v,\omega'}$.
Thus, $P'\in\P_{r'}$.

Let us therefore assume $P_{v,\omega'}=x_v=P'_{v,\omega}$.
Since $P\notin\P_r$ we know that $P_{v,\omega}=P_{v,\hat\omega}$ and $P_{v,\omega'}=P_{v,\hat\omega'}$.
As $P'$ only differs from $P$ on $\omega$, we can expand the previous to obtain $P'_{v,\omega}\neq P_{v,\omega}=P_{v,\hat\omega}=P'_{v,\hat\omega}$
and $P'_{v,\omega'}=P_{v,\omega'}=P_{v,\hat\omega'}=P'_{v,\hat\omega'}$.
Here we use our assumption $P_{v,\omega'}=P'_{v,\omega}$ to obtain $P'_{v,\hat\omega'}=P'_{v,\omega'}=P_{v,\omega'}=P'_{v,\omega}\neq P'_{v,\hat\omega}$
giving us the coveted $P'_{v,\hat\omega'}\neq P'_{v,\hat\omega}$. Thus, $P'\in\P_{r'}$.
\end{proof}

A crucial property of the monotonicity and observability constraints is captured in theorem~\ref{lem:density}.
Given a smallest convex sublattice $(\Plb,\Pub)=[p_R(T)]$ covering a parameter set for some $T$,
there are strict limits on the conditions under which there may exist values $k,k'\in \domv v$ for a couple of regulator states $\omega,\omega'\in\Omega_v$, respectively,
such that $\Plb_{v,\omega}\leq k\leq\Pub_{v,\omega}$ and $\Plb_{v,\omega'}\leq k'\leq\Pub_{v,\omega'}$, but there is no parametrisation $P\in p_R(T)$ with $P_{v,\omega}=k$ and $P_{v,\omega'}=k'$.
The case when no parametrisation with values $k,k'$ for $\omega$ and $\omega'$, respectively, exists in the concrete set solely due to monotonicity constraints,
i.e. $\omega\preceq_v\omega'$ and $k>k'$, is trivial and is not considered in the theorem.

In other words, theorem~\ref{lem:density} ensures a sort of density of the concrete parametrisation set $p_R(T)$,
meaning that by constructing the smallest convex sublattice,
the parametrisations included to satisfy convexity are limited and the observability constraint disqualifying them from the concrete set can be identified.
This result is used in the proof of theorem~\ref{thm:pabsTR} to show that if the smallest convex sublattice covering $p_R(T)$ grows smaller with the inclusion of a new transition $t$,
the abstract counterpart $\pabs_R(T)$ will reflect the change via restrictions $\restrict_t$ and namely $\restrict_R$.

\begin{theorem}
\label{lem:density}
Given a PRN $G_m$, a set of transitions $T\subseteq\Delta(G_m)$
and a well-formed set of constraints $R$ such that $(\Plb,\Pub)=[p_R(T)]\neq\emptylattice$.
Then $\forall v\in V$ and arbitrary couple $\omega,\omega'\in\Omega_v$:
\begin{align*}
&\forall y_v\in\{\Plb_{v,\omega},\dots,\Pub_{v,\omega}\},\\
&\forall z_v\in\{l_{v,\omega'}(\{(\omega,y_v)\}),\dots,u_{v,\omega'}(\{(\omega,y_v)\})\}:\\
(\exists\omega''\in\Omega_v\setminus\{\omega,\omega'\}:&
l_{v,\omega''}(\{(\omega,y_v),(\omega',z_v)\})<
u_{v,\omega''}(\{(\omega,y_v),(\omega',z_v)\})\Rightarrow\\
\exists P\in p_R(T):&P_{v,\omega}=y_v\wedge P_{v,\omega'}=z_v
\end{align*}
where
\begin{align*}
l_{v,\omega}&:2^{\Omega_v\times\domv v}\rightarrow\domv v\\
l_{v,\omega}(\mathcal{O})&\DEF\max(\{k\mid(\omega',k)\in\mathcal{O}:\omega'\prec_v\omega\}\cup
\{\Plb_{v,\omega}\})\\
u_{v,\omega}&:2^{\Omega_v\times\domv v}\rightarrow\domv v\\
u_{v,\omega}(\mathcal{O})&\DEF\min(\{k\mid(\omega',k)\in\mathcal{O}:\omega'\succ_v\omega\}\cup
\{\Pub_{v,\omega}\})
\end{align*}
\end{theorem}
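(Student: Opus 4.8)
The plan is to prove the statement constructively: given the pinned values $y_v$ at $\omega$ and $z_v$ at $\omega'$, I would exhibit an explicit parametrisation $P\in p_R(T)$ realising them (assuming $\omega\neq\omega'$; the case $\omega=\omega'$ is degenerate and forces $y_v=z_v$). Working locally at $v$, as justified in the appendix preamble, recall that $p_R(T)=p(T)\cap\bigcap_{\text{monotone }r\in R}\P_r\cap\bigcap_{(u,v,\o)\in R}\P_{(u,v,\o)}$. By Theorem~\ref{thm:pabsT} the set $p(T)$ is the full box of per-context transition bounds, and since $(\Plb,\Pub)=[p_R(T)]\supseteq p_R(T)$ its coordinatewise range is contained in that of $p(T)$; hence any choice inside $[\Plb_{v,\bar\omega},\Pub_{v,\bar\omega}]$ for every $\bar\omega$ automatically lies in $p(T)$. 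It therefore suffices to build a parametrisation that (i) stays inside $[\Plb,\Pub]$, (ii) is $\preceq_v$-monotone (so that it satisfies every monotonicity constraint, by Lemma~\ref{lem:mono-order}), (iii) pins $\omega\mapsto y_v$ and $\omega'\mapsto z_v$, and (iv) satisfies every observability constraint. The hypotheses on the ranges of $y_v,z_v$ together with the monotonicity of the bounds (Lemma~\ref{lem:mono-order-bounds}) ensure the two pinned values are already mutually consistent with $\preceq_v$.

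The crucial construction is a pair of parametrisations differing in exactly one coordinate, namely $\omega''$. Put $a=l_{v,\omega''}(\{(\omega,y_v),(\omega',z_v)\})$ and $b=u_{v,\omega''}(\{(\omega,y_v),(\omega',z_v)\})$; the slack hypothesis is precisely $a<b$. I would define $P^b$ as the least monotone extension driven by the three sources $\omega\mapsto y_v$, $\omega'\mapsto z_v$, $\omega''\mapsto b$, i.e. $P^b_{v,\bar\omega}=\max\!\left(\{\Plb_{v,\bar\omega}\}\cup\{y_v\mid\omega\preceq_v\bar\omega\}\cup\{z_v\mid\omega'\preceq_v\bar\omega\}\cup\{b\mid\omega''\preceq_v\bar\omega\}\right)$, and then set $P^a=\subst{P^b}{v,\omega''}{a}$. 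Using Lemma~\ref{lem:mono-order-bounds} one checks termwise that $P^b$ stays within $[\Plb,\Pub]$, is monotone, and pins the coordinates correctly — the key small facts being that $b\le y_v$ whenever $\omega''\prec_v\omega$ and $b\le z_v$ whenever $\omega''\prec_v\omega'$ (both read off the definition of $u_{v,\omega''}$), so the extra source $b$ never disturbs the values at $\omega,\omega'$. For $P^a$, lowering $\omega''$ to $a$ preserves monotonicity because every $\preceq_v$-predecessor of $\omega''$ already carries a value $\le a$ (this is exactly the content of $a=l_{v,\omega''}$) while every successor carries a value $\ge b>a$. Consequently $P^a$ and $P^b$ are two monotone, in-box parametrisations with the required pinned values that agree everywhere except at $\omega''$.

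The final step invokes the parity-like Lemma~\ref{lem:observ_parity}. Both $P^a$ and $P^b$ already lie in $p(T)$ and satisfy all monotonicity constraints, so the only remaining obstacle to membership in $p_R(T)$ is observability. If $P^a$ satisfies every observability constraint, take $P=P^a$. Otherwise $P^a\notin\P_{r_0}$ for some $r_0=(u_0,v,\o)$, and since $P^b$ is obtained from $P^a$ by changing the single coordinate $\omega''$ from $a$ to $b\neq a$, Lemma~\ref{lem:observ_parity} yields $P^b\in\P_{r'}$ for every observability constraint $r'$; then $P=P^b\in p_R(T)$ realises the pinned values. In either branch we obtain the desired witness.

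I expect the middle construction to be the main obstacle: producing two admissible parametrisations that differ in exactly one coordinate is what makes Lemma~\ref{lem:observ_parity} applicable, and it forces the precise choice $a=l_{v,\omega''}$, $b=u_{v,\omega''}$ together with a somewhat fiddly case analysis (comparable versus incomparable contexts) to confirm box membership and monotonicity simultaneously at $\omega$, $\omega'$, $\omega''$ and their predecessors and successors. Once that is in place, the observability argument is a clean one-line application of the parity lemma, and the reduction to ``monotone $+$ in-box'' via Theorem~\ref{thm:pabsT} is routine.
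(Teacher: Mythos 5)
Your proposal is correct and follows essentially the same route as the paper's proof: the paper likewise builds a least $\preceq_v$-monotone extension pinning $y_v$ at $\omega$ and $z_v$ at $\omega'$, places $u_{v,\omega''}(\cdot)$ (resp.\ $l_{v,\omega''}(\cdot)$) at $\omega''$ to obtain two admissible parametrisations differing in that single coordinate, and concludes with Lemma~\ref{lem:observ_parity}. The only differences are cosmetic (your closed-form $\max$ versus the paper's iterative computation along $\preceq_v$, and your explicit handling of the degenerate case $\omega=\omega'$).
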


\begin{proof}
We propose a parametrisation $P$ constructed as follows:
\begin{enumerate}
	\item $P_{v,\omega}=y_v\wedge P_{v,\omega'}=z_v$
	\item $P_{v,\omega''}=u_{v,\omega''}(\{(\omega,y_v),(\omega',z_v)\})$
	\item $\forall\bar\omega\in\Omega_v\setminus\{\omega,\omega',\omega''\}:P_{v,\bar\omega}=
		\max(\{P_{v,\bar\omega'}\mid\bar\omega'\in\Omega_v:\bar\omega'\prec_v\bar\omega\}
	\cup\{\Plb_{v,\bar\omega}\})$ computed iteratively in the increasing direction of $\preceq_v$.
\end{enumerate}
And a second parametrisation
$P'=\update P {v,\omega''} {} {l_{v,\omega''}(\{(\omega,y_v),(\omega',z_v)\})}$.

We will now show that $P,P'$ respect the lower and upper bounds $\Plb,\Pub$:
\begin{enumerate}
	\item $\Plb_{v,\omega}\leq y_v=P_{v,\omega}=P'_{v,\omega}\leq\Pub_{v,\omega}$
	\item $\Plb_{v,\omega'}\leq l_{v,\omega'}(\{(\omega,y_v)\})\leq
			z_v=P_{v,\omega'}=P'_{v,\omega'}\leq
			u_{v,\omega'}(\{(\omega,y_v)\})\leq\Pub_{v,\omega'}$
	\item $\Plb_{v,\omega''}\leq l_{v,\omega''}(\{(\omega,y_v),(\omega',z_v)\})=
			P'_{v,\omega''}<P_{v,\omega''}=
u_{v,\omega''}(\{(\omega,y_v),(\omega',z_v)\})\leq\Pub_{v,\omega''}$
	\item $\forall\bar\omega\in\Omega_v\setminus\{\omega,\omega',\omega''\}:
			\Plb_{v,\bar\omega}\leq P_{v,\bar\omega}=P'_{v,\bar\omega}$ by definition.
			
			Using lemma \ref{lem:mono-order-bounds} we have
			$\Pub_{v,\bar\omega}\geq
			max(\{\Pub_{v,\bar\omega'}\mid\bar\omega'\in\Omega_v:
			\bar\omega'\prec_v\bar\omega\})\Rightarrow
			\Pub_{v,\bar\omega}\geq P_{v,\bar\omega}=P'_{v,\bar\omega}$
\end{enumerate}

Furthermore, both $P$ and $P'$ satisfy all the monotonicity constraints.

For $\bar\omega\in\Omega_v\setminus\{\omega,\omega',\omega''\}$ and any other $\bar\omega'\in\Omega_v$ we have
$\hat\omega\preceq_v\hat\omega'\Rightarrow P_{v,\hat\omega}\leq P_{v,\hat\omega'}$
and $\hat\omega\succeq_v\hat\omega'\Rightarrow P_{v,\hat\omega}\geq P_{v,\hat\omega'}$ by definition.

The same holds for $P'$ as long as we prove $P'_{v,\hat\omega}\leq P'_{v,\omega''}$ for $\hat\omega\preceq_v\omega''$ since $P'_{v,\omega''}<P_{v,\omega''}$.
Let us first consider $\neg(\omega\preceq_v\hat\omega)$ and $\neg(\omega'\preceq_v\hat\omega)$.
Then, by definition, $P'_{v,\hat\omega}=\max_{\hat\omega''\preceq_v\hat\omega}(\Plb_{v,\hat\omega''})$ and thanks to lemma~\ref{lem:mono-order-bounds} $P'_{v,\hat\omega}=\Plb_{v,\hat\omega}$.
We know $P'_{v,\omega''}=l_{v,\omega''}(\{(\omega,y_v),(\omega',z_v)\})\geq\Plb_{v,\omega''}$ and thus,
by application of lemma~\ref{lem:mono-order-bounds} again, we obtain $\Plb_{v,\hat\omega}\leq\Plb_{v,\omega''}\leq P'_{v,\omega''}$.

Let us now consider either $\omega\preceq_v\hat\omega$ and/or $\omega'\preceq_v\hat\omega$. This translates to $P_{v,\hat\omega}=\max\{\Plb_{v,\hat\omega},y_v,z_v\}$.
We have $P'_{v,\omega''}\geq\Plb_{v,\hat\omega}$
and $y_v\leq l_{v,\omega''}(\{(\omega,y_v),(\omega',z_v)\})=P'_{v,\omega''}$,
$z_v\leq l_{v,\omega''}(\{(\omega,y_v),(\omega',z_v)\})=P'_{v,\omega''}$ follow directly from the definition of $l_{v,\omega}$.
Note that this also proves monotonicity satisfaction in case $\omega\preceq_v\omega''$ and $\omega'\preceq_v\omega''$
and since $P'_{v,\omega''}<P_{v,\omega''}$ the monotonicity is satisfied in those cases for $P$ as well.

The same reasoning can be applied to show $\omega\preceq_v\omega'\Rightarrow P_{v,\omega}\leq P_{v,\omega'}$
since $P_{v,\omega}=y_v\leq l_{v,\omega'}(\{(\omega,y_v)\})\leq z_v= P_{v,\omega'}$ by definition.
Thanks to $P_{v,\omega}=P'_{v,\omega}$ and $P_{v,\omega'}=P'_{v,\omega'}$, the result is extended to $P'$.

All that is left to prove is that monotonicity is satisfied in case of the opposite direction of $\preceq_v$: $\omega'\preceq_v\omega$, $\omega''\preceq_v\omega$ and $\omega''\preceq_v\omega'$.
It is easy to see from definition of $u_{v,\omega}$, that monotonicity is satisfied in all three cases as $P_{v,\omega'}=z_v\leq y_v=P_{v,\omega}$,
$P_{v,\omega''}=u_{v,\omega''}(\{(\omega,y_v),(\omega',z_v)\})\leq y_v=P_{v,\omega}$
and $P_{v,\omega''}=u_{v,\omega''}(\{(\omega,y_v),(\omega',z_v)\})\leq z_v=P_{v,\omega'}$
holds for the three $\preceq_v$ conditions, respectively.
The same applies for $P'$ due to equality with $P$ on $\omega$ and $\omega'$, and the fact that $P'_{v,\omega''}<P_{v,\omega''}$.


Finally, we show that at least one of the parametrisations $P,P'$ also satisfies all the observability constraints.
Since $P$ and $P'$ differ in the value of exactly one regulator state $\omega''$, the result is obvious from lemma~\ref{lem:observ_parity}.
\end{proof}

Lastly, lemma~\ref{lem:mono-restrict} expands further on the property shown in Lemma~\ref{lem:mono-order-bounds}
to show that the fixed point for monotonic restrictions is reached exactly when both lower and upper boundary parametrisations $\Plb$ and $\Pub$,
respectively comply with the monotonicity order.

\begin{lemma}\label{lem:mono-restrict}
The fixed point of restriction by monotonic constraints is reached for a sublattice of parametrisations $(\Plb,\Pub)$
exactly when the following holds for any couple of node $v$ regulator states $\omega,\omega'\in\Omega_v$ such that $\omega\preceq_v\omega'$:
$\Plb_{v,\omega}\leq\Plb_{v,\omega'}$ and $\Pub_{v,\omega}\leq\Pub_{v,\omega'}$.

Formally, for an arbitrary node $v$ and an arbitrary sublattice of parametrisations $(\Plb,\Pub)\subseteq\PG$:
$$\restrict_{\{(u,v,s)\in R\mid s\in\{-1,1\}\}}(\Plb,\Pub)=(\Plb,\Pub)\Leftrightarrow$$
$$\forall\omega,\omega'\in\Omega_v:\omega\preceq_v\omega'\Rightarrow
\Plb_{v,\omega}\leq\Plb_{v,\omega'}\wedge
\Pub_{v,\omega}\leq\Pub_{v,\omega'}$$

\end{lemma}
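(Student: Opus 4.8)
The plan is to prove the biconditional in two directions, leaning throughout on the fact that each monotonic narrowing operator $\restrict_{(u,v,s)}$ is \emph{extensive-decreasing}: it can only raise a lower bound (via a $\max$) and lower an upper bound (via a $\min$), so it shrinks the convex sublattice coordinatewise. The first preparatory step is to unfold the fixpoint into plain inequalities. Since $\restrict_{(u,v,s)}$ is itself the fixpoint $\fpite{(\Plb,\Pub)}{f}$ of the map $f$ from Section~\ref{sec:constraints}, and $f$ only shrinks, the equality $\restrict_{(u,v,s)}(\Plb,\Pub)=(\Plb,\Pub)$ is equivalent to $f(\Plb,\Pub)=(\Plb,\Pub)$, i.e. to the two families of inequalities $\Plb_{v,\subst \omega u {\omega_u-s}}\leq\Plb_{v,\omega}$ and $\Pub_{v,\omega}\leq\Pub_{v,\subst \omega u {\omega_u+s}}$ holding for every $\omega\in\Omega_v$ whose shifted value lies in $\domv u$. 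I would then observe that, because every monotonic $\restrict_r$ only shrinks the sublattice and such shrinkage is never reverted by a later factor in the composition $\bigcirc_r \restrict_r$, the composite operator acts as the identity on $(\Plb,\Pub)$ if and only if each individual $\restrict_r$ does. Hence $\restrict_{\{(u,v,s)\in R\mid s\in\{-1,1\}\}}(\Plb,\Pub)=(\Plb,\Pub)$ is equivalent to the conjunction over all monotonic $(u,v,s)$ of the coordinate-wise equations above.

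For the $(\Leftarrow)$ direction I assume the bounds respect $\preceq_v$ and verify each coordinate-wise equation. For a positive constraint $(u,v,+1)$, the state $\subst\omega u {\omega_u-1}$ differs from $\omega$ only by lowering its $u$-component, so $\subst\omega u {\omega_u-1}\preceq_v\omega$, and the hypothesis gives $\Plb_{v,\subst\omega u {\omega_u-1}}\leq\Plb_{v,\omega}$, which is exactly the lower-bound equation for $s=+1$; the upper-bound equation follows from $\omega\preceq_v\subst\omega u {\omega_u+1}$. For a negative constraint $(u,v,-1)$ the roles of increase and decrease swap (increasing $u$ moves a state down in $\preceq_v$), and the same two equations follow symmetrically. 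Thus every factor, and hence the composite, is the identity.

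For the $(\Rightarrow)$ direction I assume the fixpoint is reached, so all coordinate-wise equations hold, and I prove the ordering of the bounds. Given $\omega\preceq_v\omega'$, I decompose the comparison into a chain $\omega=\omega^0\preceq_v\cdots\preceq_v\omega^k=\omega'$ of single monotonic moves, each changing exactly one monotonic regulator $u$ by $\pm 1$ in its $\preceq_v$-increasing direction (a positive $u$ increased, a negative $u$ decreased) and leaving all other coordinates fixed. Each step $\omega^{i}\to\omega^{i+1}$ is precisely one of the shifted states appearing in a fixpoint equation, so that equation yields $\Plb_{v,\omega^i}\leq\Plb_{v,\omega^{i+1}}$; transitivity of $\leq$ then gives $\Plb_{v,\omega}\leq\Plb_{v,\omega'}$, and the identical argument on the upper bounds gives $\Pub_{v,\omega}\leq\Pub_{v,\omega'}$.

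I expect the main obstacles to be the two bookkeeping points rather than any deep idea. First, the reduction ``at the composite fixpoint, each individual $\restrict_r$ is the identity'' must be argued carefully from the extensive-decreasing nature of the narrowings, since it is the hinge that lets me pass between the single-constraint inequalities and the combined operator. Second, in the $(\Rightarrow)$ direction I must check that the chain of single-coordinate steps stays inside $\Omega_v$ with every shifted value in $\domv u$, and that each step is genuinely $\preceq_v$-increasing so that the corresponding fixpoint inequality points in the correct direction; getting the sign of $s$ matched to the positive/negative case is exactly where a slip would occur.
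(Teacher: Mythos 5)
Your proposal is correct and follows essentially the same route as the paper's proof: unfold the fixpoint of the monotonic narrowings into the coordinate-wise single-step inequalities, then pass to the general $\preceq_v$ statement by transitivity (your explicit chain decomposition is exactly the paper's ``by transitivity of $\preceq_v,\leq$'' step, and your reduction of the composite fixpoint to the individual ones is implicit in the paper's first equivalence). The extra bookkeeping you flag is handled correctly and does not change the argument.
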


\begin{proof}
We conduct the proof directly.
$$\restrict_{\{(u,v,s)\in R\mid s\in\{-1,1\}\}}(\Plb,\Pub)=(\Plb,\Pub)$$
$$\Longleftrightarrow$$
$$\forall(u,v,s)\in R:s\in\{-1,1\}\Rightarrow\forall\omega\in\Omega_v,\forall x_u\in\domv u:$$
$$\Plb_{v,\subst\omega u {x_u}}\geq \Plb_{v,\subst\omega u {x_u-s}}\wedge\Pub_{v,\subst\omega u {x_u}}\geq \Pub_{v,\subst\omega u {x_u-s}}$$
with the obvious exception of $x_u=0$ in case $s=1$, and $x_u=m_u$ when $s=-1$.
$$\Longleftrightarrow \text{by transitivity of} \preceq_v,\leq$$
$$\forall\omega,\omega'\in\Omega_v:\omega\preceq_v\omega'\Rightarrow
\Plb_{v,\omega}\leq\Plb_{v,\omega'}\wedge
\Pub_{v,\omega}\leq\Pub_{v,\omega'}$$
\end{proof}

Finally, we conduct the proof of theorem \ref{thm:pabsTR} itself:
$$\pabs_R(T) = [p_R(T)]$$
\begin{proof}
Let $G_m$ be a PRN with well-formed set of constraints $R$ and let node $v$ be arbitrary.
The proof uses mathematical induction on the size of $T$.

Base case $T=\emptyset$:

Let us first remark that $p(\emptyset)=[p(\emptyset)]=\PG$ contains all possible parametrisations.
Likewise, no restriction by transition, $\restrict_t$, is called on the abstract counterpart $(\lb{\PG},\ub{\PG})=\pabs(\emptyset)=\PG$.
Therefore, only edge constraints in $R$ may be responsible for $p_R(\emptyset)\neq\PG$ or $\pabs_R(\emptyset)\neq\PG$.

For any couple of regulator states $\omega,\omega'\in\Omega_v$,
it holds that $\lb{\PG}_{v,\omega}=\lb{\PG}_{v,\omega'}=0$ and $\ub{\PG}_{v,\omega}=\ub{\PG}_{v,\omega'}=\mv v$.
As such, it is easy to see that any monotonicity constraint is satisfied by both $\lb{\PG}$ and $\ub{\PG}$.
Therefore if all constraints in $R$ are monotonic, both $\lb{\PG}\in p_R(\emptyset)$ and $\ub{\PG}\in p_R(\emptyset)$
and $[p_R(\emptyset)]=\PG=\pabs_R(\emptyset)$, the latter equality being derived from lemma~\ref{lem:mono-restrict}.

Let us now consider there exists at least one observability constraint $r=(u,v,\o)\in R$ on influences of $v$.
Surely the influence $(u,v)$ is not observable under $\lb{\PG}$ and $\ub{\PG}$ so the previous argument does not apply.

During the computation of $\restrict_r$,
the set of all regulator states that allow the value of $v$ to change with the change of value of $u$,
$A_{u,v}(\Plb,\Pub)=\PG$ becomes the whole possible parametrisation set.
(Exluding the pathological case when $\mv v=0$ and thus node $v$ cannot change value.
If $\mv v=0$, $|\PG|=1$ and $A_{u,v}(\Plb,\Pub)=\emptyset$ for any $u\in n^-(v)$.
It is then easy to see that $p_R(\emptyset)=\emptyset\Leftrightarrow\pabs_R(\emptyset)=\emptylattice\Leftrightarrow$ there exists an observable influence of $v$.)

As mentioned above, The parameter values of both $\lb{\PG}$ and $\ub{\PG}$ are equal for all regulator states of $v$.
The sets $\overline{B}$ and $\underline{B}$ thus contain all the $\preceq_v$-maximal, respectively $\preceq_v$-minimal, elements of $\Omega_v$.
If there exists an influence $(u',v)$ that is not monotonic, there will exist at least two distinct $\preceq_v$-maximal and $\preceq_v$-minimal elements
since for any $\omega\in\Omega_v:\subst\omega {u'} 0\parallel_v\subst\omega {u'} {\mv {u'}}$.
As such, no restriction occurs as $\restrict_r(\PG)=\PG=\pabs_R(\emptyset)$.

The same holds for the concrete parametrisation set,
as having two distinct $\preceq_v$-maximal elements $\overline\omega,\overline\omega'$ means
there exist two parametrisations $\overline P=\update{\lb{\PG}} {v,\overline\omega} {} {\mv v}$ and $\overline P'=\update{\lb{\PG}} {v,\overline\omega'} {} {\mv v}$
and both $P\in p_R(\emptyset)$ and $P'\in p_R(\emptyset)$, where constraint satisfaction follows from lemma~\ref{lem:mono-order} for monotonicity, and lemma~\ref{lem:observ_parity} for observability.
Thus $\lb{p_R(\emptyset)}=\lb{\PG}$ as for every $\omega\in\Omega_v$ there exists a parametrisation in $p_R(\emptyset)$ with parameter value $0$ for $\omega$.
A symmetrical construction can be done to show that $\ub{p_R(\emptyset)}=\ub{\PG}$ and thus $[p_R(\emptyset)]=\pabs_R(\emptyset)$.

Finally, let all the influences of $v$ be monotonic in addition to at least one, $(u,v)$, being observable.
All influences being monotonic means that for any couple $\omega,\omega'\in\Omega_v$ we have either $\omega\preceq_v\omega'$ or $\omega\succeq_v\omega'$
giving us a unique $\preceq_v$-minimal and $\preceq_v$-maximal elements, or alternatively, $|\overline{B}|=1=|\underline{B}|$.
As such $\restrict_r(\PG)=(\Plb,\Pub)$ will restrict the unique $\preceq_v$-maximal $\overline\omega$ to have value at least $1$, $\Plb_{v,\overline\omega}=1$,
and analogously, the unique $\preceq_v$-minimal $\underline\omega$ to be at most $\mv v-1$, $\Pub_{v,\underline\omega}=\mv v-1$.

It is also important to note that both $\Plb\in p_R(\emptyset)$ and $\Pub\in p_R(\emptyset)$,
constraint satisfaction again by lemmas~\ref{lem:mono-order} and~\ref{lem:observ_parity},
as this gives us $\pabs_R(\emptyset)\subseteq[p_R(\emptyset)]$.

All that remains to be shown is that $[p_R(\emptyset)]\subseteq\pabs_R(\emptyset)$.
Let $P\in\PG$ be arbitrary parametrisation such that $P_{v,\overline\omega}=0$.
Since $\overline\omega$ is the unique $\preceq_v$-maximal element $P$ has to have all the other parameter values also equal to $0$ in order to meet the monotonic constraints,
thus either $P=\lb{\PG}\notin p_R(\emptyset)$ or $\exists (u',v,s)\in R:P\notin\P_{(u',v,s)}\supseteq p_R(\emptyset)$ where $s\in\{-1,1\}$.
Again, symmetrical conditions apply to arbitrary parametrisation $P'$ with $P'_{v,\underline\omega}=\mv v$ to show $P'\notin p_R(\emptyset)$.

We have thus proven the base case $[p_R(\emptyset)]=\pabs_R(\emptyset)$.

Induction hypothesis: $\pabs_R(T)=[p_R(T)]$ for any $T$ such that $|T|\leq k$ for some $k\in\mathbb{N}$.

We show that $\pabs_R(T\cup\{t\})=[p_R(T\cup\{t\})]$ for arbitrary transition $t\notin T$.
The proof here is split into two separate branches.
We first prove soundness of the abstraction, $[p_R(T\cup\{t\})]\subseteq\pabs_R(T\cup\{t\})$,
and subsequently we prove that we achieve the best over-approximation, $\pabs_R(T\cup\{t\})\subseteq[p_R(T\cup\{t\})]$.

$[p_R(T\cup\{t\})]\subseteq\pabs_R(T\cup\{t\})$ (soundness):

Let us first remark that if $p_R(T\cup\{t\})=\emptyset$
the smallest convex sublattice is also empty $[p_R(T\cup\{t\})]=\emptylattice\subseteq\pabs_R(T\cup\{t\})$ regardless of the value of the abstract parametrisation set $\pabs_R(T\cup\{t\})$.
We therefore assume $p_R(T\cup\{t\})\neq\emptyset$.

To prove soundness we have to show that if a restriction $\restrict_t$ or $\restrict_R$ results in a strictly smaller lattice,
the change is also reflected in the concrete domain $p_R(T\cup\{t\})$ and the smallest convex sublattice covering it $[p_R(T\cup\{t\})]$.

We first show that $[p_R(T\cup\{t\})]$ is contained by the lattice $(\Plb,\Pub)=\restrict_t(\pabs_R(T))$,
i.e. any restriction imposed by $\restrict_t$ is reflected in the concrete parametrisation set.

Let $t=x\xrightarrow{v,s}y$ and let us assume $s=1$.
By definition of $\restrict_t$,
the only change that may occur is the increase of parameter value for $\omega_v(x)$ in the lower boundary parametrisation.
As such, no change occurs if $\lb{\pabs_R(T)}\geq y_v$ and $[p_R(T\cup\{t\})]\subseteq[p_R(T)]=\pabs_R(T)=(\Plb,\Pub)$.
Let us therefore assume $\lb{p_R(T)}<y_v$.
By definition $p_R(T\cup\{t\})=p_R(T)\cap\P_t$.
Furthermore, for all parametrisations $P\in\P_t$ it holds that $P_{v,\omega_v(x)}\geq y_v$,
thus $p_R(T\cup\{t\})$ contains exactly those parametrisations $P\in p_R(T)$ that have $P_{v,\omega_v(x)}\geq y_v$,
leading to $\lb{p_R(T\cup\{t\})}_{v,\omega_v(x)}\geq y_v=\Plb_{v,\omega_v(x)}$.
Coupled with $[p_R(T\cup\{t\})]\subseteq[p_R(T)]$ we obtain the coveted $\lb{p_R(T\cup\{t\})}\geq\Plb$.

The proof of $\ub{p_R(T\cup\{t\})}\leq\Pub$ in case the transition is decreasing, $s=-1$, is symmetrical.

To prove that restrictions enforced by $\restrict_R$ are also reflected in the concrete parameter set
we use induction again. The induction is conducted on the number of calls of $\restrict_r$ for individual constraints $r\in R$.

Formally, applying $\restrict_R$ to $\restrict_t(\pabs_R(T))$ in order to compute $\pabs_R(T\cup\{t\})$
translates into application of finitely many restrictions
$\restrict_{r_1}(\dots\restrict_{r_k}(\restrict_t(\pabs_R(T)))\dots)$ where $k\in\mathbb{N}$
and $\forall i\in\{1,\dots,k\}: r_i\in R$.
Note that the order of restriction application is not fixed.
The same fixpoint is reached regardless of the order, however,
and we consider an arbitrary sequence of restrictions that is valid in the sense of reaching the fixpoint.

To simplify notation, we use $\rho=(r_1,\dots,r_k)$ to denote the chain of constraints used for restriction by $\restrict_R$.
We use $\rho_i$ for $i\in\{0,\dots,k\}$ to denote prefix of $\rho$ of the length of $i$, $\rho_0$ being empty and $\rho_k=\rho$.
Additionally we write $\restrict_{\rho_i}$ to denote the application of the restrictions according to $\rho_i$.
The induction is thus conducted on the length of prefixes of $\rho$.

Base case $i=0$, is trivial as if no additional restriction happens we have $[p_R(T\cup\{t\})]\subseteq\restrict_t(\pabs_R(T))$.

Induction hypothesis: $[p_R(T\cup\{t\})]\subseteq(\Plb,\Pub)=\restrict_{\rho_i}(\restrict_t(\pabs_R(T)))$
 where $i\leq l$ for some $l\in\{0,\dots,k-1\}$.

We now prove that $[p_R(T\cup\{t\})]\subseteq(\Plb',\Pub')=\restrict_{\rho_{l+1}}(\restrict_t(\pabs_R(T)))$.
The result is trivial if $(\Plb',\Pub')=(\Plb,\Pub)$,
we thus assume inequality.

Let $r=(u,v,s)\in R$ be the last constraint in $\rho_{l+1}$.
We now conduct a discussion on the nature of $r$:
\begin{enumerate}[label=(\roman*)]
\item $r$ is monotonicity constraint, $s\in\{-1,1\}$.
By lemma~\ref{lem:mono-restrict} $\restrict_r((\Plb,\Pub))\neq(\Plb,\Pub)$ guarantees an existence of a couple $\omega,\omega'\in\Omega_v$
such that $\omega\preceq_v\omega'$, but $\Plb_{v,\omega}>\Plb_{v,\omega'}$ or $\Pub_{v,\omega}>\Pub_{v,\omega'}$.
Furthermore, from the definition of $\restrict_r$ we have $\Plb'_{v,\omega'}=\Plb_{v,\omega}$,
respectively $\Pub'_{v,\omega}=\Pub_{v,\omega'}$.

For any parametrisation $P\in p_R(T)$ such that $P_{v,\omega'}<\Plb_{v,\omega}$, respectively $P_{v,\omega}>\Pub_{v,\omega'}$,
we know $P\notin p_R(T\cup\{t\})$ by lemma~\ref{lem:mono-order}.
Thus, the bounds of $[p_R(T\cup\{t\})]$ must be $\lb{p_R(T\cup\{t\})}_{v,\omega'}\geq\Plb_{v,\omega}$,
respectively $\ub{p_R(T\cup\{t\})}_{v,\omega}\leq\Pub_{v,\omega'}$, giving us the coveted $[p_R(T\cup\{t\})]\subseteq(\Plb',\Pub')$.

\item $r$ is observability constraint, $s=\o$.
As we assume $(\Plb',\Pub')\neq(\Plb,\Pub)$, restriction $\restrict_r$ must have changed either the lower or upper bound.
By definition, $A_{u,v}(\Plb,\Pub)\neq\emptyset$ and at least one of the following: $|\underline{B}|=1$, $|\overline{B}|=1$.

Let us assume $\overline{B}=\{\omega\}$ for some $\omega\in\Omega_v$,
giving us $\Plb'_{v,\omega}=\Plb_{v,\omega}+1$.
From definition of $A_{u,v}(\Plb,\Pub)$, for any $\omega'\in\Omega_v\setminus A_{u,v}(\Plb,\Pub)$ and for all possible values $x_u$ of $u$,
all the lower bound and upper bound values $L_{v,\subst{\omega'} u {x_u}}$ and $\Pub_{v,\subst{\omega'} u {x_u}}$ are equal.
As such, there is no $\omega'$ outside of $A_{u,v}(\Plb,\Pub)$ that would allow observability satisfaction of $r$ as all upper and lower bounds are equal for any such $\omega'$ that differ in $u$ only.

Let us thus consider arbitrary $\omega'\in A_{u,v}(\Plb,\Pub)$ such that $\omega'\neq\omega$.
Since $\overline{B}\neq\emptyset$, it must hold that $\Plb_{v,\omega'}=\Plb_{v,\omega}$.
Furthermore we know that $\omega'$ is either $\preceq_v$-smaller than $\omega$ or incomparable.

Let us first consider $\omega'\parallel_v\omega$.
Since $\omega'\notin\overline{B}$, at least one of the three defining conditions must not hold.
We have already established $\Plb_{v,\omega'}=\Plb_{v,\omega}$.
Thus, either there exists $\omega''\in A_{u,v}(\Plb,\Pub)$ such that $\omega''\succ_v\omega'$,
in which case we repeat this analysis for $\omega''$,
ultimately leading to the following point by lemma~\ref{lem:mono-order-bounds} as $A_{u,v}(\Plb,\Pub)$ is finite.
Or $\Pub_{v,\omega'}=\Plb_{v,\omega'}=\Plb_{v,\omega}$.
Again, this gives us the same value, $\Plb_{v,\omega}$,
for all $\omega'\in A_{u,v}(\Plb,\Pub)$, $\omega'\parallel_v\omega$ and every parametrisation in $(\Plb,\Pub)$.

For the case $\omega'\preceq_v\omega$, lemma~\ref{lem:mono-order} gives us
$P_{v,\omega'}\leq P_{v,\omega}$ for any parametrisation $P\in(\Plb,\Pub)$.

Thus, for any parametrisation $P\in(\Plb,\Pub)$ such that $P_{v,\omega}=\Plb_{v,\omega}$
it also holds that $P_{v,\omega'}=\Plb_{v,\omega}$ for any $\omega'\in A_{u,v}(\Plb,\Pub)$.
Observability of $(u,v)$ is therefore not satisfied by any such $P$ giving us $P\notin p_R(T\cup\{t\})$
and $\lb{p_R(T\cup\{t\})}_{v,\omega}>\Plb_{v,\omega}$, thus $\lb{p_R(T\cup\{t\})}_{v,\omega}\geq\Plb'_{v,\omega}$.

The proof for the case $\underline{B}=\{\omega\}$ for some $\omega\in\Omega_v$ is symmetrical
and results in $\ub{p_R(T\cup\{t\})}_{v,\omega}\leq\Pub'_{v,\omega}$.

Since in case $\overline{B}\neq\emptyset$ only the lower bound is affected
and similarly in case $\underline{B}\neq\emptyset$ only the upper bound,
the combination of the results for both and the fact that $\Plb,\Plb'$
and $\Pub,\Pub'$ are always equal on all regulator states except $\omega$ gives us the coveted $[p_R(T\cup\{t\})]\subseteq(\Plb',\Pub')$.
\end{enumerate}

The above discussion concludes the proof of soundness, leaving only the inclusion in opposite direction to be proven.

$\pabs_R(T\cup\{t\})\subseteq[p_R(T\cup\{t\})]$ (best over-approximation):

Let us first remark that if the abstract parametrisation set is empty,
the infimum parametrisation in not smaller or equal to the supremum parametrisation,
we have $\pabs_R(T\cup\{t\})=\emptylattice\subseteq[p_R(T\cup\{t\})]$
regardless of the value of the smallest convex sublattice containing the concrete parametrisation set.
We therefore assume $\pabs_R(T\cup\{t\})\neq\emptylattice$.

If $t$ introduces no change to the smallest convex sublattice, $[p_R(T\cup\{t\})]=[p_R(T)]$
the result is trivial as $\pabs_R(T\cup\{t\})\subseteq\pabs_R(T)=[p_R(T)]$.
Let thus $\omega\in\Omega_v$ be such that $\lb{p_R(T\cup\{t\})}_{v,\omega}>\lb{p_R(T)}_{v,\omega}$
or $\ub{p_R(T\cup\{t\})}_{v,\omega}<\ub{p_R(T)}_{v,\omega}$
and let $t=x\xrightarrow{v,s}y$ where $s\in\{-1,1\}$.

Let us assume the lower bound has changed, $\lb{p_R(T\cup\{t\})}_{v,\omega}>\lb{p_R(T)}_{v,\omega}$,
instead of the upper bound and that the transition is increasing, $s=1$:

Note that the assumption of existence of $\omega$ expects $p_R(T\cup\{t\})\neq\emptyset$.
We therefore consider the change in lower bound of the value of $\omega$ as a disqualification of all parametrisations $P\in p_R(T)$
such that $P_{v,\omega}\leq\lb{p_R(T\cup\{t\})}_{v,\omega}$,
which in case of empty concrete parametrisation set translates to
all parametrisations with $P_{v,\omega}\leq\ub{p_R(T)}$, or simply all parametrisations in $p_R(T)$.
By abuse of notation we continue to use $\lb{p_R(T\cup\{t\})}$ and $\ub{p_R(T\cup\{t\})}$ in this sense even if $p_R(T\cup\{t\})=\emptyset$.
(Recall that we consider a lattice to be empty if the infimum parametrisation is not smaller or equal to supremum parametrisation.)

We now conduct a discussion on the relationship between $t$ and $\omega$ to show that
a restriction necessarily takes place in the abstract domain to reflect the change in concrete domain.
\begin{enumerate}[label=(\roman*)]
\item $\omega$ is the regulator state of $v$ in state $x$, $\omega=\omega_v(x)$
and the new lower bound for $\omega$ is the target value of $v$ of transition $t$, $\lb{p_R(T\cup\{t\})}_{v,\omega}=y_v$.

In this case, the change in $\omega$ can be attributed to the transition $t$ itself.
Since the value for $\omega$ changed, we have $y_v=\lb{p_R(T\cup\{t\})}_{v,\omega}>\lb{p_R(T)}_{v,\omega}$.
By definition of $\restrict_t$, $\lb{\restrict_t(\pabs_R(T))}_{v,\omega}=y_v$ and thus $\lb{\pabs_R(T\cup\{t\})}_{v,\omega}\geq y_v$.

\item $\omega\succeq_v\omega_v(x)$ and $\lb{p_R(T\cup\{t\})}_{v,\omega}=y_v$.

In this case the change in $\omega$ can be attributed to a combination of monotonicity constraints.
Again, since the value of $\omega$ changed, $y_v=\lb{p_R(T\cup\{t\})}_{v,\omega}>\lb{p_R(T)}_{v,\omega}=\pabs_R(T)$.
We already know from the previous point that $\lb{\pabs_R(T\cup\{t\})}_{v,\omega_v(x)}\geq y_v$.
Thus, by lemma~\ref{lem:mono-restrict} the monotonicity restrictions on $\restrict_t(\pabs_R(T))$
enforce $\lb{\pabs_R(T\cup\{t\})}_{v,\omega}\geq\lb{\pabs_R(T\cup\{t\})}_{v,\omega_v(x)}\geq y_v$.

\item For any other $\omega$ or $\lb{p_R(T\cup\{t\})}_{v,\omega}>y_v$.

In this case, the change in $\omega$ can be attributed to an observability constraint.
Let $(\Plb,\Pub)$ be the result of the restrictions discussed in the two previous points,
formally $(\Plb,\Pub)=\restrict_{\{(u,v,s)\in R\mid s\in\{-1,1\}\}}(\restrict_t(\pabs_R(T)))$.

We now show that the change in the lower bound for $\omega$ in the concrete parametrisation set
is reflected in the abstract parametrisation set by an observability restriction,
and furthermore, that no other restrictions are necessary after the observability restriction fires,
i.e. $\omega$ is the single regulator state of node $v$ with strictly higher value in $\lb{p_R(T\cup\{t\})}$ compared to $\Plb$.

Since $\lb{p_R(T\cup\{t\})}_{v,\omega}>\Plb$ we know that no parametrisation $P\in[p_R(T)]$
such that $P_{v,\omega(x)}=\Plb_{v,\omega_v(x)}=y_v$ and $P_{v,\omega}=\Plb_{v,\omega}$ belongs to $p_R(T)$.
One can observe that under the assumption $\omega_v(x)\neq\omega$ the above is easily applicable to theorem~\ref{lem:density}.
In fact, even in the case $\omega_v(x)=\omega$, by using arbitrary $\hat\omega\in\Omega_v\setminus\{\omega_v(x)\}$,
there is no parametrisation in $P\in[p_R(T)]$ with $P_{v,\hat\omega}=\Plb_{v,\hat\omega}$
and $P_{v,\omega(x)}=\Plb_{v,\omega(x)}$ in $p_R(T)$.

As the rest of the proof is independent of whether $\omega$ equals $\omega_v(x)$
we unify the notation for application of theorem~\ref{lem:density}:

\begin{center}
\begin{tabular}{ll}
$\omega_v(x)\neq\omega$ & $\omega_v(x)=\omega$\\\hline
$\overline\omega=\omega_v(x)$ & $\overline\omega = \hat\omega$\\
$\overline\omega'=\omega$ & $\overline\omega'= \omega_v(x)=\omega$
\end{tabular}
\end{center}

Here $\overline\omega$ and $\overline\omega'$ represent the $\omega$ and $\omega'$, respectively,
as used in definition of theorem~\ref{lem:density}.
The values denoted as $y_v$ and $z_v$ in definition of theorem~\ref{lem:density}
thus become $\Plb_{v,\overline\omega}$ and $\Plb_{v,\overline\omega'}$ respectively.
This assignment is valid as the requirements
$\overline\omega\preceq_v\overline\omega'\Rightarrow\Plb_{v,\overline\omega}\leq\Plb_{v,\overline\omega'}$,
respectively $\overline\omega\succeq_v\overline\omega'\Rightarrow\Plb_{v,\overline\omega}\geq\Plb_{v,\overline\omega'}$,
are satisfied by lemma~\ref{lem:mono-restrict}.

We thus use theorem~\ref{lem:density}.
As the second part of the implication does not hold in our case, the first part cannot hold either
and therefore no additional regulator state $\overline\omega''\in\Omega_v$
such that $\lb{p_R(T)}_{v,\overline\omega''}<\ub{p_R(T)}_{v,\overline\omega''}$ exists.
Formally, for any $\overline\omega''\in\Omega_v$:
\begin{align*}
\overline\omega''\prec_v\overline\omega&\Rightarrow\Plb_{v,\overline\omega''}=\Plb_{v,\overline\omega}\\
\overline\omega''\succ_v\overline\omega&\Rightarrow\Pub_{v,\overline\omega''}=\Plb_{v,\overline\omega}\\
\overline\omega''\prec_v\overline\omega'&\Rightarrow\Plb_{v,\overline\omega''}=\Plb_{v,\overline\omega'}\\
\overline\omega''\succ_v\overline\omega'&\Rightarrow\Pub_{v,\overline\omega''}=\Plb_{v,\overline\omega'}\\
\overline\omega''\parallel_v\overline\omega\wedge\overline\omega''\parallel_v\overline\omega'
&\Rightarrow\Plb_{v,\overline\omega''}=\Pub_{v,\overline\omega''}
\end{align*}

We now show that an observability constraint indeed enforces restriction.
Let us consider parametrisation $\Plb$, we know $\Plb\in p(T\cup\{t\})$ by definition
and thanks to lemma~\ref{lem:mono-order} also $\Plb\in\bigcap_{\{r=(u,v,s)\in R\mid s\in\{-1,1\}\}}\P_r$.
However, $\Plb\notin p_R(T\cup\{t\})$ meaning there must exist an observability constraint $r=(u,v,\o)\in R$ such that $\Plb\notin\P_r$.

We now explore $\restrict_r$ for the observability constraint $r$.
Recall that $\overline\omega'=\omega$ regardless of the equality between $\omega$ and $\omega_v(x)$.
A discussion on the nature of the lower and upper bounds of the value of $\overline\omega'$ follows.

Let us first assume $\Plb_{v,\overline\omega'}=\Pub_{v,\overline\omega'}$.
As such, $\lb{p_R(T\cup\{t\})}_{v,\overline\omega'}>\Pub_{v,\overline\omega'}\geq\ub{p_R(T\cup\{t\})}_{v,\overline\omega'}$
and thus $p_R(T\cup\{t\})=\emptyset$.

We will now show that $\Plb_{v,\overline\omega'}=\Pub_{v,\overline\omega'}$
leads to $A_{u,v}(\Plb,\Pub)=\emptyset$ and in turn $\pabs_R(T\cup\{t\})=\emptylattice$.
Since $\Plb\notin\P_r$ we know that the lower bound for any regulator state of $v$ is independent of the value of $u$, formally,
$\forall\omega'\in\Omega_v,\forall x_u\in\{1,\dots,\mv u\}:\Plb_{v,\subst{\omega'} u {x_u}}=\Plb_{v,\subst{\omega'} u {x_u-1}}$.

Furthermore, applying the results from theorem~\ref{lem:density},
any $\overline\omega''$ $\preceq_v$-incomparable to neither $\overline\omega$ nor $\overline\omega'$
has the upper bound equal to the lower bound.
The same must also hold for any $\overline\omega''\succeq_v\overline\omega$ or $\overline\omega''\succeq_v\overline\omega'$
since the upper bound of any such $\overline\omega''$ is at most $\Plb_{v,\overline\omega}$, respectively $\Plb_{v,\overline\omega'}$.
The lower and upper bounds are also equal for any $\overline\omega''\preceq_v\overline\omega'$
as the lower bound must be equal to $\Plb_{v,\overline\omega'}$
and the upper bound cannot exceed $\Pub_{v,\overline\omega'}$ by lemma~\ref{lem:mono-restrict}.

Finally, we show that $\Pub_{v,\overline\omega}$ is also equal to $\Plb_{v,\overline\omega}$
and thus the same must hold for any $\overline\omega''\preceq_v\overline\omega$.
If there exists $\omega'\in\Omega_v$ such that $\omega'\succeq_v\overline\omega$ then we have $\Pub_{v,\omega'}=\Plb_{v,\overline\omega}$
and by lemma~\ref{lem:mono-restrict} $\Pub_{v,\overline\omega}\leq\Pub_{v,\omega'}$ giving us the coveted equality of lower and upper bounds.

Assuming thus, $\overline\omega$ is $\preceq_v$-maximal,
we prove that $\Plb_{v,\overline\omega}=\Pub_{v,\overline\omega}$ by contradiction.
Let thus $\Plb_{v,\overline\omega}<\Pub_{v,\overline\omega}$.
Then for parametrisation $P=\update L {v,\overline\omega} {} {\Pub_{v,\overline\omega}}\in(\Plb,\Pub)$ it holds that $P\in p(T\cup\{t\})$,
by lemma~\ref{lem:mono-order} $P$ also satisfies all monotonicity constraints
and finally, by lemma~\ref{lem:observ_parity}, $P$ satisfies all observability constraints.
Thus, $P\in p_R(T\cup\{t\})$ which is a contradiction with $p_R(T\cup\{t\})=\emptyset$.

Clearly then for any $\omega'\in\Omega_v$ and any value $x_u\in\{1,\dots,\mv u\}$ we have
$\Plb_{v,\subst{\omega'} u {x_u}}=\Pub_{v,\subst{\omega'} u {x_u}}=\Plb_{v,\subst{\omega'} u {x_u-1}}=\Pub_{v,\subst{\omega'} u {x_u-1}}$
giving us the coveted $A_{u,v}(\Plb,\Pub)=\emptyset$ and $\pabs_R(T\cup\{t\})=\emptylattice$.

Let us therefore assume $\Plb_{v,\overline\omega'}<\Pub_{v,\overline\omega'}$.
This yields $\overline\omega'\in A_{u,v}(\Plb,\Pub)$.
We will now show that $\overline{B}=\{\overline\omega'\}$.

Let us first remark that under the assumption $\Plb_{v,\overline\omega'}<\Pub_{v,\overline\omega'}$
no $\overline\omega''\succeq_v\overline\omega'$ can exist.
Recall that as a result of theorem~\ref{lem:density} we know $\Pub_{v,\overline\omega''}=\Plb_{v,\overline\omega'}$.
This gives us $\Pub_{v,\overline\omega''}<\Pub_{v,\overline\omega'}$ which is a contradiction with theorem~\ref{lem:mono-restrict}.

We next show that $\overline\omega\parallel_v\overline\omega'\Rightarrow\Plb_{v,\overline\omega}=\Pub_{v,\overline\omega}$.
First remark that the conditions for any $\overline\omega''\succeq_v\overline\omega$ given by theorem~\ref{lem:density}
are analogous to those imposed on $\overline\omega''\succeq_v\overline\omega'$.
Thus, no such $\overline\omega''$ exists if $\Plb_{v,\overline\omega}<\Pub_{v,\overline\omega}$.
Furthermore as $\overline\omega\parallel_v\overline\omega$ there are no monotonic restrictions necessary on $\overline\omega'$
if value of $\overline\omega$ is changed.
Thus, parametrisation $\update \Plb {v,\overline\omega} {} {\Pub_{v,\overline\omega}}\in p_R(T\cup\{t\})$
by lemmas~\ref{lem:mono-restrict},\ref{lem:mono-order}~and~\ref{lem:observ_parity}
as long as $\Plb_{v,\overline\omega}<\Pub_{v,\overline\omega}$ holds.
Since we know no parametrisation $P$ with $P_{v,\overline\omega'}=\Plb_{v,\overline\omega'}$ belongs to $p_R(T\cup\{t\})$
it must hold that $\Plb_{v,\overline\omega}=\Pub_{v,\overline\omega}$.

By extension, $\Plb_{v,\overline\omega}=\Pub_{v,\overline\omega}$ gives us the same result,
$\Plb_{v,\overline\omega''}=\Pub_{v,\overline\omega''}$,
for any $\overline\omega''$ $\preceq_v$-smaller or $\preceq_v$-larger than $\overline\omega$.
As such, we can simplify the constraints given by theorem~\ref{lem:density} to only consider $\preceq_v$-relation to $\overline\omega'$
and treat $\overline\omega$ as any other regulator state.

Let now $\omega'\in A_{u,v}\setminus\{\overline\omega'\}$ be arbitrary.
The case $\omega'\preceq_v\overline\omega'$ is simple
as $\Plb_{v,\overline\omega''}=\Plb_{v,\overline\omega'}$ follows directly from the application of theorem~\ref{lem:density}.
We now discuss $\omega'\parallel_v\overline\omega'$.

We know that $\Plb_{v,\overline\omega''}=\Pub_{v,\overline\omega''}$
for any $\overline\omega''$ $\preceq_v$-incomparable to $\overline\omega'$.
Thus, $\omega'\in A_{u,v}(\Plb,\Pub)$ requires that $(u,v)$ is not monotonic
and $\subst{\omega'} u {\overline\omega'_u}\preceq_v\overline\omega'$.
By previous point $\Plb_{v,\subst{\omega'} u {\overline\omega'_u}}=\Plb_{v,\overline\omega'}$
and since $\Plb\notin\P_r$ we know that the lower bounds are equal for any regulator states differing in $u$ only,
giving us the coveted $\Plb_{v,\omega'}=\Plb_{v,\overline\omega'}$.

As such, $\overline{B}=\{\overline\omega'\}=\{\omega\}$
and the lower bound of $\omega$ is increased by one by observability restriction $\restrict_r$.
By lemma~\ref{lem:observ_parity} we know $\update \Plb {v,\omega} + 1$ satisfies all observability constraints
and since $\omega$ is $\preceq_v$-maximal
and $\update \Plb {v,\omega} + 1\in(\Plb,\Pub)$ we have $\update \Plb {v,\omega} + 1\in p_R(T\cup\{t\})$
by lemmas~\ref{lem:mono-restrict} and \ref{lem:mono-order}.
As such there may exist no other $\omega'\in\Omega_v$
such that $\lb{p_R(T\cup\{t\})}_{v,\omega'}>{\update \Plb {v,\omega} + 1}_{v,\omega'}$.
There may be however, an $\omega'\in\Omega_v$ such that $\ub{p_R(T\cup\{t\})}_{v,\omega'}<\Pub$.
Such $\omega'$ can be treated symmetrically to $\omega$ to show $\underline{B}=\{\omega'\}$.
\end{enumerate}

Since the last discussion in the third point is the only case when introducing an increasing transition can decrease the upper bound,
the above discussion indeed proves $\pabs_R{T\cup\{t\}}\subseteq[p_R(T\cup\{t\})]$ for an increasing transition $t$ by showing that
$\lb{\pabs_R{T\cup\{t\}}}_{v,\omega}\geq\lb{p_R(T\cup\{t\})}_{v,\omega}$,
respectively $\ub{\pabs_R{T\cup\{t\}}}_{v,\omega}\leq\ub{p_R(T\cup\{t\})}_{v,\omega}$, for arbitrary $\omega\in\Omega_v$

The proof is completely symmetrical for the case when $t$ is decreasing.
The regulator state $\omega$ being such that $\ub{p_R(T\cup\{t\})}_{v,\omega}<\ub{p_R(T)}_{v,\omega}$,
except for the final discussion in the third point, which is again, the only case when a lower bound can increase when introducing a decreasing transition.

As such, $\pabs_R{T\cup\{t\}}\subseteq[p_R(T\cup\{t\})]$ holds. Combined with soundness,
$[p_R(T\cup\{t\})]\subseteq\pabs_R{T\cup\{t\}}$, we obtain the coveted $\pabs_R(T\cup\{t\})=[p_R{T\cup\{t\}}]$.
\end{proof}

\end{document}